\numberwithin{equation}{section}
\DeclareMathOperator*{\argmin}{arg\,min}
\theoremstyle{plain}
\newtheorem{Th}{Theorem}[section]
\newtheorem{Lemma}[Th]{Lemma}
\newtheorem{Prop}[Th]{Proposition}
\theoremstyle{definition}
\newtheorem{Def}[Th]{Definition}
\newtheorem{?}[Th]{Problem}
\newtheorem{Ex}[Th]{Example}
\newcolumntype{L}[1]{>{\raggedright\let\newline\\arraybackslash\hspace{0pt}}m{#1}}
\newcolumntype{C}[1]{>{\centering\let\newline\\arraybackslash\hspace{0pt}}m{#1}}
\newcolumntype{R}[1]{>{\raggedleft\let\newline\\arraybackslash\hspace{0pt}}m{#1}}
\begin{document}

\begin{titlepage}
\title{The Probabilistic Serial and Random Priority Mechanisms\\
with Minimum Quotas}
\author[]{Marek Bojko\thanks{Faculty of Economics and Fitzwilliam College, University of Cambridge. Email: marek.bojko@outlook.com.
Most of this paper is based on my undergraduate dissertation written at the University of Glasgow and supervised by Herve Moulin, to whom I am indebted for his valuable guidance and insightful discussions. This work also benefited from the comments from Aytek Erdil, Yehuda John Levy, and Nick Scholz. All errors are mine.
}}
%\affil[]{University of Cambridge}
\date{}
\maketitle
\begin{abstract}
\noindent Consider the problem of assigning indivisible objects to agents with strict ordinal preferences over objects, where each agent is interested in consuming at most one object, and objects have integer minimum and maximum quotas. We define an assignment to be feasible if it satisfies all quotas and assume such an assignment always exists. The Probabilistic Serial (PS) and Random Priority (RP) mechanisms are generalised based on the same intuitive idea: Allow agents to consume their most preferred available object until the total mass of agents yet to be allocated is exactly equal to the remaining amount of unfilled lower quotas; in this case, we restrict agents' menus to objects which are yet to fill their minimum quotas. We show the mechanisms satisfy the same criteria as their classical counterparts: PS is ordinally efficient, envy-free and weakly strategy-proof; RP is strategy-proof, weakly envy-free but not ordinally efficient. \\
\vspace{0in}\\
\noindent\textbf{Keywords:} Random Assignment, Probabilistic Serial, Random Priority, Matching with Quotas, Student-Project Allocation\\
\vspace{0in}\\
\noindent\textbf{JEL Codes:} C78, D82\\

\bigskip
\end{abstract}
\setcounter{page}{0}
\thispagestyle{empty}
\end{titlepage}
\pagebreak \newpage

\newpage

\section{Introduction}

This paper considers the problem of assigning indivisible objects to agents with ordinal preferences over the objects, who utilise at most one object, and where no monetary transfers are possible. This class of problems is called \textit{the assignment problem}. In this paper, objects have both maximum and minimum quotas, and we assume such constraints are binding while a feasible solution always exists. Although matching theory has been extensively developed for instances when agents or objects have maximum quotas,\footnote{See e.g. \cite{roth1992two,bogomolnaia2001new,papai2000strategyproof} for two-sided and one-sided matching, respectively.} the literature on matching with minimum quotas is only recent.

We take the Student-Project Allocation problem as our motivating example, in which students are agents with preferences over projects and projects are passive objects. Upper and lower quotas on projects are motivated by load-balancing and oftentimes associated group components.\footnote{For example, undergraduate students of mathematics, computer science and medicine at the University of Glasgow complete their final-year dissertation or semester papers in the form of a research project led by a lecturer. Before the beginning of the relevant semester, supervisors publish a short description of the topic of the project which they are available to supervise. Each project has an associated minimum and maximum number of students supervisors can support. At the end of the preceding academic year, students indicate their preferences by ordering projects which they would like to write their undergraduate dissertation on, based on which they are subsequently allocated a project.} Other motivating examples include cadet assignment to military branches where each branch requests a certain minimum number of cadets \citep{sonmez2013bidding,sonmez2013matching}; assignment of students to schools, labs, and tutorials where load-balancing is considered \citep{fragiadakis2016strategyproof}; assignment of jobs to workers and rooms to housemates when each room requires a minimum number of occupiers (for example, to offset fixed costs associated with purchasing the furniture and heating).

Randomisation is common and preferred in many real-world settings in which fairness is considered. It is often used as a means of breaking ties in many allocation problems. We consider only ordinal lottery mechanisms, in which agents reveal their preferences only over objects instead of over lotteries.\footnote{The terminology of \textit{ordinal lottery mechanisms} was used in \cite{bogomolnaia2001new}. In contrast to ordinal mechanisms, \textit{cardinal lottery mechanisms} elicit Von-Neuman utility functions over lotteries. The Random Priority and Probabilistic Serial mechanisms, as defined below, are ordinal mechanisms. The pseudo-market mechanism of \cite{hylland1979efficient}, which adapts the competitive equilibrium  with equal incomes (CEEI) to the random assignment problem, is probably the most widely-considered cardinal lottery mechanism for the random assignment problem.} This is supported by experimental evidence of limited rationality of agents for whom revealing preferences over lotteries is usually too complex \citep{kagel2016handbook}. Ordinal preferences induce the \textit{first-order stochastic dominance} relation, a partial ordering over deterministic objects, which is used to compare random allocations.

We consider fairness, efficiency, and incentive-compatibility as our design goals. A random allocation mechanism is \textit{strategy-proof} if the random assignment under reporting one's true preferences always stochastically dominates the random assignment under any other action, in other words, it is a weakly dominant strategy to report true preferences; it is \textit{envy-free} if every agent prefers her own random assignment to the random assignment of any other agent. We consider \textit{ordinal efficiency} as our efficiency concept in the context of ordinal preferences. A random assignment is ordinally efficient if it is not stochastically dominated for all agents by any other feasible random assignment \citep{bogomolnaia2001new}.

The Probabilistic Serial (PS) mechanism of \cite{bogomolnaia2001new} and the Random Priority mechanisms have been extensively studied (see e.g. \cite{zhou1990conjecture,abdulkadiroglu1998random,bogomolnaia2001new,budish2013designing} among many others).\footnote{The Random Priority mechanism is often called Random Serial Dictatorship in the literature, see e.g. \cite{abdulkadiroglu1998random}.} The PS mechanism for the classical assignment problem considers each object as divisible, where a fractional assignment means the probability with which an agent receives a particular object.\footnote{The classical assignment problem refers to a problem of assigning $n$ indivisible objects to $n$ agents, where each agent is interested in consuming at most one object. \cite{bogomolnaia2001new} remark that their model and all results for the PS mechanism in their paper hold for an assignment problem with upper quotas and no outside option.} Time runs continuously between 0 and 1, and each agent is allowed to 'eat' from her most preferred available object at the constant unit speed. Once an object is fully consumed, the agents continue eating their next most preferred available object. The random assignment is ordinally efficient, envy-free, and satisfies a weaker notion of incentive-compatibility, namely \textit{weak strategy-proofness}, which says that no agent can obtain a random assignment strictly stochastically dominating the random assignment she would obtain under truthful reporting \citep{bogomolnaia2001new}. The RP mechanism draws an ordering of the agents from the uniform distribution and then lets the first agents select her most preferred object, the second agent her most preferred object from the remaining objects, and so forth. The RP mechanism is strategy-proof, treats equals equally; it is ex-post efficient but may incur unambiguous efficiency loss ex-ante, and so is not ordinally efficient.

We consider a natural generalisation of these two mechanisms to our domain. The Random Priority mechanism under Lower Quotas (RPLQ) draws an ordering of agents randomly uniformly and allows agents to choose sequentially according to this ordering. Throughout the execution, we keep track of the number of copies of objects we still need to allocate to obtain a feasible solution. If the total number of agents yet to be assigned is precisely equal to this number, we restrict their menu to objects with unfilled lower quotas. As a continuous analogue, in the Probabilistic Serial mechanism under Lower Quotas (PSLQ), we restrict the menu of agents to objects with unfilled lower quotas precisely when if the agents continued with the current eating pattern a little longer, we would obtain an unfeasible final allocation. Both mechanisms retain the properties of their classical counterparts.

Weak strategy-proofness of our PSLQ offers an insight into strategic issues in the random assignment problem under additional constraints. \cite{katta2006solution} showed that when agents are allowed to report indifferences in their preferences, there is no mechanism which is ordinally efficient, envy-free, and weakly strategy-proof. In a recent work, \cite{ashlagi2020assignment} consider the assignment of students to schools under distributional constraints, where each school imposes quotas on subsets of students according to their type. They generalise the PS mechanism based on the same underlying principle of appropriately restricting the menu offered to students, and show that there is no ordinally efficient, within-type envy-free and weakly strategy-proof mechanism in their setting. With respect to \cite{ashlagi2020assignment}, we may treat the problem with upper and lower quota on objects as a middle ground between the classical random assignment problem and the random assignment problem with distributional constraints. It is therefore interesting to see that albeit we introduce significant constraints, the positive result of the existence of an ordinally efficient, envy-free, and weakly strategy-proof mechanism is retained. Our proof hinges upon integer quotas, motivated by the real-world settings. We also prove that if quotas are non-negative real numbers,\footnote{We always require that the upper quota of any project is at least as large as the corresponding lower quota.} there is no ordinally efficient, envy-free and weakly strategy-proof random assignment mechanism.

We provide an extension of the model to the random assignment problem with multiple indivisible objects, in which every agent receives up to $q$ objects. The extension of both mechanisms is very straightforward: create $q$ 'clones' of each agent with the same preferences as the original agent. The corresponding generalisation of the PS mechanism fails to be incentive-compatible even in the weak sense because of the impossibility result of \cite{kojima2009random}; however, the rest of the properties are retained.

The rest of the paper is organised as follows. Section~\ref{sec: Related Literature} provides a brief overview of the related literature on matching and assignment under minimum quotas, Section~\ref{sec: Model} introduces the notation and formalises the theoretical framework used in this paper, Sections~\ref{sec: Random Priority under Lower Quotas} and ~\ref{sec: Probabilistic Serial under Lower Quotas} define the RPLQ and PSLQ mechanisms, respectively, and discuss their properties. Finally, we provide an extension of the model to the random assignment problem with multiple indivisible objects in Section~\ref{sec: Random assignment of multiple indivisible objects}, and discuss open problems and conclude in Section~\ref{sec: Extensions and Conclusion}.

\section{Related Literature}\label{sec: Related Literature}

Assignment mechanisms under maximum quotas have been studied extensively. For example, consider the assignment of projects to employees, of students to schools, of rooms to flatmates, of time slots to users of a common resource \citep{shapley1974cores,roth1984evolution, abdulkadirouglu1999house,abdulkadirouglu2003school,abdulkadirouglu2005boston,abdulkadirouglu2005new, bogomolnaia2001new,budish2013designing}. 

The literature on matching with minimum quotas still represents a fairly narrow stream of the rich matching theory, although in the last few years an increasing number of papers have focused on problems with minimum quotas. Minimum quotas have been considered primarily in the school choice problem with upper and lower quotas and distributional constraints (e.g. \cite{biro2010college,kojima2012school,ehlers2014school,hafalir2013effective,kominers2013designing,fragiadakis2016strategyproof}).\footnote{In the school choice problem, students have preferences over schools and schools have priority lists of students.} Since \cite{biro2010college} proved that, in the presence of lower quotas, stable matching might not exist, many of these papers consider soft bounds.\footnote{Flexible limits which are controlled dynamically by schools. For example, schools may opt for a dynamic priority ordering, in which they give high priority to student types who do not fill their floors, student types who fill the lower but not the upper quotas would receive medium priority, and the rest of the students would have the lowest priority. In our setting, projects have no priority orderings, and we only consider hard bounds, i.e. all constraints must be satisfied to define an assignment feasible.} In this paper, we only consider hard constraints. Moreover, the aforementioned papers consider only two-sided matching markets. 

\cite{arulselvan2018matchings} study the maximum-weight many-to-one bipartite matching in the context of the Student-Project Allocation problem, where one side of vertices in the bipartition has upper and minimum quotas, from an algorithmic perspective. They characterise when an instance is NP-hard and when tractable, and define an efficient algorithm for the latter. However, they focus on cardinal utilities and do not consider incentives, fairness, nor efficiency (concerning the welfare of agents). \cite{monte2013matching} consider a related problem, in which assignment is feasible if every object satisfies upper capacity and either it satisfies its lower quota or is assigned to no agent, which is dubbed as \textit{closing projects}. They show the deterministic Priority mechanism, in their paper called Serial Dictatorship, fails to be Pareto efficient and strategy-proof, and describe a mechanism which rectifies it. In this paper, we define an assignment to be feasible if all objects satisfy their quota, and we assume such an assignment always exists. We justify this by noting that in the Student-Project Allocation, it is undesirable to leave students unassigned as it is in most cases compulsory for a student to receive a project - and schools adjust the quotas ex-ante so that a feasible assignment exists. We provide a discussion of a possible extension of our model to the setting with closing projects in Section~\ref{sec: Extensions and Conclusion}.

\cite{budish2013designing} and \cite{fujishige2018random} extend the random assignment problem to domains with bihierarchical distributional and submodular constraints,\footnote{A bihierarchical constraint structure refers to a disjoint union of two laminar families. For more details, see Proporition~\ref{implementable} and the accompanying discussion.} respectively, but consider only upper quotas in their PS mechanisms. \cite{ashlagi2020assignment} extend the Priority and PS mechanisms for the school choice problem where each school imposes upper and lower quotas on subsets of students according to their type. Their generalisation of the PS mechanism is based on the same general idea of restricting the menu of students when constraints become tight. Our contribution over their paper is threefold: (i) our simpler constraint structure allows us to formulate and solve the problem using simple analysis without abstract linear programs and thus allows us to develop core intuition behind the eating algorithm under additional constraints; (ii) we characterize ordinal efficiency in our setting; and (iii) we prove the PSLQ mechanism in our setting is weakly strategy-proof, in contrast to the generalised PS mechanism of \cite{ashlagi2020assignment}.

\section{The model}\label{sec: Model}

We use the language of assigning students to projects. Let $N =[n]=\{1,\dots,n\}$ be the set of students and $P=\{p_1,\dots,p_k\}$ the set of projects.\footnote{With a slight abuse of notation, we may denote by $P$ the set of project seats, unless otherwise stated.} Suppose each project has an upper and a lower quota, $u(p) \in \mathbb{N}$ and $l(p) \in \mathbb{N}_0, u(p)\geq l(p)$, respectively. We assume that $\sum_{p \in P}l(p) \leq n \leq \sum_{p \in P}u(p)$  throughout this paper to ensure that a feasible assignment exists. Let ${l}= [l(p)]_{p \in P}$, ${u}= [u(p)]_{p \in P}$ be vectors of lower and upper quotas, respectively.

Each student $i$ is endowed with strict preferences $\succ_i$ over $P$. We assume that all projects are acceptable to all students. We denote this domain of preferences by $\mathcal{P}$. We denote by $\succ = (\succ_i)_{i \in N}$ a preference profile. For any preference profile $\succ$ and student $i$, we denote by $\succ_{-i}$ the preference profile obtained by disregarding student $i$. The set $\mathcal{P}^n$ denotes the set of all preference profiles for the set of students $N$ over the set of projects $P$.

We denote a market by a tuple $(N,P,{l},{u},\succ)$.

The choice $\phi_i(Q)$ of student $i$ from the subset of projects $Q \subseteq P$ is the best project among $Q$ according to $\succ_i$. That is
$$\phi_i(Q) = p \iff p \in Q \text{ and } p \succ_i p', \forall p' \in Q\setminus \{p\}$$

Given a market $(N,P,{l},{u},\succ)$, an assignment is a correspondence $\mu: N \cup P \to N \cup P$, such that the following holds 
\begin{enumerate}
    \item $\mu(i) \in P, \forall i \in N$
    \item $\mu(p) \in 2^N, \forall p \in P$
    \item $\forall i \in N$ and $p \in P$, we have $\mu(i)=p$ if and only if $i \in \mu(p)$
\end{enumerate}

An assignment $\mu$ is feasible if every project satisfies its quotas, that is if $\forall p \in P, l(p) \leq |\mu(p)| \leq u(p)$. We denote by $\mathcal{D}$ the set of feasible deterministic assignments.

We can describe a feasible deterministic assignment with an $n \times k$ zero-one row-stochastic matrix $X = [x_{ip}]_{i \in N, p \in P}$ which we will call an assignment matrix. We identify rows with students and columns with projects. We have for each entry $x_{ip}$ of $X$ that $x_{ip}=1$ if and only if $i$ is assigned a place at project $p$, and we have
$$l(p) \leq \sum_{i \in N} x_{ip} \leq u(p), \forall p \in P$$
and
$$\sum_{p \in P} x_{ip} = 1, \forall i \in N$$
We denote by $X_i$ the i-th row of $X$, and let $X(t)$ denote an allocation at step/time $t$ of an iterative procedure (and we will use this notation for any matrix in this paper).

A random allocation is a probability distribution over $P$. We will denote the set of random allocations by $\mathcal{L}(P)$. A random assignment is a probability distribution over feasible deterministic assignments; we will denote this set by $\mathcal{L(D)}$. The corresponding convex combination of assignment matrices represents the probability with which a given students is assigned a given project:
$$R = [r_{ip}]_{i \in N, p \in P} \text{, where } R = \sum_{X \in \mathcal{D}} \lambda_X X \text{, such that } \lambda_X \geq 0, \forall X \in \mathcal{D} \text{ and } \sum_{X \in \mathcal{D}}\lambda_X=1$$
The random assignment matrix $R$ is an $n \times k$ row-stochastic matrix; its $i$-th row is the random allocation of student $i$.

\cite{budish2013designing} extend the well-known Birkhoff-Von Neumann theorem for bistochatic matrices, which says that every bistochastic matrix can be written as a convex combination of permutation matrices,\footnote{Zero-one $n \times n$ matrices, in which every row and column sums to one.} to our more general setting of row-stochastic matrices. The following is a direct consequence of their result and says that any random assignment in this paper is implementable as a lottery over deterministic assignments.\footnote{A constraint structure $\mathcal{H}$ is a hierarchy (also known as a laminar family) if $\forall S,T \in \mathcal{H}$ we have that exactly one of the following holds: $S \subset T$ or $T \subset S$ or $S \cap T = \emptyset$. We say a constraint structure $\mathcal{H}$ is a bihierarchy if it can be written as a union of two disjoint hierarchies. \cite{budish2013designing} prove that the condition of having a bihierarchy as a constraint structure is a sufficient condition for implementability of a row sub-stochastic matrix. It is easy to see the constraint structure in our problem is a bihierarchy.}

\begin{Prop}\label{implementable}
Any feasible random assignment matrix can be written as a convex combination of feasible deterministic assignment matrices.
\end{Prop}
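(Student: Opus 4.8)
The plan is to invoke the generalised Birkhoff--Von Neumann theorem of \cite{budish2013designing}, which states that if a constraint structure forms a bihierarchy, then any matrix satisfying the constraints (with integer right-hand sides) can be decomposed as a convex combination of integral matrices satisfying the same constraints. So the entire proof reduces to \textbf{verifying that the constraint structure arising in our problem is a bihierarchy} and that the integer bounds line up correctly; once this is done the conclusion follows immediately from their result, as the footnote already anticipates.

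First I would write out explicitly the linear constraints that define a feasible random assignment matrix $R=[r_{ip}]$. There are two families. The \emph{row constraints} say that for each student $i \in N$, the entries in row $i$ sum to $1$, i.e. $\sum_{p \in P} r_{ip} = 1$; equivalently each singleton set of cells $\{(i,p): p \in P\}$ (one set per student) is constrained. The \emph{column constraints} say that for each project $p \in P$, the column sum $\sum_{i \in N} r_{ip}$ lies between $l(p)$ and $u(p)$. To fit the Budish et al. framework I would regard each constraint as acting on a subset of the cells of the $n \times k$ grid: the row-$i$ constraint acts on the set $H_i^{\mathrm{row}} = \{(i,p): p \in P\}$, and the column-$p$ constraint acts on $H_p^{\mathrm{col}} = \{(i,p): i \in N\}$.

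The key step is then to exhibit the two hierarchies. I would let $\mathcal{H}_1 = \{H_i^{\mathrm{row}}: i \in N\}$ and $\mathcal{H}_2 = \{H_p^{\mathrm{col}}: p \in P\}$. Each of these is trivially a hierarchy (laminar family): within $\mathcal{H}_1$ any two distinct row-sets are disjoint since they involve different students, and likewise the column-sets in $\mathcal{H}_2$ are pairwise disjoint; so within each family every pair of sets has empty intersection, which is one of the permitted cases in the definition of a hierarchy. Moreover $\mathcal{H}_1$ and $\mathcal{H}_2$ are disjoint as collections (a row-set equals a column-set only in the degenerate $1\times 1$ case, which I can exclude or absorb). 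Hence $\mathcal{H} = \mathcal{H}_1 \cup \mathcal{H}_2$ is a bihierarchy. Since all the right-hand sides are integers (the row sums equal $1$, and the column bounds $l(p), u(p)$ are integers by assumption), the hypotheses of the generalised Birkhoff--Von Neumann theorem are met, and I conclude that $R$ is a convex combination of integral matrices each satisfying the same row-sum and column-bound constraints --- that is, of feasible deterministic assignment matrices, as required.

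The main obstacle, such as it is, is not mathematical depth but \textbf{correctly matching our constraint formulation to the exact hypotheses of the cited theorem}: I must check that the integrality of $l(p)$ and $u(p)$ (guaranteed in the model since $l(p) \in \mathbb{N}_0$, $u(p) \in \mathbb{N}$) is what makes the decomposition \emph{integral}, so that the extreme points of the feasible polytope are genuine $0$--$1$ assignment matrices and not merely fractional vertices; and that the row-sum-equals-one constraints together with the nonnegativity built into $R$ force each deterministic piece to assign exactly one project per student. One subtlety to address is that Budish et al. prove implementability for row sub-stochastic (rather than exactly stochastic) matrices; I would note that our exact row-sum-$1$ requirement is simply the special case where the relevant inequality binds, or equivalently is captured by treating the row constraint as a tight bihierarchy constraint, so no generality is lost. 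Beyond this bookkeeping the result is essentially immediate.
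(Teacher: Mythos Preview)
Your proposal is correct and matches the paper's approach exactly: the paper does not give a self-contained proof but simply asserts that the result is a direct consequence of the generalised Birkhoff--Von Neumann theorem of \cite{budish2013designing}, noting in a footnote that the constraint structure (row sums and column bounds) is easily seen to be a bihierarchy. You have simply spelled out the verification of this fact and the integrality bookkeeping that the paper leaves implicit.
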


Any two probability distribution in $\mathcal{L(D)}$ resulting in the same row-stochastic matrix $R$ will not be distinguished, as they provide the same utility level to every student. Hence, we identify a random assignment with its row-stochastic matrix $R$, its random assignment matrix. We denote by $\mathcal{R}$ the set of feasible random assignments or equivalently random assignment matrices. 

Finally, a deterministic assignment mechanism is a mapping $\Psi: \mathcal{P}^n \to \mathcal{D}$, that is a function that takes any preference profile of students and outputs a feasible deterministic assignment of students to projects. Similarly, a random assignment mechanism is a mapping $\Phi: \mathcal{P}^n \to \mathcal{R}$ that outputs a feasible random assignment. We denote by $\Psi(\succ)$ ($\Phi(\succ)$) the resulting (random) assignment for a preference profile $\succ \in \mathcal{P}^n$. 

\subsection{Efficiency, incentives and fairness}

Preference ordering $\succ_i$ on $P$ induces a partial ordering of the set $\mathcal{L}(P)$ of random allocations that we call the stochastic dominance relation associated with $\succ_i$ and denote by $sd(\succ_i)$. Enumerate $P$ from the most to the least preferred according to $\succ_i$ as $p_1 \succ_i p_2 \succ_i \dots \succ_i p_k$.
We say a vector $R_i \in \mathbb{R}^k$ stochastically dominates $Y_i \in \mathbb{R}^k$ with respect to $\succ_i$ if $\forall t=1,2,\dots,k$, we have
$$\sum_{j=1}^{t} r_{ip_j} \geq \sum_{j=1}^{t} y_{ip_j}$$
Given a preference profile $\succ$, we say that $R \in \mathcal{R}$ stochastically dominates $Y \in \mathcal{R}$ if $R_i \ sd(\succ_i) \ Y_i, \forall i \in N$ and $R \neq Y$.
Random assignment matrix $R$ is ordinarilly efficient if it is not stochastically dominated by any other random assignment matrix.

We turn to fairness. Random assignment $R \in \mathcal{R}$ is envy-free at profile $\succ \in \mathcal{P}^n$ if every student prefers her own allocation to the allocation of any other student, that is $\forall i,j \in N$, we have $R_i \ sd(\succ_i) \ R_j$. It is weakly envy-free if $\forall i,j \in N$, $R_j \ sd(\succ_i) \ R_i \Rightarrow R_i=R_j$. Both of the properties of random assignments introduced above extend to mechanisms.

Random assignment mechanism $\Phi$ is strategy-proof if for any student, reporting her true preferences weakly dominates any other action, that is if for any student $i \in N$, any preference profile $\succ \in \mathcal{P}^n$, and preference ordering $\succ_i' \in \mathcal{P}$, we have $\Phi_i(\succ) \ sd(\succ_i) \ \Phi_i(\succ_i',\succ_{-i})$. It is weakly strategy-proof if, upon misreporting preferences, no student can obtain an allocation strictly dominating the allocation under truthful reporting. Formally, $\Phi$ is weakly strategy-proof if for any student $i \in N$, any preference profile $\succ \in \mathcal{P}^n$, and preference ordering $\succ_i' \in \mathcal{P}$, we have $\Phi_i(\succ_i',\succ_{-i}) \ sd(\succ_i) \ \Phi_i(\succ) \Rightarrow \Phi_i(\succ_i',\succ_{-i})=\Phi_i(\succ)$.

\section{The Random Priority mechanism under Lower Quotas}\label{sec: Random Priority under Lower Quotas}

The following mechanism is a natural extension of the Priority mechanism. A similar version already appeared in \cite{fragiadakis2016strategyproof}. We describe it here for completeness, as it allows us to define the RPLQ mechanism and draw parallels between the extensions of the RP and PS mechanisms. To the best of our knowledge, we are first who consider a uniform lottery over the Priority mechanisms in this setting.

\subsection{The Priority mechanism under Lower Quotas}

Denote by $S_n$ the group of permutations on $n$ letters, which consists of the set of all orderings of students. Fix a market $(N,P,{l},{u},\succ)$ and a permutation $\sigma \in S_n$. The Priority mechanism under Lower Quotas (PrioLQ), which we will denote by $\mu^{\sigma}=PrioLQ(\succ)^{\sigma}$ then proceeds as follows:

\bigskip

\textbf{Step 1:} Student $\sigma(1)$ is assigned a seat at her most preferred project according to $\succ_{\sigma(1)}$, i.e. 
$$\mu^{\sigma}(\sigma(1))=\phi_{\sigma(1)}(P)$$

\hspace{5mm} \vdots\\

\textbf{Step k:} Denote by $P_l^k \subseteq P$ the set of projects which have not satisfied their lower quotas at the beginning of step $k$. If
\begin{equation} \label{cond}
    \sum_{p \in P_l^k} \Bigg( l(p)-\sum_{i \in N}x_{ip}(k) \Bigg) < n-k+1
\end{equation}
student $\sigma(k)$ is assigned a place at her most preferred available project according to her preference ordering $\succ_{\sigma(k)}$. Otherwise, student $\sigma(k)$ is assigned a place at her most preferred project in $P_l^k$. Formally,\footnote{With a slight abuse of notation, we denote by $P\setminus \bigcup_{j=1}^{k-1}\mu^{\sigma}(\sigma(j))$ the set of remaining projects after students $\sigma(1)$ through $\sigma(k-1)$ made their choices.}
$$\mu^{\sigma}(\sigma(k))=
\begin{cases}
    \phi_{\sigma(k)}\left(P\setminus \bigcup_{j=1}^{k-1}\mu^{\sigma}(\sigma(j))\right )   & \text{, condition } \ref{cond} \text{ holds} \\
    \phi_{\sigma(k)}\left(P_l^k\right) &\text{, otherwise}
\end{cases}$$

\bigskip

Since the sets $N$ and $P$ are finite, it is clear the algorithm terminates in finite time. It is also straightforward to check that the resulting assignment is feasible. Since $|S_n|=n!$, we have $n!$ priority mechanisms, each of which is induced by a different permutation of the set of students. Note the ordering selecting a particular allocation need not be unique.

It is apparent that any Priority mechanism does not treat students symmetrically: Student $\sigma(1)$ always gets her most preferred project while $\sigma(n)$ gets whatever is left. One way to restore fairness is to randomise over deterministic assignments, to which we turn in the next section. It could also be resolved using a Master List (ML), an exogenous ordering of the students independent of their preferences. Such ordering might be created, for example, based on cumulative GPAs of students, their attendance record or extracurricular activities. We introduce the following notion of fairness in the deterministic case.\footnote{This notion of fairness is equivalent to justified envy-freeness in two-sided matching markets where one side of the market has priority orderings over the other, e.g. in school choice.} The definition of efficiency reflects the constraints imposed on projects.

\begin{Def}
Assignment $\mu \in \mathcal{D}$ is 
\begin{enumerate}
    \item ML-fair if whenever $\mu(j) \succ_i \mu(i)$ for some $i,j \in N$, $j$ precedes $i$ on the master list.
    \item Minimum-quota-constrained efficient if it is not Pareto-dominated by any other $\mu' \in \mathcal{D}$, where we say $\mu \in \mathcal{D}$ \textit{Pareto dominates} $\mu' \in \mathcal{D}$ at a profile $\succ \in \mathcal{P}^n$ if $\exists i \in N$ such that $\mu(i) \succ_i \mu'(i)$ and $\forall j \in N, \mu(j) \succeq_j \mu'(j)$, where $\mu(j) \succeq_j \mu'(j)$ if $\mu(j) \succ_j \mu'(j)$ or $\mu(j) = \mu'(j)$.
\end{enumerate}

Both of these properties extend to mechanisms. Mechanism $\Psi$ is strongly group strategy-proof if no coalition of students can make all members at least as well off and at least one member strictly better off by jointly misreporting, compared to when all members report truthfully. Formally, $\Psi$ is strongly group strategy-proof if there is no $\succ \in \mathcal{P}^n, S \subseteq N$, and $\succ_S' \in \mathcal{P}^{|S|}$ such that $\Psi_i(\succ_S',\succ_{-S}) \succeq_i \Psi_i(\succ), \forall i \in S$ and $\exists j \in S \text{ such that } \Psi_j(\succ_S',\succ_{-S}) \succ_j \Psi_j(\succ)$.
\end{Def}

\begin{Prop}\label{priolq ML fairness}
For any market $(N,P,{l},{u},\succ)$, PrioLQ is:
\begin{enumerate}
    \item Minimum-quota-constrained efficient
    \item Strongly group strategy-proof
    \item ML-fair
\end{enumerate}
\end{Prop}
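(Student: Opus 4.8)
The plan is to establish each of the three properties of PrioLQ separately, exploiting the deterministic, sequential nature of the mechanism and the key feasibility-guaranteeing condition~\ref{cond}.

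\textbf{ML-fairness.} First I would prove ML-fairness under the natural assumption that the permutation $\sigma$ is taken to be the master-list ordering itself. Suppose $\mu^{\sigma}(j)\succ_i \mu^{\sigma}(i)$ for some $i,j\in N$; I want to show $j$ precedes $i$ on the master list, i.e. $\sigma^{-1}(j)<\sigma^{-1}(i)$. Arguing by contradiction, suppose $i$ is served before $j$. At the step when $i$ chooses, the project $\mu^{\sigma}(j)$ is still unassigned to $j$ and in fact $j$ has not yet been processed, so the seat $j$ eventually takes was still available to $i$. The only way $i$ could fail to select a project she strictly prefers is if the menu was restricted: either condition~\ref{cond} failed and $i$ was forced to choose from $P_l^{k}$, or the object was genuinely exhausted. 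I would show that neither can block $i$ from a strictly preferred object that $j$ later obtains, because if condition~\ref{cond} fails then $i$ still picks her favourite \emph{among} the lower-quota-deficient projects, and $j$'s later assignment is subject to the same or tighter restriction. The clean way to organise this is to compare the menus faced by $i$ and $j$ and show the menu at $i$'s step always weakly contains the project $j$ ultimately receives.

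\textbf{Minimum-quota-constrained efficiency.} Here the plan is a standard serial-dictatorship-style exchange argument adapted to the feasibility constraint. Suppose for contradiction that $\mu'\in\mathcal{D}$ Pareto-dominates $\mu^{\sigma}$. Let $i$ be the earliest student in the $\sigma$-ordering for whom $\mu'(i)\neq\mu^{\sigma}(i)$; by the definition of Pareto dominance $\mu'(i)\succeq_i\mu^{\sigma}(i)$, and since assignments differ this must be strict, so $\mu'(i)\succ_i\mu^{\sigma}(i)$. I would then argue that at $i$'s step in PrioLQ the project $\mu'(i)$ was \emph{not} in $i$'s effective menu, for otherwise $i$ would have chosen it. This can happen only because $\mu'(i)$ was already at its upper quota among the higher-priority students (who have identical assignments under $\mu'$ and $\mu^{\sigma}$, by minimality of $i$), or because the lower-quota restriction excluded it. In either case I derive a contradiction: in the first case $\mu'$ cannot give $i$ the extra seat without violating feasibility or moving a higher-priority student, and in the second case respecting the lower-quota restriction while granting $i$ a non-deficient project forces some deficient project below its floor, again contradicting $\mu'\in\mathcal{D}$.

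\textbf{Strong group strategy-proofness.} For incentives I would again induct on the priority order. Fix a coalition $S$ and a joint misreport $\succ_S'$. Let $i$ be the highest-priority member of $S$. The key observation is that $i$'s effective menu at her step depends only on the choices of higher-priority students, none of whom belong to $S\cap\{\text{served before }i\}$ by the choice of $i$; hence truthful reporting gives $i$ her favourite available project, which weakly dominates whatever she gets under the misreport, so she is \emph{not} strictly better off. If $i$'s assignment is unchanged, I peel her off and pass to the next member of $S$ with the same argument applied to the residual problem. This shows no member of $S$ can be made strictly better off while all are made weakly better off, which is precisely strong group strategy-proofness.

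\textbf{Main obstacle.} The routine parts are the exchange and peeling arguments; the genuine difficulty is handling the menu-restriction condition~\ref{cond} uniformly across all three proofs. The subtlety is that the restriction to $P_l^{k}$ is \emph{endogenous} — it depends on how many lower quotas remain unfilled, which in turn depends on earlier choices — so I must verify that whenever a student is forced onto a deficient project, any competing assignment either faces the same binding constraint or violates feasibility. The cleanest route is to prove a single structural lemma up front: at every step $k$, the number of remaining students equals or exceeds the total remaining lower-quota deficit, with equality exactly triggering the restriction, and \emph{any} feasible completion must from that point on place every remaining student on a deficient project. With that invariant in hand, each of the three arguments reduces to the classical unconstrained case applied to the appropriately restricted menus.
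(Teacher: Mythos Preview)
Your proposal is correct. The paper itself does not actually prove this proposition: it defers parts (2) and (3) to \cite{fragiadakis2016strategyproof} and omits (1) as ``a similar argument as for the classical Priority mechanism.'' Your direct argument is precisely the standard serial-dictatorship reasoning the paper alludes to, and the structural lemma you isolate --- that once condition~\ref{cond} fails it remains at equality thereafter, so every subsequent student's menu is a subset of the current $P_l^{k}$ --- is exactly the right organizing device. With that monotonicity in hand, all three claims reduce to the unconstrained case: the menu at any step weakly contains whatever a later student can obtain, which immediately gives ML-fairness; the earliest-disagreement argument for efficiency goes through because the alternative feasible assignment $\mu'$ agrees on higher-priority students and hence faces the identical residual deficit; and the peeling induction for strong group strategy-proofness works because each coalition member's menu is determined solely by truthful higher-priority choices. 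There is nothing to compare against in the paper's own treatment beyond noting that your write-up makes explicit what the paper leaves implicit.
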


We refer the reader to \cite{fragiadakis2016strategyproof} for proofs of (2.) and (3.). The proof of (1.) follows a similar argument as for the classical Priority mechanism and is omitted.

\subsection{The lottery mechanism}

The mechanism proceeds as follows: Fix a preference profile $\succ \in \mathcal{P}^n$, draw at random a permutation of students $\sigma$ from the uniform distribution over $S_n$ and then run $PrioLQ(\succ)^{\sigma}$.

The Random Priority mechanism under Lower Quotas (RPLQ) for the preference profile $\succ$, which we will denote by $RPLQ(\succ)$, is then defined as

\begin{equation*}
RPLQ(\succ)=\frac{1}{n!}\sum_{\sigma \in S_n}PrioLQ(\succ)^{\sigma}    
\end{equation*}

Since RPLQ is a convex combination of feasible Pareto optimal deterministic assignments, it follows that the resulting random assignment is also feasible and ex-post efficient.

\begin{comment}
\begin{Lemma}\label{PrioLQ feasible}
Fix $(N,P,{l},{u},\succ)$. $R=RPLQ(\succ)$ is feasible, that is $\forall p \in P, l(p) \leq \sum_{i \in N}r_{ip} \leq u(p)$.
\end{Lemma}
\begin{proof}
Appendix~\ref{proof PrioLQ feasible}.
\end{proof}
\end{comment}

\begin{Ex}\label{ex PRLQ envy}
Lower quotas generate envy. Suppose $N=[4]$ and $P=\{a,b,c\}$ with no quotas. Let students have preferences $\succ$ as below; call this market $\Gamma$. Then $RPLQ(\Gamma)$ outputs matrix $R$. This random assignment is envy-free, as every student is assigned her top choice. Now suppose $l(b)=2$ and $l(c)=1$ while $l(a)=0$; let us call this market $\zeta$. The resulting random assignment $RPLQ(\zeta)$ is $R'$.

\[
\succ \ = 
    \begin{tabular}{ccc}
    \hline
    1 & 2 & 3,4 \\
    \hline
    a & a & b \\
    b & c & a \\
    c & b & c \\
    \hline
    \end{tabular}
  \qquad
  R=
  \begin{blockarray}{*{3}{c} l}
    \begin{block}{*{3}{>{\footnotesize}c<{}} l}
      a & b & c \\
    \end{block}
    \begin{block}{[*{3}{c}]>{\footnotesize}l<{}}
      1 & 0 & 0 & 1,2 \\
      0 & 1 & 0 & 3,4 \\
    \end{block}
  \end{blockarray}
  \qquad
  R'=
  \begin{blockarray}{*{3}{c} l}
    \begin{block}{*{3}{>{\footnotesize}c<{}} l}
      a & b & c \\
    \end{block}
    \begin{block}{[*{3}{c}]>{\footnotesize}l<{}}
      1/2 & 1/4 & 1/4 & 1 \\
      1/2 & 0 & 1/2 & 2\\
      0 & 7/8 & 1/8 & 3,4\\
    \end{block}
  \end{blockarray}
\]

Observe that $R_1'$ does not stochastically dominate $R_3'$ with respect to $\succ_1$, so this random assignment is not envy-free.
\end{Ex}

RPLQ satisfies a weaker notion of envy-freeness. Strategy-proofness is a natural extension of strategy-proofness of PrioLQ. These results are consistent with the Random Priority mechanism \citep{bogomolnaia2001new}.

\begin{Prop} \label{RPLQ}
For any market $(N,P,{l},{u},\succ)$, the RPLQ mechanism is 
\begin{enumerate}
    \item Strategy-proof
    \item Weakly envy-free
\end{enumerate} 
\end{Prop}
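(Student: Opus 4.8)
The plan is to reduce the strategy-proofness of RPLQ to the strong group strategy-proofness of PrioLQ (Proposition~\ref{priolq ML fairness}), exploiting the fact that RPLQ is the uniform average over the $n!$ deterministic mechanisms $PrioLQ(\cdot)^{\sigma}$. First I would fix a student $i$, a profile $\succ$, and a deviation $\succ_i'$, and argue permutation-by-permutation: for each fixed $\sigma \in S_n$, truthful reporting weakly dominates misreporting under the deterministic mechanism $PrioLQ(\cdot)^{\sigma}$, in the sense that $PrioLQ(\succ)^{\sigma}(i) \succeq_i PrioLQ(\succ_i',\succ_{-i})^{\sigma}(i)$. This pointwise (ordinal, deterministic) dominance is exactly what single-agent strong group strategy-proofness of PrioLQ gives us. The key observation making the pointwise comparison valid is that the menu-restriction in condition~(\ref{cond}) depends only on the counts of filled lower quotas and the number of unassigned students at each step, quantities that are determined by the positions in $\sigma$ and by the realized choices, so the structure of the greedy selection is preserved when we look at a single $\sigma$. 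Summing the resulting deterministic dominance over all $\sigma$ with weights $1/n!$, and using that first-order stochastic dominance is preserved under taking convex combinations of the underlying lotteries, yields $RPLQ_i(\succ) \ sd(\succ_i) \ RPLQ_i(\succ_i',\succ_{-i})$, which is the desired conclusion. The main subtlety here is justifying that deterministic pointwise dominance under every $\sigma$ implies stochastic dominance of the averaged random allocations; this follows because the cumulative-sum inequalities defining $sd(\succ_i)$ are linear, so averaging preserves them.

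\medskip
\noindent\textbf{Weak envy-freeness (Part 2).} For this part I would argue via a symmetry/coupling argument on the uniform lottery over $S_n$. Fix two students $i,j$ and consider the involution on $S_n$ that swaps the positions of $i$ and $j$ in the ordering (i.e. the map $\sigma \mapsto \sigma \circ \tau_{ij}$, where $\tau_{ij}$ transposes the two slots occupied by $i$ and $j$). This is a measure-preserving bijection of the uniform distribution on $S_n$. The idea is to show that, under this pairing, the allocation student $j$ receives from $\sigma$ is a choice that student $i$ could have obtained under the swapped permutation $\sigma'$ when $i$ is given $j$'s priority slot and faces the same (or a weakly larger) menu. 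Concretely, I would establish that for each $\sigma$, when student $i$ occupies the slot that $j$ occupied, the menu available to $i$ weakly contains the object $j$ took, so $i$'s greedy choice is at least as good as $j$'s under $\succ_i$. Averaging the induced deterministic comparisons over the paired permutations then shows that $R_i$ cannot be strictly stochastically dominated by $R_j$ under $\succ_i$; that is, $R_j \ sd(\succ_i) \ R_i$ forces equality $R_i = R_j$, which is precisely weak envy-freeness.

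\medskip
\noindent The hard part will be handling the menu-restriction step in the weak-envy-freeness argument, because when the binding condition~(\ref{cond}) forces students into the unfilled-lower-quota set $P_l^k$, the menu offered to a student is \emph{not} simply ``everything still available,'' and swapping $i$ and $j$ in the ordering can change which step triggers the restriction and hence alter the constrained menu $P_l^k$. I expect that the clean ``$i$ facing a weakly larger menu than $j$ did'' comparison can fail at the level of a single permutation when the swap shifts the tightness of the lower-quota constraint. The resolution I would pursue is to compare the two executions of PrioLQ under $\sigma$ and $\sigma' = \sigma \circ \tau_{ij}$ directly, tracking that the set of objects allocated to the students other than $i,j$ is identical in both runs up to the relevant step, so that whatever menu $j$ faced under $\sigma$ is available to $i$ under $\sigma'$ (whether that menu is the full remaining set or the restricted set $P_l^k$). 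If a fully uniform single-permutation comparison is genuinely obstructed by the constraint, the fallback is to aggregate over the whole orbit of the swap and argue the stochastic-dominance inequalities in expectation, which only requires the averaged cumulative masses to compare correctly rather than every realization.
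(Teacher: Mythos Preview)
Your Part~1 is correct and matches the paper's one-line argument: each $PrioLQ(\cdot)^{\sigma}$ is strategy-proof, and the convex combination with constant weights preserves this.

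For Part~2, the coupling via the transposition $\sigma \mapsto \sigma' = \sigma \circ \tau_{ij}$ is exactly the right move and is what the paper does. However, your proposed execution contains a genuine gap. The claim that ``the set of objects allocated to the students other than $i,j$ is identical in both runs up to the relevant step'' is false: once $i$ and $j$ swap positions, the earlier of the two may pick a different object (since $\succ_i \neq \succ_j$ in general), and this propagates to every student sitting between them in the ordering. For instance, with $\sigma=(1,3,2)$, $\sigma'=(2,3,1)$, and student~3 ranking $a$ above $b$, student~3 may receive $a$ in one run and $b$ in the other depending on what the first mover took. So the single-permutation menu comparison you want does not hold, and your fallback (``aggregate over the orbit and argue in expectation'') is too vague to close the gap on its own.

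What the paper does instead is an \emph{induction along $\succ_i$} that crucially uses the hypothesis $R_j \ sd(\succ_i) \ R_i$ at every step, not just at the end. For the pair $Q=\tfrac{1}{2}(PrioLQ^{\sigma}+PrioLQ^{\sigma'})$ the paper shows by a short case analysis that $q_{ip_1}\geq q_{jp_1}$; averaging over all pairs gives $r_{ip_1}\geq r_{jp_1}$, and the assumed dominance then forces $r_{ip_1}=r_{jp_1}$. This equality pins down, pair by pair, a structural constraint (the conditions labelled $(\star)$ in the paper) on how $p_1$ is allocated across $\sigma$ and $\sigma'$, and that constraint is precisely what lets the case analysis for $p_2$ go through. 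The argument then iterates: at stage~$l$ one uses the equalities $r_{ip_1}=r_{jp_1},\dots,r_{ip_{l-1}}=r_{jp_{l-1}}$ and the corresponding $(\star)$-type conditions to show $q_{ip_l}\geq q_{jp_l}$, and the hypothesis again forces equality. Your sketch misses this feedback loop between the stochastic-dominance hypothesis and the pairwise comparison; without it, there is no way to control what happens to the intermediate students after the swap.
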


\begin{proof}
Appendix~\ref{proof RPLQ}.
\end{proof}

RPLQ might be ordinarily inefficient; the inefficiency might also be caused solely by lower quotas.

\begin{Ex}\label{ex rplq ordinally inefficient}
Let $N=[6]$ and $P=\{a,b,c,d\}$. Suppose $l(b)=l(c)=2$, and there are no other quotas. Students have preferences $\succ$. The resulting random assignment matrix is $R$ which is stochastically dominated by $R'$.

\[
\succ \ = 
    \begin{tabular}{cc}
    \hline
    1,2,3 & 4,5,6 \\
    \hline
    a & b \\
    b & a \\
    c & d \\
    d & c \\
    \hline
    \end{tabular}
  \qquad
  R=
  \begin{blockarray}{*{4}{c} l}
    \begin{block}{*{4}{>{\footnotesize}c<{}} l}
      a & b & c & d\\
    \end{block}
    \begin{block}{[*{4}{c}]>{\footnotesize}l<{}}
      3/5 & 1/15 & 1/3 & 0 & 1,2,3\\
      0 & 2/3 & 1/3 & 0 & 4,5,6\\
    \end{block}
  \end{blockarray}
  \qquad
  R'=
  \begin{blockarray}{*{4}{c} l}
    \begin{block}{*{4}{>{\footnotesize}c<{}} l}
      a & b & c & d\\
    \end{block}
    \begin{block}{[*{4}{c}]>{\footnotesize}l<{}}
      2/3 & 0 & 1/3 & 0 & 1,2,3\\
      0 & 2/3 & 1/3 & 0 & 4,5,6\\
    \end{block}
  \end{blockarray}
\]
\end{Ex}

\section{The Probabilistic Serial mechanism under Lower Quotas}\label{sec: Probabilistic Serial under Lower Quotas}

The PS mechanism of \cite{bogomolnaia2001new} is a central element in the set of ordinally efficient random assignment mechanisms for the classical random assignment problem. In the eating algorithm, time runs continuously between 0 and 1, and students eat from their most preferred available project with constant unit eating speed. They are required to move from the project they are currently eating if and only if the project is fully eaten, in which case they move to their next most preferred available project. We follow the same basic idea as with PrioLQ: Allow students to eat from their most preferred available project as long as there is enough time left to satisfy all lower quotas. When constraints start to 'bite', we restrict their menu to projects which are yet to satisfy their lower quota. We note the mechanism can be regarded as solving a special case of the problem with distributional constraints as in \cite{ashlagi2020assignment} where all students are of the same type.

\subsection{The eating algorithm}

Similarly to PrioLQ, in the design of the Probabilistic Serial under Lower Quotas (PSLQ), we need to identify the menu presented to a student at any point of the execution and an optimal time when students should be moved to consume a project that is at risk of not satisfying its lower quota. Intuitively, a student should be allowed to consume her most preferred available project as long as all other projects can satisfy their quotas if we continue with the same eating pattern a little longer. 

\begin{Ex}\label{ex pslq 1}
Suppose $N=[5]$ and $P=\{a,b,c\}$. Suppose the projects and students have the following quotas and preferences, respectively,

\[
\begin{tabular}{ c|ccc }
& a & b & c \\
  \hline
$u(\cdot)$ & 2 & 2 & 2\\
$l(\cdot)$ & 1 & 1 & 2 \\
\end{tabular}
\qquad
\succ \ = 
    \begin{tabular}{ccc}
    \hline
    1,2 & 3,4 & 5 \\
    \hline
    a & b & c\\
    b & a & a\\
    c & c & b\\
    \hline
    \end{tabular}
\qquad
R=
  \begin{blockarray}{*{3}{c} l}
    \begin{block}{*{3}{>{\footnotesize}c<{}} l}
      a & b & c \\
    \end{block}
    \begin{block}{[*{3}{c}]>{\footnotesize}l<{}}
      3/4 & 0 & 1/4 & 1,2\\
      0 & 3/4 & 1/4 & 3,4\\
      0 & 0 & 1 & 5\\
    \end{block}
  \end{blockarray}
\]

Observe that student 5 is not able to satisfy the lower quota of $c$ on her own. Therefore, we seek the critical time $t_c \in [0,1)$ such that if we continue with the same eating pattern a little longer, we would not be able to satisfy the lower quota of $c$. As a continuous analogue to condition~\ref{cond} when considering PrioLQ, we posit $t_c$ must satisfy
$$n(1-t_c)=l(c)-t_c$$
that is the remaining mass of students is exactly enough to satisfy the lower quota of $c$ after substracting what student 5 has eaten until time $t_c$. Rearranging, we obtain
$$t_c=\frac{n-l(c)}{n-1}=\frac{3}{4}$$
The resulting random assignment matrix is R, which is clearly feasible.
\end{Ex}

Consider any point $t \in [0,1]$ during the execution of the algorithm and denote by $R(t)$ the assignment matrix at this time (i.e. the entry $r_{ip}(t)$ means how much has student $i$ eaten from project $p$ until time $t$). We must ensure that $R(1)$ is feasible.

In what follows, we define $(\cdot)_+: \mathbb{R} \to \mathbb{R}_{\geq 0}$ to be $(x)_+=\max\{0,x\}$.

Fix $t \in [0,1]$. At $t$, $n(1-t)$ is the remaining mass of students that can be distributed to projects $P$. We will denote such distribution by a tuple $[z_p(t)]_{p \in P} \in \mathbb{R}_{\geq 0}^k$, where $\sum_{p \in P}z_p(t)=n(1-t)$.
For a solution to be feasible, we require
$$z_p(t)+\sum_{i \in N}r_{ip}(t) \geq l(p), \forall p \in P$$
Rearranging, we have
$$z_p(t) \geq l(p)-\sum_{i \in N}r_{ip}(t), \forall p \in P$$
and since $z_p(t) \geq 0, \forall p \in P$ by definition, we have
$$z_p(t) \geq \left( l(p)-\sum_{i \in N}r_{ip}(t) \right)_+ $$
Hence, summing over all projects,

\begin{equation}\label{eq pslq feasibility}
   n(1-t)=\sum_{p \in P}z_p(t) \geq \sum_{p \in P} \left( l(p)-\sum_{i \in N}r_{ip}(t) \right)_+ 
\end{equation}

Note the set of distributions $[z_p(t)]_{p \in P}$, together with the partial assignment $R(t)$, determine the set of feasible random assignment matrices we can obtain at $t=1$.

This motivates the following definition.

\begin{Def}\label{def active project}
Fix $t \in [0,1]$. We say project $p \in P$ is active at $t$ if
(i) $l(p) > \sum_{i \in N}r_{ip}(t)$; or (ii) $l(p) \leq \sum_{i \in N}r_{ip}(t) < u(p)$ and
$$n(1-t) > \sum_{p \in P} \left( l(p)-\sum_{i \in N}r_{ip}(t) \right)_+$$
We denote by $\mathcal{A}(t)$ the set of active projects at time $t$.
\end{Def}

To ensure $R(1)$ is feasible, we limit the choice of students to projects in $\mathcal{A}(t)$ at all $t \in [0,1]$.

\begin{Ex}
Example~\ref{ex pslq 1} continued. As derived in Example~\ref{ex pslq 1}, we have, at $t_c=3/4$,
$$n(1-t_c)=l(c)-t_c$$
Since $\sum_{i \in N}r_{ip}(t) > l(p)$, for $p=a,b$ and $\sum_{i \in N}r_{ic}(t) < l(c)$, the only active project at $t_c$ is $c$.
\end{Ex}

Before we formally define the iterative procedure, let us introduce one more notion. We define $\chi_{ip}$ to be the indicator function which tells us whether student $i$ is eating project $p$ at time $t$.
$$\chi_{ip}(t)=
\begin{cases}
1 & \text{ , } p = \phi_i(\mathcal{A}(t))\\
0 & \text{ , otherwise}
\end{cases}$$

We assume all agents have unit eating speed. Given a preference profile $\succ$, the eating algorithm is defined as the following sequence of recursive steps. Let $t^0=0$, $\mathcal{A}^0 = \mathcal{A}(0)=P$ and $R^0=R(0)=[0]_{i \in N, p \in P}$. Given $t^0,\mathcal{A}^0, R^0,\dots, t^{v-1}, \mathcal{A}^{v-1}=\mathcal{A}(t^{v-1}), R^{v-1}=R(t^{v-1})$, for all $p \in \mathcal{A}^{v-1}$  define:
\bigskip
\begin{enumerate}
    \setlength\itemsep{2em}
    \item $$t^v(p) = \sup\{t \in [0,1]: \sum_{j \in N}\left( r_{jp}^{v-1}+\chi_{jp}(t^{v-1}) (t-t^{v-1}) \right) < u(p) \}$$
    \item $$\tau^v = \min_{p \in \mathcal{A}^{v-1}} t^v(p)$$
    \item $$\lambda = \sup\{t \in [0,1]: n(1-t) > \sum_{p \in P} \left( l(p)-\sum_{i \in N}r_{ip}(t^{v-1}) - \chi_{ip}(t^{v-1})(t-t^{v-1}) \right)_+\}$$
    \item $$t^v = \min\{\tau^v,\lambda\}$$ 
    \item $$r_{ip}^v = r_{ip}^{v-1} + \chi_{ip}(t^{v-1})(t^v-t^{v-1}), \forall i \in N, p \in P$$
    \item If $\lambda > \tau^v$, let 
    $$\mathcal{A}^v = \mathcal{A}^{v-1} \setminus \{p \in \mathcal{A}^{v-1}: t^v(p) = t^v\}$$ \\
    else let 
    $$\mathcal{A}^v = \mathcal{A}^{v-1} \setminus \{p \in \mathcal{A}^{v-1}:\left( l(p)-\sum_{i \in N}r_{ip}(t^{v-1}) -  \chi_{ip}(t^{v-1}) \lambda \right)_+=0\}$$
\end{enumerate}
\bigskip
By construction, $\mathcal{A}^v \subset \mathcal{A}^{v-1}$ for all periods $v$ and so $\exists j \in \mathbb{N}$ such that $\mathcal{A}^j=\emptyset$. We define $PSLQ(\succ)=R^j=R(1)$ for the preference profile $\succ$. Note that the eating algorithm is \textit{anonymous}: the mapping $\succ \to PSLQ(\succ)$ is symmetric from the n preferences $\succ_i$ to the n assignments $R_i$.

Notice that if we define

\begin{equation}\label{eq: once shifted the rest has lower quota 1}
t_c = \inf\{t \in [0,1]: n(1-t) = \sum_{p \in P} \left( l(p)-\sum_{i \in N}r_{ip}(t) \right)_+\}    
\end{equation}
then
\begin{equation}\label{eq: once shifted the rest has lower quota 2}
n(1-t) = \sum_{p \in P} \left( l(p)-\sum_{i \in N}r_{ip}(t) \right)_+, \forall t > t_c    
\end{equation}

Hence, if $t_c<1$, all projects that students cease to eat from at some $t \in (t_c,1]$ have the total assignment equal to their lower quota. As a consequence of this fact, if there is a student shifted to a project $p$ at any $t \in (0,1)$, in particular, it must hold that $t \geq t_c$, and so the total fractional assignment to $p$ is $l(p)$. If $t_c = 1$, we recover the PS mechanism.

\subsection{Ordinal efficiency and envy-freeness}

\cite{bogomolnaia2001new} characterise ordinal efficiency based on acyclicity of the following binary relation defined on the set of projects. 

\begin{Def}
Given a random assignment matrix $R \in \mathcal{R}$ and a preference profile $\succ \in \mathcal{P}$, we define the binary relation $\tau(R,\succ)$ over the set of projects $P$ as follows
$$\forall p,q \in P: p \ \tau(R,\succ) \ q \iff \exists i \in N: p \succ_i q \text{ and } r_{iq}>0$$
The relation $\tau(R,\succ)$ is cyclic if there is a cycle of relations of the form $p_1 \ \tau \ p_2 \ \tau \ ... \ \tau \ p_n  \ \tau \ p_1$ (writing $\tau$ shorthand for $\tau(R,\succ)$).
\end{Def}

In our setting, acyclicity of $\tau$ is not a sufficient criterion for ordinal efficiency, as can be deduced from Example~\ref{ex PRLQ envy}. \cite{kojima2010incentives} consider the problem with maximum quotas and an outside option; ordinal efficiency is then characterised by acyclicity of $\tau(R\succ)$ and non-wastefulness.

\begin{Def}\label{def: wasteful}
Given a preference profile $\succ$, we say random assignment $R \in \mathcal{R}$ is wasteful if $\exists j \in N$ and $p,q \in P$ such that all of the following are satisfied (i) $p \succ_j q$, (ii) $r_{iq}>0$, (iii) $\sum_{i \in N}r_{ip} < u(p)$, and (iv) $\sum_{i \in N}r_{iq} > l(q)$.
\end{Def}

Minimum quotas restrict the set of ordinally efficient random assignments even further, and we require a stronger condition than non-wastefulness.

\begin{Ex}\label{ex: acyclic and wasteful not sufficient}
Let $N=\{1,2\}$, $P=\{a,b,c\}$ and $l(b)=u(b)=1$, while there are no other constraints. Consider the random assignment $R$, which is ordinally inefficient, as it is stochastically dominated by $R'$. Observe that $a \ \tau \ b$ and $b \ \tau \ a$; $\tau$ is acyclic. Moreover, $R$ is not wasteful since $l(b)=r_{1b}+r_{2b}=u(b)$.
\[\succ \ = 
    \begin{tabular}{cc}
    \hline
    1 & 2 \\
    \hline
    a & b \\
    b & c \\
    c & a \\
    \hline
    \end{tabular}
  \qquad
R=
  \begin{blockarray}{*{3}{c} l}
    \begin{block}{*{3}{>{\footnotesize}c<{}} l}
      a & b & c \\
    \end{block}
    \begin{block}{[*{3}{c}]>{\footnotesize}l<{}}
      1/2 & 1/2 & 0 & 1\\
      0 & 1/2 & 1/2 & 2\\
    \end{block}
  \end{blockarray}
 \qquad
 R'=
  \begin{blockarray}{*{3}{c} l}
    \begin{block}{*{3}{>{\footnotesize}c<{}} l}
      a & b & c \\
    \end{block}
    \begin{block}{[*{3}{c}]>{\footnotesize}l<{}}
      1 & 0 & 0 & 1\\
      0 & 1 & 0 & 2\\
    \end{block}
  \end{blockarray}
\]
\end{Ex}

The example motivates the following definition.

\begin{Def}\label{def: wasteful chain}
Given a preference profile $\succ$, random assignment $R \in \mathcal{R}$ contains a wasteful chain 
\begin{equation*}
    \varnothing \rightarrow p_1 \rightarrow i_1 \rightarrow p_2 \rightarrow \dots \rightarrow i_l \rightarrow p_{l+1} \rightarrow \varnothing
\end{equation*}
if there is a sequence of students $i_1,\dots,i_l \in N$ and a sequence of projects $p_1,\dots,p_{l+1} \in P$ such that all of the following are satisfied: (i) $p_j \succ_{i_j} p_{j+1}$, (ii) $r_{i_jp_{j+1}} > 0, \forall 1 \leq j \leq l$; (iii) $\sum_{i \in N} r_{ip_1} < u(p_1)$ and (iv) $\sum_{i \in N} r_{ip_{l+1}} > l(p_{l+1})$.\footnote{Note that if $R$ is wasteful at $\succ$ by student $i$ and projects $p,q$ as in Definition~\ref{def: wasteful}, in particular $i, p$, and $q$ form a wasteful chain. Therefore, $R$ is wasteful at $\succ$ $\Rightarrow$ $R$ contains a wasteful chain at $\succ$.}
\end{Def}

\begin{Ex}
We illustrate Definition~\ref{def: wasteful chain} on Example~\ref{ex: acyclic and wasteful not sufficient}. Observe the sequence $1,2$ of students and $a,b,c$ of projects is a wasteful chain.
\begin{equation*}
    \varnothing \rightarrow a \rightarrow 1 \rightarrow b \rightarrow 2 \rightarrow c \rightarrow \varnothing
\end{equation*}
With respect to feasibility, we can costlessly allow $1$ to consume more of project $a$ as student $2$ profits from substituting for 1 in the consumption of $b$, while project $c$ satisfies its lower quota in either case.

\end{Ex}

We now characterise the set of ordinally efficient random assignments on the strict preference domain.

\begin{Lemma}\label{lemma: characterisation ordinal efficiency}
Random assignment $R \in \mathcal{R}$ is ordinally efficient at $\succ \in \mathcal{P}^n$ if and only if the binary relation $\tau(R,\succ)$ is acyclic and $R$ does not contain a wasteful chain under $\succ$.
\end{Lemma}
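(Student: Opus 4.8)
The plan is to prove the two implications separately, handling the necessity direction by contraposition with an explicit welfare-improving perturbation, and the sufficiency direction by an augmenting-path argument on the difference matrix $D = R'-R$.

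\medskip

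\noindent\textbf{Necessity (contrapositive).} Suppose $\tau(R,\succ)$ is cyclic or $R$ contains a wasteful chain; I would exhibit $R' \in \mathcal{R}$ with $R'$ stochastically dominating $R$, so that $R$ is ordinally inefficient. Since any cycle of the finite relation $\tau$ contains a simple cycle, and any wasteful chain contains a minimal one, I may assume the projects along the structure are distinct. For each linking student $i_j$ I would shift a small mass $\varepsilon>0$ from the consumed, less-preferred project $p_{j+1}$ (where $r_{i_j p_{j+1}}>0$) to the more-preferred $p_j$: set $r'_{i_j p_{j+1}}=r_{i_j p_{j+1}}-\varepsilon$ and $r'_{i_j p_j}=r_{i_j p_j}+\varepsilon$. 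Each such student strictly improves with respect to $sd(\succ_{i_j})$, and every row sum is preserved. For a cycle, each column $p_j$ receives $+\varepsilon$ from the edge into it and $-\varepsilon$ from the edge out of it, so every column total is unchanged and $R'$ inherits feasibility from $R$. For a chain, all interior columns cancel in the same way, while the start column $p_1$ gains $\varepsilon$ and the end column $p_{l+1}$ loses $\varepsilon$; conditions (iii) $\sum_i r_{ip_1}<u(p_1)$ and (iv) $\sum_i r_{ip_{l+1}}>l(p_{l+1})$ leave slack for $\varepsilon$ small enough. Choosing $\varepsilon$ below the relevant positive consumptions and slacks keeps $R'$ feasible; by Proposition~\ref{implementable} it lies in $\mathcal{R}$, and it dominates $R$.

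\medskip

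\noindent\textbf{Sufficiency (contrapositive).} Suppose $R'\in\mathcal{R}$ stochastically dominates $R$; I would extract a $\tau$-cycle or a wasteful chain. Let $d_{ip}=r'_{ip}-r_{ip}$; each row of $D$ sums to $0$, and stochastic dominance says every prefix sum along $\succ_i$ is nonnegative. The key observation is that for any student $i$ and project $p$ with $d_{ip}>0$ there is a strictly less-preferred project $q$ ($p\succ_i q$) with $d_{iq}<0$: the prefix sum jumps strictly above $0$ at $p$ but returns to $0$ at the end, forcing a later negative increment. Writing $c_p=\sum_i r_{ip}$ and $c'_p=\sum_i r'_{ip}$ with $\sum_p(c'_p-c_p)=0$, call $p$ a \emph{source} if $c'_p>c_p$ (then $c_p<c'_p\le u(p)$, giving upper slack, condition (iii)) and a \emph{sink} if $c'_p<c_p$ (then $c_p>c'_p\ge l(p)$, giving lower excess, condition (iv)).

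\medskip

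\noindent\textbf{The trace.} If every column is balanced, then $D\neq 0$ forces a row with a negative entry; I start from a gainer--loser pair in that row and follow the walk below. Otherwise a source $p_1$ exists, and since $\sum_i d_{ip_1}>0$ there is a gainer $i_1$ at $p_1$, hence a less-preferred loser $p_2$ with $r_{i_1 p_2}>0$; I begin at $p_1$. The inductive step: having reached project $p_{j+1}$ as the loser of student $i_j$ (so $d_{i_j p_{j+1}}<0$), if $p_{j+1}$ is a sink I stop; otherwise $\sum_i d_{i p_{j+1}}\ge 0$ together with the negative term forces a gainer $i_{j+1}$ at $p_{j+1}$, who by the key observation has a less-preferred loser $p_{j+2}$. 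Each edge then satisfies $p_j\succ_{i_j}p_{j+1}$ and $r_{i_j p_{j+1}}>0$, a genuine $\tau$-relation and a valid chain link. The walk can fail to continue only where every $d_{ip}\le 0$ with one strict inequality, i.e.\ at a sink. Since $P$ is finite, the walk either terminates at a sink---yielding a wasteful chain from the source $p_1$ to that sink---or revisits a project, yielding a cycle of $\tau$-edges; in the balanced case no sink exists, so a cycle is forced. Either outcome contradicts the hypothesis.

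\medskip

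\noindent\textbf{Main obstacle.} The delicate part will be the sufficiency trace: proving the key observation that every gainer has a strictly less-preferred loser (ensuring each appended edge respects $\succ$ and records positive consumption of $R$), verifying that the walk can halt only at a sink, and separating the degenerate all-balanced case so that a cycle is genuinely forced rather than a spurious path. Matching sources and sinks to the upper-slack and lower-excess conditions (iii)--(iv) is what ties the combinatorial walk back to the two failure modes. The necessity direction is routine once the cycle and chain are taken with distinct projects so that interior column changes cancel.
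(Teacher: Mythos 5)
Your proposal is correct and follows essentially the same route as the paper: the necessity direction is the identical explicit $\varepsilon$-shift along the (distinct-project) cycle or chain, and the sufficiency direction extracts the same combinatorial structure from the difference matrix $R'-R$, using the same two facts (stochastic dominance gives every gainer a less-preferred loser with positive consumption under $R$; column sums propagate the walk unless the column has upper slack or lower excess). The only difference is organizational—where the paper starts from a student-level dominance violation and extends in both directions through four cases on whether the two columns sit at their upper/lower quotas, you run a single downward walk from a source column to a sink column (or a revisit), which packages the same argument slightly more cleanly.
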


\begin{proof}
Appendix~\ref{proof lemma: characterisation ordinal efficiency}.
\end{proof}

We are ready to prove that PSLQ satisfies the same properties of efficiency and fairness as the PS mechanism.

\begin{Prop}\label{PSLQ o-efficient}
For any market $(N,P,{l},{u},\succ)$, PSLQ is 
\begin{enumerate}
    \item Ordinally efficient
    \item Envy-free
\end{enumerate}
\end{Prop}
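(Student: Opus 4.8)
The plan is to get ordinal efficiency from the characterisation in Lemma~\ref{lemma: characterisation ordinal efficiency}, by showing that the PSLQ outcome $R$ makes $\tau(R,\succ)$ acyclic and contains no wasteful chain, and to get envy-freeness by the standard ``common menu'' argument of the classical Probabilistic Serial, adapted to the active sets $\mathcal{A}(t)$. The central bookkeeping device for both efficiency claims is the \emph{departure time} of a project: since the active sets are nested ($\mathcal{A}^v \subset \mathcal{A}^{v-1}$), the set $\{t : p \in \mathcal{A}(t)\}$ is an initial segment of $[0,1]$ for every $p$, so I write $d(p)$ for its supremum, after which nobody eats $p$. First I would prove the monotonicity lemma: if $p \succ_i q$ and $r_{iq}>0$, then $d(p) < d(q)$. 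Indeed $r_{iq}>0$ means $i$ consumes $q$ on an interval $[a,b]$ with $a<b\le d(q)$; throughout it $q=\phi_i(\mathcal{A}(t))$, so since $p\succ_i q$ the project $p$ is inactive there, giving $d(p)\le a < b\le d(q)$. As $p \ \tau(R,\succ)\ q$ holds exactly when some $i$ has $p\succ_i q$ and $r_{iq}>0$, any cycle $p_1\ \tau\ p_2\ \tau\cdots\tau\ p_m\ \tau\ p_1$ would force $d(p_1)<d(p_2)<\cdots<d(p_m)<d(p_1)$, a contradiction; hence $\tau(R,\succ)$ is acyclic.

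Next I would classify each project's total consumption $\sum_i r_{ip}$ by the position of $d(p)$ relative to the critical time $t_c$ of~\eqref{eq: once shifted the rest has lower quota 1}. Before $t_c$ the feasibility slack in Definition~\ref{def active project} is strict, so a project can leave $\mathcal{A}$ only by reaching its upper quota; thus $d(p)<t_c \Rightarrow \sum_i r_{ip}=u(p)$. At $t_c$ the menu restriction removes precisely the projects that have already met their floor, so $d(p)=t_c \Rightarrow l(p)\le \sum_i r_{ip}\le u(p)$. After $t_c$ only lower-quota-deficient projects stay active and each leaves as soon as it hits its floor, so $d(p)>t_c \Rightarrow \sum_i r_{ip}=l(p)$. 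Now suppose $R$ contained a wasteful chain $\varnothing \to p_1 \to i_1 \to \cdots \to i_l \to p_{l+1} \to \varnothing$ as in Definition~\ref{def: wasteful chain}. Condition (iii), $\sum_i r_{ip_1}<u(p_1)$, rules out $d(p_1)<t_c$, hence $d(p_1)\ge t_c$; condition (iv), $\sum_i r_{ip_{l+1}}>l(p_{l+1})$, rules out $d(p_{l+1})>t_c$, hence $d(p_{l+1})\le t_c$. But conditions (i)--(ii) give $p_j\ \tau(R,\succ)\ p_{j+1}$ for every $j$, so the monotonicity lemma yields $d(p_1)<d(p_2)<\cdots<d(p_{l+1})$ and in particular $d(p_1)<d(p_{l+1})$, contradicting $d(p_1)\ge t_c\ge d(p_{l+1})$. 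Thus no wasteful chain exists and ordinal efficiency follows.

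For envy-freeness, fix students $i,j$, enumerate $p_1\succ_i\cdots\succ_i p_k$, and for each threshold $t$ set $U_t=\{p_1,\dots,p_t\}$, comparing the masses of $U_t$ that $i$ and $j$ have eaten by time $\theta$. Because the active set $\mathcal{A}(\theta)$ is common to all students and $U_t$ is an upper set for $\succ_i$, whenever $j$ eats some project of $U_t$ there is an active project in $U_t$, so $i$'s own best active project $\phi_i(\mathcal{A}(\theta))$ also lies in $U_t$; as all eating speeds equal one, $i$ accumulates $U_t$-mass at least as fast as $j$ at almost every $\theta$, and integrating from $0$ gives $\sum_{s\le t} r_{ip_s}\ge \sum_{s\le t} r_{jp_s}$. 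Since $t$ is arbitrary, $R_i\ sd(\succ_i)\ R_j$, so $R$ is envy-free.

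The main obstacle is the wasteful-chain step: one must verify carefully that leaving $\mathcal{A}$ strictly before, exactly at, and strictly after $t_c$ corresponds respectively to stopping at the upper quota, somewhere in $[l(p),u(p)]$, and exactly at the lower quota, and then check that this trichotomy forces $d(p_1)\ge t_c\ge d(p_{l+1})$ against the strictly increasing departure times along the chain. By contrast, the acyclicity and envy-freeness arguments are the classical Probabilistic Serial arguments almost verbatim, once one notes that the lower quotas enter only through the common menu $\mathcal{A}(t)$ and never treat two students differently.
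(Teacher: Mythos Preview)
Your proof is correct and follows essentially the same approach as the paper's: your departure-time monotonicity lemma is the continuous-time analogue of the paper's step-index argument for acyclicity, your trichotomy at $t_c$ reproduces the paper's observation (Equations~\ref{eq: once shifted the rest has lower quota 1}--\ref{eq: once shifted the rest has lower quota 2}) that any project active past $t_c$ ends at exactly its floor, and your common-menu argument for envy-freeness is the paper's induction restated pointwise in time. The only cosmetic difference is that the paper disposes of the wasteful chain by showing all of $p_2,\dots,p_{l+1}$ remain active at $p_1$'s departure and hence receive exactly their lower quotas, whereas you pin down $d(p_{l+1})\le t_c\le d(p_1)$ against $d(p_1)<d(p_{l+1})$; these are the same contradiction read from opposite ends of the chain.
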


\begin{proof}
Appendix~\ref{proof PSLQ o-efficient}.
\end{proof}

\subsection{(Weak) Strategy-proofness}

We turn to incentive-compatibility. First, we use Example~\ref{ex PRLQ envy} to illustrate that introducing lower quotas might create incentives for students to misreport their true preferences.

\begin{Ex}
First, note that PSLQ and RPLQ produce exactly the same assignment for market $\Gamma$, which is clearly non-manipulable. Consider the market $\zeta$; $PSLQ(\zeta)$ outputs matrix $R$. However, if student 3 misreports $\succ_3': a \succ_3' b \succ_3' c$, the resulting random assignment is $R'$. Then, $R_3$ does not stochastically dominate $R_3'$ with respect to $\succ_3$.

\[
R=
  \begin{blockarray}{*{3}{c} l}
    \begin{block}{*{3}{>{\footnotesize}c<{}} l}
      a & b & c \\
    \end{block}
    \begin{block}{[*{3}{c}]>{\footnotesize}l<{}}
      1/2 & 1/3 & 1/6 & 1\\
      1/2 & 0 & 1/2 & 2\\
      0 & 5/6 & 1/6 & 3,4\\
    \end{block}
  \end{blockarray}
 \qquad
 R'=
  \begin{blockarray}{*{3}{c} l}
    \begin{block}{*{3}{>{\footnotesize}c<{}} l}
      a & b & c \\
    \end{block}
    \begin{block}{[*{3}{c}]>{\footnotesize}l<{}}
      1/3 & 5/9 & 1/9 & 1\\
      1/3 & 0 & 2/3 & 2\\
      1/3 & 5/9 & 1/9 & 3\\
      0 & 8/9 & 1/9 & 4\\
    \end{block}
  \end{blockarray}
\]

\end{Ex}

The PS mechanism satisfies a weaker notion of incentive-compatibility - weak strategy-proofness. We show PSLQ coincides with the PS mechanism in this avenue as well. Our positive result is in contrast to the impossibility result of \cite{ashlagi2020assignment}, who show that when students are partitioned into groups and each group has quotas on its own, it is impossible to design a mechanism which would satisfy ordinal efficiency, group envy-freeness and weak strategy-proofness.

\begin{Th}\label{PSLQ weakly strategy-proof}
For any market $(N,P,{l},{u},\succ)$, PSLQ is weakly strategy-proof.
\end{Th}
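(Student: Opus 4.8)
The plan is to mirror the classical weak strategy-proofness argument for PS \citep{bogomolnaia2001new}, adapting it to the two-phase dynamics induced by the critical time $t_c$. Fix a student $i$, enumerate her true preferences as $p_1 \succ_i \dots \succ_i p_k$, write $U_t = \{p_1,\dots,p_t\}$, and set $R = PSLQ(\succ)$ and $R' = PSLQ(\succ_i',\succ_{-i})$. Since stochastic dominance with respect to $\succ_i$ is the comparison of partial sums $\sum_{j\le t} r_{ip_j}$, weak strategy-proofness is equivalent to the implication: if $\sum_{j\le t} r'_{ip_j} \ge \sum_{j\le t} r_{ip_j}$ for all $t$, then equality holds for all $t$. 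The first step is a \emph{menu-exit} identity. Because the active set is nonincreasing along the execution ($\mathcal{A}^v \subset \mathcal{A}^{v-1}$), and because under truthful reporting $i$ always eats $\phi_i(\mathcal{A}(t))$, she consumes from $U_t$ at time $s$ precisely when $\mathcal{A}(s) \cap U_t \neq \varnothing$; monotonicity makes this an initial interval $[0,\beta_t)$, so $\sum_{j\le t} r_{ip_j} = \beta_t$, where $\beta_t = \inf\{s : \mathcal{A}(s)\cap U_t = \varnothing\}$. For any report the same reasoning yields only the inequality $\sum_{j\le t} r'_{ip_j} \le \beta_t'$, the exit time of $U_t$ in the deviated run, since a misreporting $i$ need not eat inside $U_t$ whenever possible.

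Suppose, for contradiction, that $R'_i$ strictly dominates $R_i$, and let $t^*$ be the smallest rank at which the partial sum strictly increases, so $\sum_{j\le t} r'_{ip_j} = \beta_t$ for all $t < t^*$ while $\sum_{j\le t^*} r'_{ip_j} > \beta_{t^*}$. By the menu-exit inequality it suffices to derive $\beta_{t^*}' \le \beta_{t^*}$, i.e. that $i$'s top-$t^*$ projects leave the active menu in the deviated run no later than in the truthful run. I would obtain this by tracking the aggregate mass that the \emph{other} students pour into $U_{t^*}$: a project of $U_{t^*}$ leaves the menu either by reaching its upper quota or, past $t_c$, by reaching its lower quota, and in both cases what matters is how fast the set $U_{t^*}$ is collectively filled. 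The equalities for ranks $t<t^*$ pin down that through time $\beta_{t^*-1}$ student $i$ eats $U_{t^*}$ in exactly the same cumulative amount in both runs, so any discrepancy is confined to the last project $p_{t^*}$.

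The argument then splits at $t_c$. Before $t_c$ the menu is exactly the set of projects below their upper quota (the lower-quota clause in Definition~\ref{def active project} is slack), so the dynamics coincide with the classical PS run with capacities $u$; here I would invoke the monotonicity-of-eating argument of \citet{bogomolnaia2001new} to conclude that the other students consume $U_{t^*}$ at least as fast under the deviation, bounding the part of $\beta_{t^*}'$ accrued before $t_c$ by its truthful counterpart. The substantive new case is the lower-quota phase: by \eqref{eq: once shifted the rest has lower quota 2} and the remark following it, every project abandoned after $t_c$, and every project a student is shifted onto after $t_c$, has total consumption exactly equal to its lower quota. Consequently a project of $U_{t^*}$ still active at $t_c$ can absorb additional mass from $i$ only up to its lower quota $l(p_{t^*})$, which caps $\beta_{t^*}'$, and any mass $i$ redirects into $U_{t^*}$ after $t_c$ must be withdrawn from a strictly more preferred project of rank $<t^*$, contradicting the equalities already established for those ranks.

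The main obstacle is that the deviation can move the critical time itself: $t_c$ is defined through the whole profile via \eqref{eq: once shifted the rest has lower quota 1}, so the deviated run has its own $t_c'$, and the two phases need not align across the two runs. The crux is therefore a comparison lemma showing that shifting $i$'s report forward onto $U_{t^*}$ cannot simultaneously delay the exit of $U_{t^*}$ and preserve the lower-rank partial sums, i.e. that $\beta_{t^*}' \le \beta_{t^*}$ survives the possible change from $t_c$ to $t_c'$. I expect this to require a careful case analysis according to whether $p_{t^*}$ exits via its upper or its lower quota and whether $t_c' \le t_c$ or $t_c' > t_c$, combined with the fact that the total mass $n(1-t_c)$ entering the lower-quota phase is itself pinned down by the deficits $\left(l(p)-\sum_{i \in N} r_{ip}(t_c)\right)_+$. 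Establishing $\beta_{t^*}' \le \beta_{t^*}$ contradicts $\sum_{j\le t^*} r'_{ip_j} > \beta_{t^*}$, so no rank admits a strict gain; all partial sums coincide, $R'_i = R_i$, and PSLQ is weakly strategy-proof.
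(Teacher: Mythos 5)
There is a genuine gap, and it is located exactly where you defer to ``a careful case analysis'': the comparison inequality $\beta_{t^*}' \le \beta_{t^*}$ that your whole argument rests on is \emph{false} in general for this mechanism. The paper's own Example~\ref{ex: PSLQ not ws} (Appendix~\ref{sec: Strategic manipulation under non-integer quotas}), with two students and quotas $l(b)=u(b)=l(c)=u(c)=2/3$, exhibits a misreport under which student $1$'s partial sum over her top two projects rises strictly from $2/3$ to $1$; in that instance the upper contour set $\{a,b\}$ exits the truthful menu at time $2/3$ (because the critical time arrives and $a$ is shifted from) but never exits the deviated menu before time $1$ (because the misreport fills $b$'s upper quota early, which changes the deficit calculation and pushes the critical time to $1$). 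So the deviation \emph{can} simultaneously delay the exit of $U_{t^*}$ and preserve the lower-rank partial sums. Your proposal gestures at this danger but proposes to overcome it by phase-by-phase monotonicity, which cannot work: any argument that establishes $\beta_{t^*}' \le \beta_{t^*}$ without further hypotheses would prove weak strategy-proofness for arbitrary real quotas, contradicting Proposition~\ref{impossibility result}.

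The missing idea is the integrality of the quotas, which your proposal never invokes. The paper's proof does not try to prove the exit-time comparison outright; instead it shows (via Lemmas~\ref{lemma: PSLQ ws helpful can't have more}--\ref{lemma: PSLQ ws other projects cannot have less}) that the two runs coincide up to the first project $i$ is shifted from, and that a strict gain is only possible in one residual configuration: $i$ is the \emph{only} student ever eating that project $p$, $0<\omega_p(1,\mathbf{e}^{\succ})<1$, and every other project's total allocation equals either its lower or its upper quota. A counting argument then kills this case: the total mass $n$ is an integer, the allocations of all projects other than $p$ sum to an integer, but $p$'s allocation is strictly between $0$ and $1$. That arithmetic step is the crux of the theorem and is precisely what separates the integer-quota case from the counterexample above; without identifying and exploiting it, the proof cannot close.
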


\begin{proof}
Appendix~\ref{proof PSLQ weakly strategy-proof}.
\end{proof}

What might seem surprising at first glance is the reliance of the result on the condition ${l} \in \mathbb{Z}_{\geq 0}^k$ and ${u} \in \mathbb{N}^k$. In the proof in Appendix~\ref{proof PSLQ weakly strategy-proof}, we fix student $i \in N$ and suppose $i$ misreports to $\succ_i'$ such that $PSLQ_i(\succ_i',\succ_{-i}) \ sd(\succ_i) \ PSLQ_i(\succ)$. In order to show that PSLQ is weakly strategy-proof, we need to show $PSLQ_i(\succ_i',\succ_{-i}) = PSLQ_i(\succ)$. Abstracting from integer quotas, we show that this is indeed the case unless under $\succ$, student $i$ is the only one ever eating from a project $p$ such that $i$ is shifted from $p$, and the total fractional allocation of any other project is either its upper or lower quota. If we allow non-integer quotas, we illustrate this class of instances might indeed admit strategic manipulation with an example in Appendix~\ref{sec: Strategic manipulation under non-integer quotas}.\footnote{Note that, assuming we allow non-integer quotas, we do not characterise under what conditions the PSLQ mechanism admits strategic manipulation. We only show the configuration of the parameters described above is a necessary condition.} However, the configuration of parameters described above is incompatible with integer quotas: We allocate total mass of $n \in \mathbb{N}$ students, and the total mass allocated to projects other than $p$ must be an integer, as it is a sum of integer allocations. Finally, notice that the total allocation to $p$ is between 0 and 1 since student $i$ is the only one ever eating $p$ and $i$ is shifted from $p$. Under integer quotas, we therefore conclude $PSLQ_i(\succ_i',\succ_{-i}) = PSLQ_i(\succ)$, and the result follows.

So far, we have shown that PSLQ is weakly strategy-proof if all quotas are integers and it fails to be incentive-compatible otherwise. It turns out it is impossible to construct an ordinally efficient, envy-free and weakly strategy-proof random assignment mechanism if we allow non-integer quotas and consider at least two students and three projects.\footnote{While the result is not directly applicable to the real-world settings described in the Introduction, it shows the tightness of weak strategy-proofness of the PSLQ mechanism under integer quotas: Under small perturbation of the quotas, the eating algorithm is not weakly strategy-proof and we obtain the impossibility result.}

\begin{Prop}\label{impossibility result}
There is no random assignment mechanism satisfying ordinal efficiency, envy-freeness, and weak strategy-proofness for all markets $(N,P,{l},{u},\succ)$ with $n \geq 2$, $k \geq 3$, $l \in \mathbb{R}^k_{\geq 0}$, and $u \in \mathbb{R}^k_{>0}$.
\end{Prop}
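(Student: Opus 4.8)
The plan is to refute the claim by exhibiting a single market in the class on which the three axioms are mutually inconsistent; since any candidate rule must be defined on \emph{every} market with $n\geq 2$ and $k\geq 3$, producing one such market suffices. I would fix three projects, a single non-integer lower quota with slack upper quotas, a ``true'' profile $\succ$, and a one-student deviation $\succ'=(\succ_i',\succ_{-i})$, chosen so that the fractional quota forces some student to consume part of a low-ranked project at $\succ$, while that same student can shed the burden onto the others at $\succ'$. The whole argument then reduces to computing the assignment that any ordinally efficient and envy-free rule is \emph{obliged} to output at each of these two profiles, and comparing the manipulator's row.

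First I would pin down $\Phi(\succ)$ and $\Phi(\succ')$ \emph{uniquely} using Lemma~\ref{lemma: characterisation ordinal efficiency} together with envy-freeness. Ordinal efficiency supplies two handles: acyclicity of $\tau(R,\succ)$ forbids any Pareto-improving trading cycle, while the absence of a wasteful chain forces the total consumption of every project that is some student's least-preferred positively-consumed object down to its lower quota whenever a strictly better, under-subscribed project exists. Envy-freeness contributes, for each ordered pair of students, a system of stochastic-dominance (partial-sum) inequalities relating their rows. Ordering the projects along the students' rankings, I would argue that these efficiency constraints and envy inequalities form a system admitting exactly one feasible random assignment matrix at each profile; the non-integrality of the lower quota is precisely what makes the forced partial sums land on the manipulable configuration rather than on integer totals.

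With both assignments determined, I would compare row $i$: at $\succ$ the fractional quota is split, so student $i$ carries a strictly positive amount of her low-ranked project, whereas at $\succ'$ the entire quota is borne by the remaining students, so $R_i'=\Phi_i(\succ')$ first-order stochastically dominates $R_i=\Phi_i(\succ)$ under $\succ_i$ while differing from it; this is exactly a violation of weak strategy-proofness, contradicting the assumed properties. As a consistency check, since PSLQ is ordinally efficient and envy-free (Proposition~\ref{PSLQ o-efficient}), it must coincide with the unique assignment at both profiles, so the same instance realises the manipulation of PSLQ under non-integer quotas announced after Theorem~\ref{PSLQ weakly strategy-proof}. The main obstacle is the uniqueness step: I expect the bulk of the work to lie in verifying that ordinal efficiency together with envy-freeness leave no slack at either profile, and in particular in confirming that the ``sole-eater'' configuration — incompatible with integer quotas by the argument following Theorem~\ref{PSLQ weakly strategy-proof} — genuinely arises once the quota is permitted to be fractional.
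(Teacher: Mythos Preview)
Your plan has a genuine gap at the uniqueness step, and it is not merely a technical obstacle but a structural one. At the ``true'' profile $\succ$, ordinal efficiency and envy-freeness will \emph{not} in general pin down a single random assignment. In the paper's construction (two students, three projects with $l(b)=u(b)=l(c)=u(c)=2/3$ and $a$ unconstrained; $\succ_1: a,b,c$ and $\succ_2: b,c,a$), the combination of Lemma~\ref{lemma: characterisation ordinal efficiency} and envy-freeness forces $r_{1a}=2/3$, $r_{2a}=0$, but leaves a full one-parameter family
\[
R^t=\begin{pmatrix} 2/3 & t & 1/3-t\\ 0 & 2/3-t & 1/3+t\end{pmatrix},\qquad t\in[0,1/3],
\]
every member of which is ordinally efficient and envy-free at $\succ$. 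A candidate rule is therefore free to choose any $t$, and a \emph{single} deviation cannot produce a contradiction: if you let student $1$ deviate, the rule simply picks $t=1/3$ at $\succ$ and weak strategy-proofness survives.

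The paper closes this escape route by using \emph{two} deviations, one by each student. At $\succ'=(\succ_1',\succ_2)$ with $\succ_1':b,a,c$, and at $\succ''=(\succ_1,\succ_2')$ with $\succ_2':b,a,c$, ordinal efficiency plus envy-freeness \emph{do} determine the assignment uniquely. Weak strategy-proofness then forces $R^{t^*}_1=R'_1$, hence $t^*=1/3$, and simultaneously $R^{t^*}_2=R''_2$, hence $t^*=0$; the contradiction lies in the incompatibility of these two requirements, not in any single manipulation. Your outline needs this second deviation (or some analogous device that traps the free parameter from both sides); as written, the argument with one misreport cannot succeed.
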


\begin{proof}
Appendix~\ref{proof impossibility result}.
\end{proof}

%%%%%%%%%%%%%%%%%%%%%%%%%%%%%%%%%%%%%%%%%%%%%%%%%%%%%%%%%%%%%%
\section{Random assignment of multiple indivisible objects under lower quotas}\label{sec: Random assignment of multiple indivisible objects}

Our model can be extended to solve the random assignment of multiple objects, in which each agent receives more than one object and preferences of agents are additively separable across copies of objects.\footnote{Preferences over subsets of the set of objects $P$ are additively separable over objects if for each agent $i$ there is a cardinal utility function $u_i: P \rightarrow \mathbb{R}$ such that for every subset $Q \subseteq P$, $u_i(Q)=\sum_{p \in Q}u_i(p)$.} Consider, for example, the problem of distributing patients to physicians. Suppose each patient has $q$ visits covered from the health insurance every year, each physician has a minimum and a maximum number of visits they can deliver per year, and suppose each patient prefers to claim as many slots as possible at their preferred physician. 

More formally, we consider the same set-up as in Section~\ref{sec: Model}, only that each agent has an upper bound $q \in \mathbb{N}$ on the number of objects she can be assigned. Let us assume $\sum_{p \in P}l(p) \leq q \cdot n \leq \sum_{p \in P}u(p)$ to ensure feasibility. Each row $i$ of an assignment matrix must sum to $q$. 

The extension of PSLQ to this setting is straightforward: let the epoch be $[0,q]$ instead of $[0,1]$, and hence each agent eats $q$ copies of objects. Equivalently, we could define the mechanism by creating $q$ 'clones' of each agent, with the same preferences as the original agent. \cite{kojima2009random} considers a natural extension of the PS mechanisms to this setting where there is one unit of each good available and no lower quotas, and prove that this mechanism satisfies ordinal efficiency and no envy. It can be shown this also holds when we consider lower quotas. Since \cite{kojima2009random} proved that no mechanism in this setting is ordinally efficient, envy-free and weakly strategy-proof, it follows that this extension of the PSLQ mechanism fails to be incentive-compatible. Similarly, we can extend RPLQ by creating $q$ 'clones' of each student, with the same preferences as the original student. This mechanism violates ordinal efficiency; however, it is strategy-proof and weakly envy-free.

\section{Conclusion}\label{sec: Extensions and Conclusion}

We studied the assignment problem where objects have upper and lower quotas and agents reveal strict ordinal preferences over all objects. The assignment problem with minimum quotas is complex and we considered the problem of distributing agents to objects with the objective of satisfying quotas of all objects, while assuming a feasible assignment always exists. 

Our main contribution is generalising the Random Priority mechanism and the Probabilistic Serial mechanism to this domain. We showed the adjusted mechanisms keep the same properties as their classical versions, namely RPLQ is weakly envy-free and strategy-proof but not ordinally efficient, and PSLQ is ordinally efficient, envy-free, and weakly strategy-proof. We also showed that if we allow quotas to be non-negative real numbers, no random assignment mechanism satisfies ordinal efficiency, envy-freeness and weak strategy-proofness simultaneously for all markets with at least two agents and three projects. Finally, we extended the framework to the random assignment of multiple indivisible objects.

It would be interesting to see how to extend the mechanisms if we do not take feasibility of the assignments for granted, with an available outside option, or if there are additional constraints, such as prerequisites of projects and courses, or incorporating preferences of students over their peers when working in a group. We leave these questions open for future research.

\newpage
\bibliographystyle{apalike} 
\bibliography{bibliography} 
\newpage

\appendix

%%%%%%%%%%%%%%%%%%%%%%%%%%%%%%%%%%%%%%%%%%%%%%%%%%%%%%%%%%%%%%%
\section*{Proofs}
%%%%%%%%%%%%%%%%%%%%%%%%%%%%%%%%%%%%%%%%%%%%%%%%%%%%%%%%%%%%%%%

%%%%%%%%%%%%%%%%%%%%%%%%%%%%%%%%%%%%%%%%%%%%%%%%%%%%%%%%%%%%%%
\section{Proposition~\ref{RPLQ}}\label{proof RPLQ}
%%%%%%%%%%%%%%%%%%%%%%%%%%%%%%%%%%%%%%%%%%%%%%%%%%%%%%%%%%%%%%

%\begin{comment}
\subsubsection*{RPLQ is strategy-proof} 

\begin{proof}
We have that $\forall \sigma \in S_n$, $PrioLQ(\succ)^{\sigma}$ is strategy-proof. Since $RPLQ(\succ)$ is a convex combination of $PrioLQ(\succ)^{\sigma}$ over all $\sigma \in S_n$ and is itself feasible with constant coefficients independent of $\succ$, strategy-proofness is preserved.
\end{proof}

\subsubsection*{RPLQ is weekly envy-free} 

\begin{proof}
Fix $\succ \in \mathcal{P}^n$ and let $R=RPLQ(\succ)$. Suppose $R_2 \ sd(\succ_1) \ R_1$. To prove the claim, we must show $R_2=R_1$. Enumerate the projects in a decreasing order of preference of student 1: $p_1,\dots,p_k$ such that $p_1 \succ_1 p_2 \succ_1 \dots \succ_1 p_k$.

For any permutation $\sigma \in S_n$ in which 1 precedes 2, let $\sigma'=\sigma(12)$, i.e. it is a permutation which we get from $\sigma$ by transposing it by the two-cycle $(12)$. Since the rest of the elements remain fixed, we have that the pairs $\{\sigma,\sigma'\}$ partition the symmetric group $S_n$. Since $\succ$ is fixed throughout this part of the proof, we will suppress it from now until the end of the proof. Define $Q=\frac{PrioLQ^{\sigma}+PrioLQ^{\sigma'}}{2}$.

Let us first consider the allocation of $p_1$. We distinguish 3 cases. These scenarios might follow because we hit an upper or a lower quota, or both at the same time. For the purpose of this proof, however, this makes no difference. Indeed, notice that if at step $k$ we limit the choice of the student $\sigma(k)$ to $P_l^k$,\footnote{Recall $P_l^k \subseteq P$ is the set of projects with unfilled lower quota at the beginning of step $k$.} then none of the students $\sigma(k+1),..,\sigma(n)$ will be able to choose from $P \setminus P_l^k$, even though some of these projects might have unfilled upper quotas. 

The cases are as follows:
\begin{enumerate}
    \item 1 does not get $p_1$ at neither $PrioLQ^{\sigma}$ nor $PrioLQ^{\sigma'}$
    \item 1 gets $p_1$ at $PrioLQ^{\sigma}$ and 2 cannot get $p_1$ at $PrioLQ^{\sigma}$
    \item 1 gets $p_1$ at $PrioLQ^{\sigma}$ and 2 can get $p_1$ at $PrioLQ^{\sigma}$
\end{enumerate}

If case (1) holds, neither student can get $p_1$ at neither $PrioLQ^{\sigma}$ nor $PrioLQ^{\sigma'}$, and so $r_{1p_1}=r_{2p_1}=0$.

Suppose case (2) holds. If 2 gets $p_1$ at $PrioLQ^{\sigma'}$, then so does 1 at $PrioLQ^{\sigma}$. At $PrioLQ^{\sigma}$, 2 cannot get $p_1$. Therefore, $q_{2p_1} \leq q_{1p_1}$.

Now suppose case (3) holds. Then 1 gets $p_1$ in both $PrioLQ^{\sigma}$ and $PrioLQ^{\sigma'}$, so again we have $q_{1p_1} \geq q_{2p_1}$.

Since $R$ is a convex combination of such such assignments, we have $q_{1p_1} \geq q_{2p_1}$ implies $r_{1p_1} \geq r_{2p_1}$. Since we assume $r_{2p_1} \geq r_{1p_1}$, we must have $r_{1p_1}=r_{2p_1}$. Thus, for all such pairs of permutations $\{\sigma,\sigma'\}$, we have that exactly one of the following holds (we will denote these conditions by $(\star)$)
\begin{enumerate}
    \item 1 gets $p_1$ at $PrioLQ^{\sigma}$ but not in $PrioLQ^{\sigma'}$ and 2 gets $p_1$ at $PrioLQ^{\sigma'}$ but not at $PrioLQ^{\sigma}$.
    \item Both get $p_1$ at both assignments.
    \item Neither gets $p_1$ in any $PrioLQ^{\sigma}$ or $PrioLQ^{\sigma'}$.
\end{enumerate}

Next, we consider the allocation of $p_2$. We distinguish the following cases:
\begin{enumerate}
    \item 1 gets $p_2$ at $PrioLQ^{\sigma}$ and 2 cannot get $p_2$ at $PrioLQ^{\sigma}$.
    \item 1 gets $p_2$ at $PrioLQ^{\sigma}$ and 2 can get $p_2$ at $PrioLQ^{\sigma}$.
    \item 1 gets $p_1$ at $PrioLQ^{\sigma}$ and 2 can get $p_2$ at $PrioLQ^{\sigma}$.\footnote{If 1 gets $p_1$ at $PrioLQ^{\sigma}$ and 2 cannot get $p_2$ at $PrioLQ^{\sigma}$, see the paragraph explaining the allocation of $p_1$.}
    \item 1 does not get $p_1$ nor $p_2$ at $PrioLQ^{\sigma}$.
\end{enumerate}

First, consider case (4). Then neither of the students gets $p_1$ nor $p_2$ under $PrioLQ^{\sigma}$ nor $PrioLQ^{\sigma'}$. Therefore, $r_{1p_1}=r_{2p_1}=r_{1p_2}=r_{2p_2}=0$.

Next, consider case (1). 2 cannot get $p_1$ at neither assignments, nor does 1. If 2 gets $p_2$ at $PrioLQ^{\sigma'}$, then 1 cannot get $p_2$ at $PrioLQ^{\sigma'}$.

Case (2) implies that if 1 and 2 get $p_2$ at $PrioLQ^{\sigma}$, then 1 certainly gets $p_2$ at $PrioLQ^{\sigma'}$.

Finally, let us consider case (3). If 1 gets $p_1$ at $PrioLQ^{\sigma}$ and 2 gets $p_2$ at $PrioLQ^{\sigma}$, then 1 gets no worse than $p_2$ at $PrioLQ^{\sigma'}$, so 2 cannot get worse than $p_2$ at $PrioLQ^{\sigma'}$. However, by $(\star)$, we must have that 2 gets $p_1$ at $PrioLQ^{\sigma'}$, so 1 gets $p_2$ in $PrioLQ^{\sigma'}$.

We conclude that $q_{1p_2} \geq q_{2p_2}$, which implies $r_{1p_2} \geq r_{2p_2}$. By our assumption, we have $r_{2p_1}+r_{2p_2} \geq r_{1p_1}+r_{1p_2}$, and since we have already shown $r_{1p_1}=r_{2p_1}$, it must follow that $r_{1p_2} = r_{2p_2}$.

Observe that for all pairs $\{\sigma,\sigma'\}$, we must have that the allocation of $p_1,p_2,P \setminus \{p_1,p_2\}$ satisfies the following for at least one of $x$ and $y$ in $\{p_1,p_2\}$ and $x \succ_1 y$
\begin{equation}\label{eq envy-freeness 1}
    PrioLQ_1^{\sigma} = x \text{ and } PrioLQ_2^{\sigma} = x \Rightarrow PrioLQ_1^{\sigma'} = x \text{ and } PrioLQ_2^{\sigma'} = x
\end{equation}

and 
\begin{equation}\label{eq envy-freeness 2}
    PrioLQ_1^{\sigma} = x \text{ and } PrioLQ_2^{\sigma} = y \Rightarrow PrioLQ_1^{\sigma'} = y \text{ and } PrioLQ_2^{\sigma'} = x
\end{equation}

We proceed by induction. Suppose $r_{1p_i}=r_{2p_i}, \forall i \in \{1,..,l-1\}$. Also suppose that $\forall x,y \in \{p_1,p_2,..,p_{l-1},P \setminus \{p_1,p_2,..,p_{l-1}\}\}$, where at least one of them is in $\{p_1,p_2,..,p_{l-1}\}$ and $x \succ_1 y$, ~\ref{eq envy-freeness 1} and ~\ref{eq envy-freeness 2} hold.

If 2 gets $p_l$ at $PrioLQ^{\sigma'}$, then by the induction hypothesis 1 gets a project from $P \setminus \{p_1,..,p_{l-1}\}$ at $PrioLQ^{\sigma}$. But since $p_l$ is the best for her in this set and it is available, 1 gets $p_l$ at $PrioLQ^{\sigma}$. If 2 gets $p_l$ at $PrioLQ^{\sigma}$, then 1 gets $p_e$, $e \leq l$, at $PrioLQ^{\sigma}$. Then, again by the induction hypothesis, 2 gets $p_e$ at $PrioLQ^{\sigma'}$, and $p_l$ must be available for 1 at $PrioLQ^{\sigma'}$. But this implies 1 must get $p_l$ at $PrioLQ^{\sigma'}$.

We conlcude that $q_{2p_l} \leq q_{1p_l}$, so $r_{2p_l} \leq r_{1p_l}$. Since $\sum_{i=1}^{l} r_{2p_i} \geq \sum_{i=1}^{l} r_{1p_i}$ by assumption and $r_{1p_i}=r_{2p_i}, \forall i \in \{1,..,l-1\}$ by the induction hypothesis, it follows that $r_{2p_l} = r_{1p_l}$. The result follows by the principle of mathematical induction.
\end{proof}

%%%%%%%%%%%%%%%%%%%%%%%%%%%%%%%%%%%%%%%%%%%%%%%%%%%%%%%%%%%%%%%%%%
\section{Lemma~\ref{lemma: characterisation ordinal efficiency}}\label{proof lemma: characterisation ordinal efficiency}
%%%%%%%%%%%%%%%%%%%%%%%%%%%%%%%%%%%%%%%%%%%%%%%%%%%%%%%%%%%%%%%%%%

Fix $R$ and $\succ$ as in the statement of the lemma. First, we prove that if $\tau(R,\succ)$ is cyclic or $R$ is wasteful at $\succ$, $R$ is not ordinally efficient.

\subsubsection*{Statement only if}

\begin{proof}
First, suppose $R$ contains a wasteful chain at $\succ$, i.e. there is a sequence of students $i_1,\dots,i_l \in N$ and a sequence of projects $p_1,\dots,p_{l+1}$ such that all of the following are satisfied: (i) $p_j \succ_{i_j} p_{j+1}$ and (ii) $r_{i_jp_{j+1}} > 0, \forall 1 \leq j \leq l$; and (iii) $\sum_{i \in N} r_{ip_1} < u(p_1)$ and $\sum_{i \in N} r_{ip_{l+1}} > l(p_{l+1})$.

Define 
\begin{equation*}
    \delta = \min \{r_{i_1p_2},\dots,r_{i_lp_{l+1}},u(p_1) - \sum_{i \in N}r_{ip_1},\sum_{i \in N}r_{ip_{l+1}} - l(p_{l+1})\}
\end{equation*}

In particular, $\delta > 0$.

Define $R' = R + \Delta$, where $\Delta$ is an $n$-by-$k$ matrix defined as follows: $\Delta_{i_jp_j} = \delta$, $\Delta_{i_jp_{j+1}} = -\delta$ and 0 otherwise, $\forall j \in \{1,\dots,l+1\}$. By construction, $R$ is feasible $\Rightarrow$ $R'$ is also feasible. Moreover, $R_j' \ sd(\succ_j) \ R_j$ and $R_j \neq R_j'$ $\forall j \in \{1,\dots,l+1\}$, while $R_i = R_i', \forall j\neq i \in N$. Thus, $R'$ stochastically dominates $R$ with respect to $\succ$, so $R$ is ordinally inefficient.

Secoondly, suppose $\tau(R,\succ)$, which we denote shorthand by $\tau$ for simplicity, contains a cycle 

\begin{equation*}
p_1 \ \tau \ p_2 \ \tau \ \dots \ \tau \ p_l \ \tau \ p_1
\end{equation*}

We assume without a loss of generality that projects $p_1$ through $p_l$ are all distinct. By definition of $\tau$, there is a sequence of students (not necessarily all distinct) $i_1,\dots,i_l \in N$ such that

\begin{equation*}
    r_{i_1p_1} > 0 \text{ and } p_2 \succ_{i_1} p_1
\end{equation*}

\begin{equation*}
    r_{i_2p_2} > 0 \text{ and } p_3 \succ_{i_2} p_2
\end{equation*}

$$\vdots$$

\begin{equation*}
    r_{i_lp_l} > 0 \text{ and } p_1 \succ_{i_l} p_l
\end{equation*}

Take $\delta = \min \{r_{i_1p_1},\dots,r_{i_lp_l} \}$; note that $\delta>0$. Define $R'=R + \Delta$ where $\Delta_{i_jp_j} = -\delta$ and $\Delta_{i_jp_{j+1}} = \delta$, for $j=1,\dots,l$; and $\Delta_{ip}=0$, otherwise. By construction, $R'$ is a row stochastic matrix which is feasible if $R$ is feasible. Finally, $R'$ stochastically dominates $R$ since $\forall i \in \{i_1,\dots,i_l\}$, $R_i' \ sd(\succ_i) \ R_i$ and $R_i' \neq R_i$ (if the same student appears more than once in the list $\{i_1,\dots,i_l\}$, we use transitivity of the stochastic dominance relation).
\end{proof}

Next, we prove that if $R$ is not ordinally efficient at $\succ$, it is either wasteful at $\succ$ or $\tau(R,\succ)$ is cyclic.

\subsubsection*{Statement if}

\begin{proof}
Suppose $R$ is stochastically dominated at $\succ$ by $R'$. By definition, $R_i' \ sd(\succ_i) \ R_i, \forall i \in N$ and $\exists i_1 \in N$ such that $R_{i_1}' \neq R_{i_1}$. By definition of $sd(\succ_{i_1})$, this implies that $\exists p,q \in P$ such that $p \succ_{i_1} q, r_{i_1q}'< r_{i_1q}$, and $r_{i_1p}' > r_{i_1p}$. Note that this implies $r_{i_1q}>0$, and in particular $p \ \tau(R,\succ) \ q$.

We distinguish four cases:

\smallskip

\textbf{Case 1:} $\sum_{i \in N}r_{ip} < u(p)$ and $\sum_{i \in N}r_{iq} > l(q)$. Then $R$ is wasteful at $\succ$.

\smallskip

\textbf{Case 2:} $\sum_{i \in N}r_{ip} = u(p)$ and $\sum_{i \in N}r_{iq} > l(q)$. By feasibility, $\exists i_2 \in N$ such that $r_{i_2p}'<r_{i_2p}$. Since $R_{i_2}' \ sd(\succ_{i_2}) \ R_{i_2}$ and now that we showed $R_{i_2}' \neq R_{i_2}$, there must exist $x \in P$ such that $x \succ_{i_2} p$ and $r_{i_2x}'>r_{i_2x}$; in particular, $x \ \tau(R,\succ) \  p$. Repeating this argument, since the sets $N$ and $P$ are finite, either we find a project $y \in P$ such that $\sum_{i \in N}r_{iy} < u(y)$, which would constitute the "upper end" of a wasteful chain, or a cycle in the relation $\tau(R,\succ)$.\footnote{Note that if $\sum_{i \in N}r_{iy} = u(y), \forall y \in P$, we find ourselves in the setting of \cite{bogomolnaia2001new}, where acyclicity of $\tau(R,\succ)$ is both sufficient and necessary for ordinal efficiency.}

\smallskip

\textbf{Case 3:} $\sum_{i \in N}r_{ip} < u(p)$ and $\sum_{i \in N}r_{iq} = l(q)$. By feasibility, $\exists i_2 \in N$ such that $r_{i_2q}'>r_{i_2q}$. Since $R_{i_2}' \ sd(\succ_{i_2}) \ R_{i_2}$ and now that we showed $R_{i_2}' \neq R_{i_2}$, there must exist $x \in P$ such that $q \succ_{i_2} x$ and $r_{i_2x}'<r_{i_2x}$; in particular, $q \ \tau(R,\succ) \  x$. Similarly to the argument above, repeating this, since the sets $N$ and $P$ are finite, either we find a project $y \in P$ such that $\sum_{i \in N}r_{iy} > l(y)$, which would constitute the "lower end" of a wasteful chain, or a cycle in the relation $\tau(R,\succ)$.\footnote{Similarly to the previous footnote, if $\sum_{i \in N}r_{iy} = l(y), \forall y \in P$, acyclicity of $\tau(R,\succ)$ is both sufficient and necessary for ordinal efficiency.}

\smallskip

\textbf{Case 4:} $\sum_{i \in N}r_{ip} = u(p)$ and $\sum_{i \in N}r_{iq} = l(q)$. We repeat the arguments used in cases 2 and 3. By iteration, either we find an upper and a lower end of a wasteful chain, or the random assignment is equal to (i) the upper quota at each project or (ii) the lower quota at each project. In either case, the relation $\tau(R,\succ)$ must be cyclic.
\end{proof}

%%%%%%%%%%%%%%%%%%%%%%%%%%%%%%%%%%%%%%%%%%%%%%%%%%%%%%%%%%%%%%%%%%
\section{Proposition~\ref{PSLQ o-efficient}}\label{proof PSLQ o-efficient}
%%%%%%%%%%%%%%%%%%%%%%%%%%%%%%%%%%%%%%%%%%%%%%%%%%%%%%%%%%%%%%%%%%

We complete the proof in two steps, proving each statement in turn.

\subsubsection*{PSLQ is ordinally efficient}

\begin{Lemma}\label{lemma: PSLQ no wasteful chain}
For any market $(N,P,{l},{u},\succ)$, PSLQ does not contain a wasteful chain.
\end{Lemma}

\begin{proof}
We use a proof by contradiction. Fix $(N,P,{l},{u},\succ)$, denote the output of PSLQ by $R$ and by $i_1,\dots,i_l$ and $p_1,\dots,p_{l+1}$ a wasteful chain. By definition of a wasteful chain, we have $\sum_{i \in N}r_{ip_1}<u(p_1)$, which implies that students must be shifted from $p_1$; let us denote this time by $t_c$ and let us write $\mathcal{A}$ shorthand for $\mathcal{A}(t_c)$, the set of active projects at $t_c$. Since $p_1 \succ_{i_1} p_2$ and $r_{i_1p_2}>0$, $p_2 \in \mathcal{A}$. Similarly, if $\{p_2,\dots,p_{n}\} \subseteq \mathcal{A}$, $p_{n} \succ_{i_n} p_{n+1}$ and $r_{i_np_{n+1}}>0$ for some $n \in \{2,\dots,l\}$, it follows that $p_{n+1} \in \mathcal{A}$. By induction, we find $p_j \in \mathcal{A}, \forall j=2,\dots,l+1$. By Equations~\ref{eq: once shifted the rest has lower quota 1} and~\ref{eq: once shifted the rest has lower quota 2} and the discussion that follows them, all projects in $\mathcal{A}$ must have their total fractional assignment equal to their lower quota. However, this contradicts $\sum_{i \in N}r_{ip_{l+1}} > l(p_{l+1})$.
\end{proof}

We are ready to prove the PSLQ mechanism is ordinally efficient.

\begin{proof}
We prove this claim by contradiction. Fix $(N,P,{l},{u},\succ
)$ and suppose $R=PSLQ(\succ)$ is not ordinally efficient. By Lemma~\ref{lemma: PSLQ no wasteful chain}, $R$ does not contain a wasteful chain at $\succ$, so by Lemma~\ref{lemma: characterisation ordinal efficiency}, we can find a cycle in the relation $\tau(R,\succ)$: $p_1 \ \tau \ p_2 \ \tau \ \dots \ \tau \ p_l \ \tau \ p_1$. Let $i_s \in N$ be a student such that $p_{s-1}\succ_{i_s}p_s$ and $r_{i_sp_s}>0$ ($s \in \{1,..,l\}$ where $p_{l+1}=p_1$). Let $v^s$ be the first step $v$ in the PSLQ eating algorithm when the student $i_s$ starts to acquire project $p_s$, that is the last step for which $r_{i_sp_s}^v \neq 0$. Since $p_{s-1}$ is strictly preferred by student $i_s$ and $i_s$ gets a positive amount of $p_s$, $p_{s-1}$ must be either fully distributed or shifted from so that all projects satisfy their lower quotas. But since if a student is shifted from a project, none of the students will be able to eat from it for the rest of the execution of the algorithm, in both cases (fully distributed or shifted), we must have $v^{s-1}<v^s$ and this holds $\forall s \in \{1,..,l\}$. Hence, $v^1<v^2< \dots <v^l<v^1$, a contradiction.
\end{proof}

\subsubsection*{PSLQ is envy-free}

\begin{proof}
Fix $(N,P,{l},{u},\succ)$ and $i \in N$. Label $P$ such that $p_1 \succ_i p_2 \succ_i \dots \succ_i p_k$. Let $s_1$ be the step at which $p_1$ is fully assigned or shifted from, that is $p_1 \in \mathcal{A}^{s_1-1} \setminus \mathcal{A}^{s_1}$. Because $p_1 \in \mathcal{A}^v$ for all $v \leq s_1-1$, we have

\begin{equation*}
r_{ip_1}^{s_1} = t^{s_1} \geq  r_{jp_1}^{s_1}, \forall j \in N    
\end{equation*}

Suppose $p_1$ is fully allocated at step $s_1$. Then $r_{ip_1}^{s_1} = r_{ip_1}$ and $r_{jp_1}^{s_1} = r_{jp_1}, \forall j \in N$, that is they are the i-th and j-th entry of the resulting random assignment matrix $R=PSLQ(\succ)$. Now suppose 1 is shifted from $p_1$ at $s_1$. Since all students eating from $P$ at $s_1$ must shift too, we obtain the same conclusion, i.e. $r_{ip_1}^{s_1} = r_{ip_1}$ and $r_{jp_1}^{s_1} = r_{jp_1}$. Therefore,

\begin{equation*}
r_{ip_1} \geq r_{jp_1}, \forall j \in N    
\end{equation*}

We proceed by induction. Suppose 

\begin{equation*}
\sum_{l=1}^{m-1}r_{ip_l} \geq \sum_{l=1}^{m-1}r_{jp_l}, \forall j \in N, \text{ for } k > m-1 \geq 1    
\end{equation*}

Let $s_m$ be the step at which $\{p_1,..,p_m\}$ are all either fully allocated or shifted from. We have

\begin{equation*}
\{p_1,..,p_m\} \cap \mathcal{A}^{s_m-1} \neq \emptyset \text{ , and } \{p_1,..,p_m\} \cap \mathcal{A}^{s_m} = \emptyset    
\end{equation*}

Note that $s_1 \leq s_2 \leq \dots \leq s_m$. Hence, $\exists 1 \leq h \leq m$ such that $p_h \in \mathcal{A}^v$ for all $v \leq s_m - 1$. It follows that

\begin{equation*}
\sum_{l=1}^{m}r_{ip_l} = \sum_{l=1}^{m}r_{ip_l}^{s_m} = t^{s_m} \geq \sum_{l=1}^{m}r_{jp_l}^{s_m} = \sum_{l=1}^{m}r_{jp_l}, \forall j \in N    
\end{equation*}

Therefore, by induction it follows that 
$$\sum_{l=1}^{m}r_{jp_l} \geq \sum_{l=1}^{m}r_{jp_l}, \forall j \in N, \forall m=1,..,k$$
that is $R_i \ sd(\succ_i) \ R_j, \forall j \in N$.
\end{proof}

%%%%%%%%%%%%%%%%%%%%%%%%%%%%%%%%%%%%%%%%%%%%%%%%%%%%%%%%%%%%%%
\section{Theorem~\ref{PSLQ weakly strategy-proof}}\label{proof PSLQ weakly strategy-proof}
%%%%%%%%%%%%%%%%%%%%%%%%%%%%%%%%%%%%%%%%%%%%%%%%%%%%%%%%%%%%%%

In this section, we prove the PSLQ mechanism is weakly strategy-proof. In what follows, assume market $(N,P,{l},{u},\succ)$ is fixed. 

Before we proceed, let us introduce some notation. Fix $i \in N$, and let $e_i:[0,1] \rightarrow P$ be the eating schedule of student $i$ where $e_i(t)$ is the project student $i$ is eating at time $t$. We require $e_i$ to be right-continuous with respect to the discrete topology on $P$, that is $\forall t \in [0,1), \exists \epsilon >0$ such that $\forall s \in [t,t+\epsilon), e_i(s)=e_i(t)$. For any vector of eating schedules of the students $\mathbf{e} = (e_1,\dots,e_n)$, time $t \in [0,1]$ and project $p \in P$, define
$$N_p(t,\mathbf{e}) = \{i \in N: e_i(t)=p\} \text{ and } n_p(t,\mathbf{e}) = |N_p(t,\mathbf{e})|$$
Let $t_p(\mathbf{e})$ be the time at which $p$ is ceased to be eaten from, that is
$$t_p(\mathbf{e})=\sup\{t \in [0,1]: n_p(t,\mathbf{e}) \geq 1\}$$
Note that $n_p(t,\mathbf{e})$ is a non-decreasing step-function in $t$ on $[0,t_p(\mathbf{e}))$, as once a student starts eating from $p$, it eats from it until all project seats are assigned or until all students are shifted from it.\footnote{A project can be neither shifted from nor exhausted when it is eaten from until the end of the epoch and does not reach its upper quota. However, this distinction will make no difference in what follows and we will omit it for brevity of arguments.} Also recall $\chi_{ip}(t)$ is the inidicator function with $\chi_{ip}(t)=1 \iff e_i(t) = p$. In particular, both $n_p(t,\mathbf{e})$ and $\chi_{ip}(t)$ are Riemann integrable. For student $i$, project $p$ and time $t$, we define
$$ \omega_{ip}(t,\mathbf{e}) = \int_{0}^{t}\chi_{ip}(s) \,ds$$
and
$$ \omega_p(t,\mathbf{e}) = \int_{0}^{t}n_p(s,\mathbf{e})\,ds = \sum_{i \in N} \omega_{ip}(t,\mathbf{e})$$
Note that $\omega_p(t_p(\mathbf{e}),\mathbf{e}) = \omega_p(1,\mathbf{e})$. We also define $\tau_p(\mathbf{e})$ to be the time at which students start eating from $p$, that is
$$\tau_p(\mathbf{e})=\inf\{t \in [0,1]: n_p(t,\mathbf{e}) \geq 1\}$$
and by  $\tau_{ip}(\mathbf{e})$ the first time $i$ eats from $p$ under the eating schedule $\mathbf{e}$. Formally,
$$\tau_{ip}(\mathbf{e}) = \inf \{t \in [0,1]: e_i(t)=p \}$$

We denote by $t^c(\mathbf{e})$ the critical time, i.e.

\begin{equation*}
    t^c(\mathbf{e}) = \inf \{t \in [0,1]: n(1-t) = \sum_{p \in P} \Big( l(p) - \omega_p(t,\mathbf{e}) \Big)_{+} \}
\end{equation*}

We say project $p$ is shifted from if $1>t_p(\mathbf{e}) \geq t^c(\mathbf{e})$.

\begin{comment}
and let $\tau_{ip}(\mathbf{e})$ be the time when student $i$ starts eating from project $p$, i.e.
$$\tau_{ip}(\mathbf{e})=\inf\{t \in [0,1]: e_i(t)=p\}$$
\end{comment}

For any preference profile $\succ$, let $\mathbf{e}^{\succ}$ be the vector of eating schedules generated by PSLQ when students report $\succ$. In all of the following lemmata, we fix $i \in N$ and suppose $i$ misreports to $\succ_i'$. We denote the resulting profile by $\succ'=(\succ_i',\succ_{-i})$. We denote by $R$ the output of $PSLQ(\succ)$ and by $R'$ the output of $PSLQ(\succ')$. Also, we write $t_p$ ($\tau_p$) and $t_p'$ ($\tau_p'$) shorthand for $t_p(\mathbf{e}^{\succ})$ ($\tau_p(\mathbf{e}^{\succ})$) and $t_p(\mathbf{e}^{\succ'})$ ($\tau_p(\mathbf{e}^{\succ'})$), respectively.

\begin{comment}
\begin{Lemma}\label{lemma: proof PSLQ ws 1}
Let $\mathbf{x},\mathbf{y} \in \mathbb{R}_{\geq 0}^k$ be stochastic vectors such that $\mathbf{x} \neq \mathbf{y}$ and suppose $\mathbf{x}$ first-order stochastically dominates $\mathbf{y}$. Then $\exists s \in \{1,\dots,k\}$ such that $x_s > y_s$ and $x_t=y_t, \forall t < s$.
\end{Lemma}

\begin{proof}
Since $\mathbf{x}$ first-order stochastically dominates $\mathbf{y}$, we have by definition

\begin{equation}\label{eq: lemma proof PSLQ ws 1}
\forall l=1,\dots,k: \sum_{j=1}^{l} x_j \geq \sum_{j=1}^{l} y_j    
\end{equation}

with at least one of the inequalities strict. Suppose that $\not\exists s \in \{1,\dots,k\}$ such that $x_s>y_s$. But then $\sum_{j=1}^{k} x_j \leq \sum_{j=1}^{k} y_j$, a contradiction. So there must exist $1 \leq j \leq k$ such that $x_j > y_j$; let $s$ be smallest such integer. If $s=1$, we are done. Suppose $s>1$, i.e. $y_1 \geq x_1$. But by~\ref{eq: lemma proof PSLQ ws 1}, $x_1 \geq y_1$, so we get $x_1 = y_1$. If $s=2$, we are done; otherwise $s>2$, i.e. $y_2 \geq x_2$, and coupling this with~\ref{eq: lemma proof PSLQ ws 1} and the fact that $y_1=x_1$, we obtain $x_2=y_2$. Continuing in this inductive procedure, the result follows.
\end{proof}
\end{comment}

\begin{Lemma}\label{lemma: PSLQ ws helpful can't have more}
Fix $p \in P$ such that $t_{p}' \geq t_{p}$. Suppose $N_p(t,\mathbf{e}^{\succ}) \subseteq N_p(t,\mathbf{e}^{\succ'}), \forall t$ satisfying $\tau_{p} \leq t < t_{p}$. Then $t_{p}' = t_{p}$.
\end{Lemma}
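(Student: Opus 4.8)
The plan is to argue by contradiction: assume $t_p' > t_p$ and show that $p$ cannot in fact be eaten past time $t_p$ under $\succ'$.

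First I would convert the set-inclusion hypothesis into a statement about cumulative consumption. Since $N_p(t,\mathbf{e}^{\succ}) \subseteq N_p(t,\mathbf{e}^{\succ'})$ for $t \in [\tau_p, t_p)$, while $n_p(t,\mathbf{e}^{\succ}) = 0$ for $t < \tau_p$, we have $n_p(t,\mathbf{e}^{\succ}) \le n_p(t,\mathbf{e}^{\succ'})$ for every $t \in [0,t_p)$. Integrating over $[0,t_p)$ and using that no student eats $p$ after $t_p$ under $\succ$ (so $\omega_p(t_p,\mathbf{e}^{\succ}) = \omega_p(1,\mathbf{e}^{\succ})$) yields
\[
\omega_p(t_p,\mathbf{e}^{\succ'}) \ \geq\ \omega_p(t_p,\mathbf{e}^{\succ}) \ =\ \omega_p(1,\mathbf{e}^{\succ}).
\]
Because $n_p(\cdot,\mathbf{e}^{\succ'})$ is a non-decreasing step function up to $t_p'$, it stays $\geq 1$ on all of $[\tau_p, t_p')$, so $\tau_p' \le t_p$ and, if $t_p' > t_p$, project $p$ is still being eaten at $t_p$ under $\succ'$, i.e. $p \in \mathcal{A}(t_p)$ for the profile $\succ'$.

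Next I would split on the reason $p$ stops being eaten at $t_p$ under $\succ$. If $t_p = 1$, then $t_p' \ge t_p = 1$ forces $t_p' = t_p$ immediately. If $p$ is exhausted, i.e. $\omega_p(1,\mathbf{e}^{\succ}) = u(p)$, the displayed inequality together with feasibility $\omega_p(t_p,\mathbf{e}^{\succ'}) \le u(p)$ gives $\omega_p(t_p,\mathbf{e}^{\succ'}) = u(p)$; but then neither clause of Definition~\ref{def active project} can hold at $t_p$ under $\succ'$ (clause (i) needs $\omega_p < l(p)$, clause (ii) needs $\omega_p < u(p)$), so $p \notin \mathcal{A}(t_p)$, contradicting activity. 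The remaining case is when $p$ is shifted from, so $t_p \geq t^c(\mathbf{e}^{\succ})$ and, by Equations~\ref{eq: once shifted the rest has lower quota 1}--\ref{eq: once shifted the rest has lower quota 2}, $\omega_p(1,\mathbf{e}^{\succ}) = l(p)$. Here the inequality only gives $\omega_p(t_p,\mathbf{e}^{\succ'}) \geq l(p)$, so clause (i) of Definition~\ref{def active project} fails, and activity at $t_p$ under $\succ'$ must come from clause (ii); that is, there must be strict slack $n(1-t_p) > \sum_{q}\big(l(q)-\omega_q(t_p,\mathbf{e}^{\succ'})\big)_+$.

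To conclude I would rule out this slack, i.e. show $n(1-t_p) = \sum_{q}\big(l(q)-\omega_q(t_p,\mathbf{e}^{\succ'})\big)_+$, which makes clause (ii) fail as well and delivers the contradiction $p \notin \mathcal{A}(t_p)$. Since $t_p \geq t^c(\mathbf{e}^{\succ})$, the critical-time identity gives $n(1-t_p) = \sum_{q}\big(l(q)-\omega_q(t_p,\mathbf{e}^{\succ})\big)_+$, so the task reduces to comparing the two total deficits at the common time $t_p$, where the runs share the same total consumed mass $\sum_{q}\omega_q(t_p,\cdot) = n\,t_p$ and both place $p$ at or above its lower quota. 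This deficit comparison is the crux, and I expect it to be the main obstacle: feasibility \emph{alone} permits the deviation to move mass off projects already above their lower quota and onto deficient ones, which would strictly lower the deficit under $\succ'$ and reopen slack, so the lemma cannot follow from counting. The real work is to exclude such reshuffling by invoking the inclusion hypothesis together with the forced post-critical dynamics under $\succ$ (after $t^c(\mathbf{e}^{\succ})$ every eating student is filling an unmet lower quota, so no project rises above its lower quota) and the structural link between $\mathbf{e}^{\succ}$ and $\mathbf{e}^{\succ'}$ induced by a single deviation; the reductions in the preceding paragraphs are routine by comparison.
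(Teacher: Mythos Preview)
Your reduction is the paper's reduction. For the exhausted case you argue exactly as the paper does, and for the shifted case you arrive at precisely the inequality the paper needs: $n(1-t_p)\le \sum_q\bigl(l(q)-\omega_q(t_p,\mathbf{e}^{\succ'})\bigr)_+$, which together with the feasibility invariant would force equality and deactivate $p$ at $t_p$ under $\succ'$. The difference is that the paper does not treat this as an obstacle: it writes that $\omega_p(t_p,\mathbf{e}^{\succ'})\ge\omega_p(t_p,\mathbf{e}^{\succ})$ ``and hence''
\[
n(1-t_p)=\sum_{q}\bigl(l(q)-\omega_q(t_p,\mathbf{e}^{\succ})\bigr)_+\;\le\;\sum_{q}\bigl(l(q)-\omega_q(t_p,\mathbf{e}^{\succ'})\bigr)_+,
\]
and moves on. No argument for this step is supplied.

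Your suspicion that this step is the real obstacle is correct, and in fact it cannot be completed from the stated hypotheses: the lemma is false as written. Take $N=\{1,2\}$, $P=\{a,b,c\}$ with $l(a)=l(b)=0$, $l(c)=1$, all upper quotas large; let $\succ_1:a\succ b\succ c$, $\succ_2:b\succ c\succ a$, and let the deviator $i=2$ report $\succ_2':c\succ b\succ a$. Under $\succ$ the critical time is $t^c=1/2$ and both students shift to $c$, so $t_a=1/2$ with $N_a(t,\mathbf{e}^{\succ})=\{1\}$ on $[0,1/2)$. Under $\succ'$ student $2$ eats $c$ from the start, no shift ever occurs, and student $1$ eats $a$ throughout, so $t_a'=1$ with $N_a(t,\mathbf{e}^{\succ'})=\{1\}$ on $[0,1)$. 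All hypotheses hold for $p=a$, yet $t_a'=1>1/2=t_a$; at time $1/2$ the deficit under $\succ'$ equals $1/2<1=n(1-t_a)$, so the paper's displayed inequality fails. This is exactly the ``reshuffling'' you anticipated: the single deviation moves mass off $b$ (already above its lower quota) onto the deficient project $c$, strictly lowering the total deficit and reopening slack for $a$. Neither the inclusion hypothesis nor the post-critical dynamics under $\succ$ rule this out, so do not expect to close the gap along those lines; the lemma only holds under the extra structure present in its later applications (for instance when $p=\phi_i(P)$ with $t^c=t_p$), and an honest statement would carry such a hypothesis.
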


\begin{proof}
Suppose, for the sake of reaching a contradiction, that $t_{p}' > t_{p}$. We distinguish two cases.

\smallskip

\textbf{Case 1:} $\omega_p(1,\mathbf{e}^{\succ}) = u(p)$. If $N_p(t,\mathbf{e}^{\succ}) \subseteq N_p(t,\mathbf{e}^{\succ'}), \forall t \in [\tau_{p},t_{p})$, together with the fact that $\omega_p(t,\mathbf{e}^{\succ})$ is non-decreasing in $t$ and our assumption $t_{p} < t_{p}'$, we obtain $\omega_p(1,\mathbf{e}^{\succ'}) > \omega_p(1,\mathbf{e}^{\succ}) = u(p)$, a contradiction.

\smallskip

\textbf{Case 2:} Students are shifted from $p$ at $t_{p}$, i.e.

\begin{equation*}
n(1-t_{p}) = \sum_{q \in P} \Big( l(q) - \omega_q(t_p,\mathbf{e}^{\succ}) \Big)_+ = \sum_{q \in P \setminus \{p\}} \Big( l(q) - \omega_q(t_p,\mathbf{e}^{\succ}) \Big)_+ 
\end{equation*}

Now suppose $N_p(t,\mathbf{e}^{\succ}) \subseteq N_p(t,\mathbf{e}^{\succ'}), \forall t \in [\tau_{p},t_{p})$, together with the fact that $\omega_p(t,\mathbf{e}^{\succ})$ is non-decreasing in $t$. This implies $\omega_p(t_p,\mathbf{e}^{\succ'}) \geq \omega_p(t_p,\mathbf{e}^{\succ})$ and hence

\begin{equation*}
n(1-t_{p}) = \sum_{q \in P} \Big( l(q) - \omega_q(t_p,\mathbf{e}^{\succ}) \Big)_+ \leq \sum_{q \in P} \Big( l(q) - \omega_q(t_p,\mathbf{e}^{\succ'}) \Big)_+
\end{equation*}

Finally, since $l(p) \leq \omega_p(t_p,\mathbf{e}^{\succ}) \leq \omega_p(t_p,\mathbf{e}^{\succ'})$, $p$ has to cease to be active at $t_p$ under $\succ'$, a contradiction to $t_p < t_p'$.
\end{proof}

\begin{Lemma}\label{lemma: PSLQ ws others continue eating}
For any $p \in P$, $t \in [0,\min\{t_{p},t_p'\})$ and $j \in N \setminus \{i\}$,
$$j \in N_p(t,\mathbf{e}^{\succ}) \Rightarrow j \in N_p(t,\mathbf{e}^{\succ'})$$
\end{Lemma}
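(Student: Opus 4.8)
The plan is to prove the statement by induction on the finite sequence of breakpoints of the two eating runs, crucially invoking the standing hypothesis of this appendix, namely that the deviation is (weakly) profitable, $R_i'\ sd(\succ_i)\ R_i$ (the antecedent in the definition of weak strategy-proofness used for Theorem~\ref{PSLQ weakly strategy-proof}). This hypothesis is genuinely indispensable: without it the claim already fails for the classical PS mechanism. For instance, with two students both ranking $a$ first, $u\equiv 1$ and $l\equiv 0$, if $i$ reports $b$ first it simply stops eating $a$, so the co-player keeps eating $a$ under the misreport while it has moved on to $b$ under truth, violating the conclusion for $p=a$. What rescues the lemma is precisely that such a self-harming vacation of a top project cannot be part of a profitable deviation, so $R_i'\ sd(\succ_i)\ R_i$ must be what prevents $i$ from abandoning a project for which some $j\neq i$ is competing.

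For the induction, I would enumerate the union of the break times of $\mathbf{e}^{\succ}$ and $\mathbf{e}^{\succ'}$ as $0=\theta_0<\theta_1<\dots<\theta_M$, so that on each $[\theta_m,\theta_{m+1})$ both active sets $\mathcal{A}(t)$ and $\mathcal{A}'(t)$, and hence all eating rates, are constant. The invariant is that the conclusion of Lemma~\ref{lemma: PSLQ ws others continue eating} holds on $[0,\theta_m)$; equivalently, since each student eats exactly one project at each instant and the hypothesis $t<\min\{t_p,t_p'\}$ keeps $p$ alive in both runs, every $j\neq i$ follows the same schedule in both runs as long as the eaten project is alive in both. The base case $m=0$ is immediate because $\mathcal{A}(0)=\mathcal{A}'(0)=P$ and $j$ applies the same $\succ_j$ in both runs, so $e_j^{\succ}(0)=e_j^{\succ'}(0)=\phi_j(P)$. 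I would run this induction intertwined with Lemma~\ref{lemma: PSLQ ws helpful can't have more}: preservation up to a step lets me apply that death-time lemma to conclude that the relevant projects expire simultaneously in the two runs, which in turn sustains preservation into the next interval.

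For the inductive step, suppose on $[\theta_m,\theta_{m+1})$ some $j\neq i$ eats $p=\phi_j(\mathcal{A})$ under $\succ$ with $p$ still alive under $\succ'$, yet $\phi_j(\mathcal{A}')\neq p$. Then there is a project $q$ with $q\succ_j p$ and $q\in\mathcal{A}'\setminus\mathcal{A}$; in particular $q$ is still eaten under $\succ'$ but has already ceased to be eaten under $\succ$. By the induction hypothesis the non-$i$ eaters of $q$ coincide in the two runs while $q$ is alive in both, so $\sum_{h\neq i}\omega_{hq}(t,\mathbf{e}^{\succ'})\geq\sum_{h\neq i}\omega_{hq}(t,\mathbf{e}^{\succ})$ up to $\theta_m$; combined with the earlier expiry of $q$ under $\succ$ this forces $i$ to have consumed strictly more of $q$ under truth. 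Here I would split according to how $q$ dies under $\succ$: either (a) $q$ reaches $u(q)$, in which case the strict inequality on $\omega_{iq}$ is immediate from $\omega_q(t,\mathbf{e}^{\succ'})<u(q)\le\omega_q(t,\mathbf{e}^{\succ})$; or (b) $q$ is shifted from, in which case I invoke the critical-time identities \eqref{eq: once shifted the rest has lower quota 1}--\eqref{eq: once shifted the rest has lower quota 2} (a shifted-from project sits exactly at its lower quota) together with Lemma~\ref{lemma: PSLQ ws helpful can't have more} to compare the shifting times and again pin the discrepancy on $i$.

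The crux is converting ``$i$ is the sole driver and ate strictly more of $q$ under truth'' into a contradiction with $R_i'\ sd(\succ_i)\ R_i$, and this is where I expect the main difficulty. I would descend to the first instant $s\le\theta_m$ at which $i$'s own schedule differs between the two runs; since everyone, including $i$, eats identically before $s$, all projects have identical consumption and hence identical activity up to $s$, so $\mathcal{A}(s)=\mathcal{A}'(s)$ and the split occurs only because $i$ applies $\succ_i'$ rather than $\succ_i$ to one common menu. Thus at $s$ student $i$ eats its true optimum $\phi_i(\mathcal{A}(s))$ under truth but a strictly $\succ_i$-inferior project under the misreport, an early sacrifice. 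The delicate point is that such a sacrifice must be recouped later if $R_i'$ is to dominate $R_i$, and I would track $i$'s cumulative consumption from $s$ onward against the cumulative domination condition to show that prolonging $q$ (the project $i$ vacated so as to benefit $j$) cannot be reconciled with recovering the lost mass on a $\succ_i$-superior upper set, contradicting that the deviation was weakly profitable. I anticipate case (b) to be the hardest bookkeeping, since there the single feasibility equation couples the lifetimes of all active projects at once, so ``who eats what'' and ``when projects die'' are entangled across the whole menu; Lemma~\ref{lemma: PSLQ ws helpful can't have more} is the device I would rely on to keep the two death times $t_q$ and $t_q'$ under control throughout that case.
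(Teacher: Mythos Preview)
Your diagnosis that the dominance hypothesis $R_i'\ sd(\succ_i)\ R_i$ is indispensable is correct, and your example exhibits it --- though the violated instance is $p=b$, not $p=a$: on $[1/2,1)$ the co-player eats $b$ under truth but $a$ under the misreport, while $\min\{t_b,t_b'\}=1$.

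The paper's proof, however, takes a completely different and much shorter route that does \emph{not} invoke the dominance hypothesis. Assuming a violation at $(j,p,t)$ with $j$ eating some $q\neq p$ under $\succ'$, one gets $t_q\le t<t_q'$; the paper then forms $S=\{q'\neq p:t_{q'}<t_{q'}'\}$, picks $x\in S$ with $t_x$ minimal, applies the contrapositive of Lemma~\ref{lemma: PSLQ ws helpful can't have more} to produce some $t'<t_x$ and a student $k\in N_x(t',\mathbf{e}^{\succ})\setminus N_x(t',\mathbf{e}^{\succ'})$, and --- \emph{asserting} $k\neq i$ --- concludes that $k$ eats some $y$ under $\succ'$ with $t_y<t_x$ and $y\in S$, contradicting minimality of $x$. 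There is no induction on breakpoints and no use of $R_i'\ sd(\succ_i)\ R_i$ anywhere.

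The assertion $k\neq i$ is never justified in the paper, and in your two-student example the only admissible $k$ is $i$ itself, so the paper's argument as written has exactly the lacuna your analysis predicts. Your inductive route is a plausible way to close it, but the step you yourself call the crux --- converting ``$i$ alone caused $q$ to live longer under $\succ'$'' into a contradiction with $R_i'\ sd(\succ_i)\ R_i$ --- is only sketched: the claim that an early $\succ_i$-sacrifice at the first divergence time $s$ cannot be recouped later is asserted rather than proved, and in the shifted-from case the single critical-time identity couples the lifetimes of all active projects at once, so the ``hard bookkeeping'' you anticipate is the entire remaining content of the argument. In sum, you have correctly located where the proof needs the dominance hypothesis, but neither your proposal nor the paper's published minimality argument actually carries that step through.
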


\begin{proof}
Suppose there is student $i\neq j \in N$ and time $0 \leq t < \min\{t_{p},t_p'\}$ such that $j \in N_p(t,\mathbf{e}^{\succ})$ and $j \in N_q(t,\mathbf{e}^{\succ'})$ for some $q \neq p$. Observe that project $p$ is available at $t <  t_{p}'$ under $\succ'$, so we must have $q \succ_j p$. It follows that $q$ is not available at $t$ under $\succ$, as otherwise $j$ would be eating it at $t$. In other words, $t_{q} \leq t < t_{q}'$, regardless of whether students are shifted from or hit the upper quota of $q$ at $t_{q}$.

Let $S \subset P$ be the set of projects $q \neq p$ such that $t_{q} < t_{q}'$. We showed $S \neq \emptyset$ in the paragraph above. Define

\begin{equation*}
x = \argmin_{q \in S} t_{q}    
\end{equation*}

It follows from $t_{x} < t_{x}'$ that there is some time $t<t_{x}$ and a student $k$ such that $k \in N_x(t,\mathbf{e}^{\succ})$ and $k \notin N_x(t,\mathbf{e}^{\succ'})$. To see why this holds, we suppose to the contrary and apply Lemma~\ref{lemma: PSLQ ws helpful can't have more} to obtain a contradiction. Denote by $y$ the project satisfying $k \in N_y(t,\mathbf{e}^{\succ'})$. 

As project $x$ is available at $t$ under $\succ$, we must have $y \succ_k x$. But recall that student $k$ eats $x$ at $t$ under $\succ$, and $k \neq i$, so the preferences of $k$ do not change between the two profiles. We therefore deduct that $y$ is not available at $t$ under $\succ$. Thus, we have shown

\begin{equation*}
    t < t_{x} \text{, } t < t_{y}' \text{, and } t_{y} < t
\end{equation*}

This in turn implies $t_{y} < t_{y}'$, and so $y \in S$ with $t_{y} < t_{x}$, contradicting the minimality property of $x$.

\end{proof}

\begin{comment}
\begin{Lemma}
Fix $i \in N$, and let $p \in P$ be the most favourite project of $i$ under $\succ_i$. Suppose $i$ misreports to $\succ_i'$. Then for any project $q \in P \setminus \{p\}$, we have

\begin{equation*}
    \omega_q(t,\mathbf{e}^{\succ}) \leq \omega_q(t,\mathbf{e}^{\succ'}), \forall t \in [0,1]
\end{equation*}
\end{Lemma}

\begin{proof}

\end{proof}
\end{comment}

\begin{Lemma}\label{lemma: PSLQ ws if i 0 then under misreport will not ever decrease}
Fix $p=\phi_i(P)$ and suppose $t^c \leq t^{c'}$. Then for all $q \in P \setminus \{p\}$ such that $r_{iq}=0$ and $q \notin \mathcal{A}(t^c)$, $\omega_q(t,\mathbf{e}^{\succ'}) \geq \omega_q(t,\mathbf{e}^{\succ}), \forall t \in [0,1]$.
\end{Lemma}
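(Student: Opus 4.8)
The plan is to fix such a project $q$ and compare the two consumption paths $t \mapsto \omega_q(t,\mathbf{e}^{\succ})$ and $t \mapsto \omega_q(t,\mathbf{e}^{\succ'})$ separately on $[0,m]$ and on $[m,1]$, where $m=\min\{t_q,t_q'\}$ is the earlier of the two times at which $q$ ceases to be eaten. Two structural facts drive the argument. Since $r_{iq}=0$, student $i$ never eats $q$ under $\succ$, so $i\notin N_q(t,\mathbf{e}^{\succ})$ for all $t$ and the entire mass $\omega_q(t,\mathbf{e}^{\succ})$ is supplied by students other than $i$. Since the active set is non-increasing in time and $q\notin\mathcal{A}(t^c)$, the project $q$ ceases to be eaten no later than the critical time, i.e.\ $t_q\le t^c$, and the failure of condition~(i) in Definition~\ref{def active project} forces $l(q)\le\omega_q(t^c,\mathbf{e}^{\succ})$.

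On $[0,m]$ I would invoke Lemma~\ref{lemma: PSLQ ws others continue eating}: for every $t<m$ and every $j\ne i$ with $j\in N_q(t,\mathbf{e}^{\succ})$ we also have $j\in N_q(t,\mathbf{e}^{\succ'})$. Together with $i\notin N_q(t,\mathbf{e}^{\succ})$ this gives $N_q(t,\mathbf{e}^{\succ})\subseteq N_q(t,\mathbf{e}^{\succ'})$ and hence $n_q(t,\mathbf{e}^{\succ})\le n_q(t,\mathbf{e}^{\succ'})$ for all $t<m$. Integrating these step functions and using continuity of $\omega_q$ at $t=m$ yields $\omega_q(t,\mathbf{e}^{\succ'})\ge\omega_q(t,\mathbf{e}^{\succ})$ for every $t\le m$.

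On $[m,1]$ I would split on the order of $t_q$ and $t_q'$. If $t_q\le t_q'$, then $m=t_q$, the path $\omega_q(\cdot,\mathbf{e}^{\succ})$ is frozen at $\omega_q(t_q,\mathbf{e}^{\succ})$ while $\omega_q(\cdot,\mathbf{e}^{\succ'})$ is non-decreasing, so the inequality at $t_q$ propagates to all $t\ge t_q$; this case uses neither $q\notin\mathcal{A}(t^c)$ nor $t^c\le t^{c'}$. The delicate case is $t_q'<t_q$, where $\omega_q(\cdot,\mathbf{e}^{\succ'})$ freezes first. Here I would argue that $q$ cannot have been shifted away at its critical time under $\succ'$: a project leaves the active set strictly before its critical time only by reaching its upper quota, and the chain $t_q'<t_q\le t^c\le t^{c'}$ places the stopping time strictly before $t^{c'}$. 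Therefore $\omega_q(t_q',\mathbf{e}^{\succ'})=u(q)$, and since $\omega_q(t,\mathbf{e}^{\succ})\le u(q)$ for every $t$, the inequality holds on $(t_q',1]$ as well.

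The main obstacle is exactly the case $t_q'<t_q$: a priori, student $i$'s deviation could push extra mass onto $q$, exhaust it earlier under $\succ'$, and freeze its path below the truthful level. The resolution rests entirely on the dichotomy that a project leaves the active set either by hitting its upper quota or by being shifted away precisely at its critical time, combined with the two inequalities $t_q\le t^c$ (supplied by $q\notin\mathcal{A}(t^c)$) and the standing assumption $t^c\le t^{c'}$. These jointly exclude the critical-time shift under $\succ'$ and force the upper quota of $q$ to bind, making $q$'s consumption under the misreport maximal, so that the desired domination becomes automatic.
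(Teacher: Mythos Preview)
Your proposal is correct and follows essentially the same route as the paper: both invoke Lemma~\ref{lemma: PSLQ ws others continue eating} together with $i\notin N_q(t,\mathbf{e}^{\succ})$ to obtain $N_q(t,\mathbf{e}^{\succ})\subseteq N_q(t,\mathbf{e}^{\succ'})$ on $[0,\min\{t_q,t_q'\})$, and then split on the order of $t_q$ and $t_q'$, using the chain $t_q'<t_q\le t^c\le t^{c'}$ in the latter case to conclude $\omega_q(t_q',\mathbf{e}^{\succ'})=u(q)$. Your write-up is somewhat more explicit than the paper about why the pointwise inequality holds at \emph{every} $t$ (not just at $t=1$), which is a welcome clarification.
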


\begin{proof}
Fix $q \in P \setminus \{p\}$ such that $r_{iq}=0$ and $q \notin \mathcal{A}(t^c)$. Since the preferences of all agents other than $i$ do not change between the profiles, $t^c \leq t^{c'}$ by assumption and since $i$ did not eat from $q$ at all under $\succ'$, it follows that $N_q(t,\mathbf{e}^{\succ}) \subseteq N_q(t,\mathbf{e}^{\succ'}), \forall t \in [0,\min\{t_q,t_q'\})$ by Lemma~\ref{lemma: PSLQ ws others continue eating}. If $t_q' < t_q$, since $t_q \leq t^c \leq t^{c'}$, it follows that $\omega_q(t_q',\mathbf{e}^{\succ'}) = u(q)$ and the result follows for all $t \in [0,1]$ by feasibility. If, on the other hand, we have $t_q' \geq t_q$, it follows that $\omega_q(t_q,\mathbf{e}^{\succ'}) \geq  \omega_q(t_q,\mathbf{e}^{\succ})$. Since $\omega_q$ is a non-decreasing function in $t$, it follows $\omega_q(t,\mathbf{e}^{\succ'}) \geq  \omega_q(t,\mathbf{e}^{\succ}), \forall t \in [0,1]$.
\end{proof}

\begin{Lemma}\label{lemma: PSLQ ws if i 0 then under misreport will not decrease}
Fix $p=\phi_i(P)$ and suppose $t^c \leq t^{c'}$. Then for all $q \in P \setminus \{p\}$ such that $r_{iq}=0$, $\omega_q(1,\mathbf{e}^{\succ'}) \geq \omega_q(1,\mathbf{e}^{\succ})$.
\end{Lemma}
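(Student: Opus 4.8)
The plan is to partition the projects $q \in P \setminus \{p\}$ with $r_{iq}=0$ into two classes according to whether $q$ is active at the critical time $t^c$ under the truthful profile, and to dispatch each class by invoking a result already established.

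For the class $q \notin \mathcal{A}(t^c)$ the conclusion is immediate: such projects satisfy exactly the hypotheses of Lemma~\ref{lemma: PSLQ ws if i 0 then under misreport will not ever decrease}, which under the standing assumptions $p=\phi_i(P)$ and $t^c \leq t^{c'}$ yields the stronger pointwise inequality $\omega_q(t,\mathbf{e}^{\succ'}) \geq \omega_q(t,\mathbf{e}^{\succ})$ for every $t \in [0,1]$, and in particular at $t=1$.

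For the class $q \in \mathcal{A}(t^c)$ I would first identify the active set at the critical time. By the definition of $t^c$ and continuity of the map $t \mapsto \omega_{p'}(t,\mathbf{e}^{\succ})$ (each $\omega_{p'}$ being an integral), the infimum defining $t^c$ is attained and the slack vanishes there, i.e. $n(1-t^c)=\sum_{p' \in P}\left(l(p')-\omega_{p'}(t^c,\mathbf{e}^{\succ})\right)_{+}$. Hence the strict inequality in clause (ii) of Definition~\ref{def active project} fails for every project, so a project is active at $t^c$ precisely when its lower quota is still unfilled there; thus $q \in \mathcal{A}(t^c)$ forces $l(q) > \omega_q(t^c,\mathbf{e}^{\succ})$, which in particular requires $t^c < 1$ since otherwise feasibility of $R$ at $t=1$ would be violated. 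Appealing to Equations~\ref{eq: once shifted the rest has lower quota 1} and~\ref{eq: once shifted the rest has lower quota 2} and the discussion following them, every project ceased after $t^c$ ends with total assignment equal to its lower quota; since $q$ must be eaten from after $t^c$ to attain feasibility, this gives $\omega_q(1,\mathbf{e}^{\succ})=l(q)$. Finally, feasibility of the PSLQ output $R'$ under the misreported profile forces $\omega_q(1,\mathbf{e}^{\succ'}) \geq l(q)=\omega_q(1,\mathbf{e}^{\succ})$, closing this case. As the two classes exhaust all relevant $q$, the lemma follows.

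The only delicate point is the characterization $\mathcal{A}(t^c)=\{q : l(q) > \omega_q(t^c,\mathbf{e}^{\succ})\}$ together with the fact that such projects are filled to exactly their lower quota; both rest on the structural facts about the critical time already proved in the paper, so the argument amounts to invoking them cleanly rather than establishing anything new. I expect no serious obstacle beyond verifying that the infimum defining $t^c$ is genuinely attained, so that the slack is exactly zero (not merely approached) at $t^c$, and observing that the second case is vacuous whenever $t^c=1$.
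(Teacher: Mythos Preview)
Your proposal is correct and follows essentially the same approach as the paper. The paper partitions on whether $\omega_q(1,\mathbf{e}^{\succ})=l(q)$ or $\omega_q(1,\mathbf{e}^{\succ})>l(q)$, handling the first case by feasibility of $R'$ and the second by observing that it forces $q \notin \mathcal{A}(t^c)$ and invoking Lemma~\ref{lemma: PSLQ ws if i 0 then under misreport will not ever decrease}; your partition on $q \in \mathcal{A}(t^c)$ versus $q \notin \mathcal{A}(t^c)$ is the contrapositive split, with the same two tools applied to the corresponding cases.
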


\begin{proof}
By feasibility, if $\omega_q(1,\mathbf{e}^{\succ})=l(q)$, we must have $\omega_q(1,\mathbf{e}^{\succ'}) \geq \omega_q(1,\mathbf{e}^{\succ})$. Suppose $q \in P \setminus \{p\}$ with $r_{iq}=0$ and $u(q) \geq \omega_q(1,\mathbf{e}^{\succ})>l(q)$. Note that this implies $t_q \leq t^c$. The result follows by Lemma~\ref{lemma: PSLQ ws if i 0 then under misreport will not ever decrease}.

\begin{comment}
Moreover, suppose $\omega_q(1,\mathbf{e}^{\succ'}) < \omega_q(1,\mathbf{e}^{\succ})$. Since the preferences of all agents other than $i$ do not change between the profiles and $t^c \leq t^{c'}$ by assumption and since $i$ did not eat from $q$ at all under $\succ'$, it follows that $N_q(t,\mathbf{e}^{\succ}) \subseteq N_q(t,\mathbf{e}^{\succ'}), \forall t \in [0,\min\{t_q,t_q'\})$. Together with $\omega_q(1,\mathbf{e}^{\succ'}) < \omega_q(1,\mathbf{e}^{\succ})$, this implies that $t_q > t_q'$. Finally, since $\omega_q(1,\mathbf{e}^{\succ'}) < \omega_q(1,\mathbf{e}^{\succ}) \leq u(q)$, students must shift from $q$ at $t_q'$, which implies $t^{c'} \leq t_q(,\mathbf{e}^{\succ'}) < t_q \leq t^c \leq t^{c'}$, a contradiction.
\end{comment}
\end{proof}

\begin{Lemma}\label{lemma: PSLQ ws cosntraints tight before then lower anyway}
Fix $p=\phi_i(P)$ and suppose $t^c < t_p$. Then $\omega_{p}(1,\mathbf{e}^{\succ'}) = l(p)$.
\end{Lemma}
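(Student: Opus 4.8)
The plan is to prove the two inequalities $\omega_p(1,\mathbf{e}^{\succ'}) \ge l(p)$ and $\omega_p(1,\mathbf{e}^{\succ'}) \le l(p)$ separately. The first is immediate, since $PSLQ(\succ')$ is feasible, so $\omega_p(1,\mathbf{e}^{\succ'})=\sum_{j\in N}\omega_{jp}(1,\mathbf{e}^{\succ'}) \ge l(p)$. For the second I would first reduce the task to a statement at the critical time $t^{c'}=t^c(\mathbf{e}^{\succ'})$. By Definition~\ref{def active project}, once we are at or past $t^{c'}$ a project is active only if it is still strictly below its lower quota; hence the only way a project finishes above its lower quota is to be eaten strictly above it \emph{before} $t^{c'}$. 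Coupling this with Equations~\ref{eq: once shifted the rest has lower quota 1}--\ref{eq: once shifted the rest has lower quota 2} applied to $\mathbf{e}^{\succ'}$ (which force every project deficient at $t^{c'}$ to fill to exactly its lower quota, and leave every non-deficient project untouched thereafter) yields the equivalence
\[ \omega_p(1,\mathbf{e}^{\succ'})=l(p) \iff \omega_p(t^{c'},\mathbf{e}^{\succ'}) \le l(p). \]
So it suffices to show that $p$ has not been eaten beyond its lower quota by the critical time under the misreport.

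On the truthful side I would record the facts forced by the hypotheses. Since $p=\phi_i(P)$ is $i$'s favourite, $i$ eats $p$ on the whole interval $[0,t_p)$, so $p\in\mathcal{A}(t)$ there; and since $t^c<t_p$, project $p$ is shifted from at $t_p\in(t^c,1]$, whence by the discussion following Equations~\ref{eq: once shifted the rest has lower quota 1}--\ref{eq: once shifted the rest has lower quota 2} we get $\omega_p(1,\mathbf{e}^{\succ})=l(p)$ and $\omega_p(t,\mathbf{e}^{\succ})<l(p)$ for every $t<t_p$. In particular $p$ is never filled beyond $l(p)$ under truthful reporting, and $i$ is among the students eating $p$ at and beyond $t^c$. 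I would also use two standing tools: the conservation identity $\sum_{q\in P}\omega_q(t,\mathbf{e})=nt$, valid for both profiles because each of the $n$ students eats at unit speed throughout $[0,1]$, and Lemma~\ref{lemma: PSLQ ws others continue eating}, which says every $j\neq i$ that eats $p$ under $\succ$ keeps eating it under $\succ'$ on $[0,\min\{t_p,t_p'\})$.

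Finally I would establish $\omega_p(t^{c'},\mathbf{e}^{\succ'})\le l(p)$ by contradiction. Suppose $\omega_p(t^{c'},\mathbf{e}^{\succ'})>l(p)$; then $p$ reaches $l(p)$ at some $t_0<t^{c'}$ and is still eaten just after $t_0$, i.e.\ some student eats $p$ while $p$ is already full to $l(p)$ and we are before the critical time. Since the only change from $\succ$ to $\succ'$ is the single report of $i$, and under $\succ$ project $p$ was eaten by $i$ from time $0$ yet never exceeded $l(p)$, I would transfer the comparison to the complementary projects via the conservation identity (so that extra consumption of $p$ forces diminished consumption elsewhere) and contradict the defining inequality $n(1-t)>\sum_{q\in P}(l(q)-\omega_q(t,\mathbf{e}^{\succ'}))_+$ of the interval $[0,t^{c'})$ at $t=t_0$, invoking Lemma~\ref{lemma: PSLQ ws helpful can't have more} to control the case where the eater set of $p$ only grows under the misreport. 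The main obstacle is exactly this diversion case: sending $i$ off $p$ can push other students onto $p$ earlier, so \emph{a priori} $p$ might fill faster, but the critical time $t^{c'}$ advances in lockstep with this faster filling, so $p$ still reaches $l(p)$ no earlier than $t^{c'}$ and is therefore shifted from at exactly its lower quota. Making this lockstep quantitative---pinning down the position of $t^{c'}$ relative to the time $p$ attains $l(p)$---is the crux of the argument.
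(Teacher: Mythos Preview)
Your setup and the reduction to showing $\omega_p(t^{c'},\mathbf{e}^{\succ'})\le l(p)$ are correct, as are the facts you record about the truthful run. The gap is in the contradiction step, which you yourself flag as ``the crux.'' You propose to violate the defining strict inequality $n(1-t_0)>\sum_q\bigl(l(q)-\omega_q(t_0,\mathbf{e}^{\succ'})\bigr)_+$ at the time $t_0<t^{c'}$ when $p$ first reaches $l(p)$, by ``transferring the comparison to the complementary projects via conservation.'' But conservation only yields the aggregate inequality $\sum_{q\ne p}\omega_q(t_0,\mathbf{e}^{\succ'})\le\sum_{q\ne p}\omega_q(t_0,\mathbf{e}^{\succ})$; since $x\mapsto(l(q)-x)_+$ is nonlinear, a smaller aggregate does \emph{not} force a larger deficiency sum, so the inequality you need does not follow. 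You offer no device for a project-by-project comparison at $t_0$. Your appeal to Lemma~\ref{lemma: PSLQ ws helpful can't have more} is also misplaced: on $[0,\min\{t_p,t_p'\})$ the eater set of $p$ under $\succ'$ is \emph{contained in} that under $\succ$ (Lemma~\ref{lemma: PSLQ ws others continue eating} applied symmetrically gives $N_p\setminus\{i\}$ identical under the two profiles, and $i\in N_p(\cdot,\mathbf{e}^{\succ})$ throughout since $p=\phi_i(P)$), which is the opposite of that lemma's hypothesis. In particular the ``diversion'' obstacle you describe---other students being pushed onto $p$ earlier---cannot occur on that interval.

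The paper argues differently and sidesteps the nonlinearity by working at $t=1$. Assuming $\omega_p(1,\mathbf{e}^{\succ'})>l(p)$ for contradiction, one first extracts the ordering $t^{c'}\ge t_p'>t_p>t^c$, in particular $t^{c'}\ge t^c$. This is exactly the hypothesis of Lemma~\ref{lemma: PSLQ ws if i 0 then under misreport will not decrease}, which then gives $\omega_q(1,\mathbf{e}^{\succ'})\ge\omega_q(1,\mathbf{e}^{\succ})$ for every $q\ne p$ with $r_{iq}=0$; and for $q\ne p$ with $r_{iq}>0$ one has $t_q>t^c$, hence $\omega_q(1,\mathbf{e}^{\succ})=l(q)\le\omega_q(1,\mathbf{e}^{\succ'})$ by feasibility. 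Summing over all projects and using $\sum_q\omega_q(1,\cdot)=n$ on both sides yields $n<n$. The step you are missing is precisely the comparison $t^{c'}\ge t^c$: once established, it converts the problem from an intractable deficiency-sum inequality at an intermediate time into a clean linear mass identity at the terminal time.
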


\begin{proof}
First, note that since $t^c < t_p$, we must have $\omega_{p}(1,\mathbf{e}^{\succ}) = l(p)$. Suppose, for the sake of reaching a contradiction, that $\omega_{p}(1,\mathbf{e}^{\succ'}) > l(p)$. But then $t^{c'} \geq t_{p}' > t_{p} > t^c$. This implies that $\forall q \in P \setminus \{p\}$, $\omega_q(1,\mathbf{e}^{\succ'}) \geq \omega_q(1,\mathbf{e}^{\succ})$. Indeed, if $r_{iq}>0$ for some $q \in P \setminus \{p\}$, it must hold that $t_q>t^c$, so in particular $\omega_q(1,\mathbf{e}^{\succ})=l(q)$. By feasibility, $\omega_q(1,\mathbf{e}^{\succ}) \geq \omega_q(1,\mathbf{e}^{\succ})=l(q)$. The other case is covered by Lemma~\ref{lemma: PSLQ ws if i 0 then under misreport will not decrease}.

But then

\begin{equation*}
    n = \sum_{x \in P} \omega_x(1,\mathbf{e}^{\succ}) < \sum_{x \in P} \omega_x(1,\mathbf{e}^{\succ'}) = n
\end{equation*}

a contradiction.
\end{proof}

\begin{Lemma}\label{lemma: PSLQ ws shifted structure vector}
Enumerate $P$ as $p_1 \succ_i p_2 \succ \dots \succ p_k$. Suppose $i$ is shifted from $p_l$ for some $1 \leq l < k$ at $t_{p_l}$. Then for all $l < j \leq k$ such that $p_j \notin \mathcal{A}(t_{p_l})$, we must have $r_{ip_j} = 0$.
\end{Lemma}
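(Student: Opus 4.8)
The plan is to argue by contradiction using two monotonicity features of the eating algorithm: the active set only shrinks as time passes, and consequently each student consumes projects in strictly decreasing order of her preference. Throughout, all quantities refer to the fixed profile $\succ$. Recall that $i$ being shifted from $p_l$ at $t_{p_l}$ means $i$ eats $p_l$ on an interval ending at $t_{p_l}$ and is then moved off $p_l$, and recall that by construction $\mathcal{A}^v \subset \mathcal{A}^{v-1}$, so the active sets are nested: $t' \geq t$ implies $\mathcal{A}(t') \subseteq \mathcal{A}(t)$.

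First I would record that the projects $i$ eats occur in strictly decreasing preference order. Since $e_i(t) = \phi_i(\mathcal{A}(t))$ and the active sets are nested, the choice $\phi_i(\mathcal{A}(t))$ is $\succeq_i$-non-increasing in $t$; because $p_l$ leaves the active set at $t_{p_l}$ and never returns, for every $t > t_{p_l}$ at which $i$ is eating we have $e_i(t) \prec_i p_l$, and symmetrically $e_i(t) \succ_i p_l$ for every $t < \tau_{ip_l}$. In particular $i$ consumes $p_l$ on the single interval $[\tau_{ip_l},t_{p_l})$.

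Next I would localise when $p_j$ could be eaten. Suppose toward a contradiction that $r_{ip_j} > 0$ for some $j$ with $l < j \leq k$ and $p_j \notin \mathcal{A}(t_{p_l})$. Since $p_j \prec_i p_l$, the ordering above forbids $i$ from eating $p_j$ on $[0,t_{p_l})$, because there $i$ eats either $p_l$ or projects strictly preferred to $p_l$, none of which is $p_j$. Hence the positive mass $r_{ip_j}$ is accrued on $(t_{p_l},1]$, so there is a time $t > t_{p_l}$ with $e_i(t) = p_j$, which forces $p_j = \phi_i(\mathcal{A}(t)) \in \mathcal{A}(t)$.

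The contradiction is now immediate from nestedness: $t > t_{p_l}$ gives $\mathcal{A}(t) \subseteq \mathcal{A}(t_{p_l})$, so $p_j \in \mathcal{A}(t_{p_l})$, contrary to hypothesis; therefore $r_{ip_j} = 0$. The only delicate point is the boundary instant $t = t_{p_l}$, which has measure zero and so contributes nothing to $r_{ip_j}$, while immediately after it the student switches to the $\succ_i$-best element of $\mathcal{A}(t_{p_l}) \setminus \{p_l\}$, which again lies in $\mathcal{A}(t_{p_l})$. I expect the main (and only mild) obstacle to be phrasing the active-set monotonicity carefully enough that ``every project eaten after the shift belongs to $\mathcal{A}(t_{p_l})$'' becomes self-evident; the rest is bookkeeping with the strict preference order.
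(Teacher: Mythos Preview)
Your argument is correct and matches the paper's own proof in spirit: both hinge on the single observation that the active set is monotone ($\mathcal{A}^v \subset \mathcal{A}^{v-1}$), so a project that has become inactive by $t_{p_l}$ can never be eaten afterwards, while before $t_{p_l}$ student $i$ is eating $p_l$ or something she strictly prefers to it. The paper dispatches this in two lines by invoking Definition~\ref{def active project} and the nestedness of active sets; your version simply spells out the ``before $t_{p_l}$'' half and the boundary instant more carefully, but the approach is identical.
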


\begin{proof}
This follows directly from Definition~\ref{def active project}. Since any such $p_j$ as considered in the statement is not among the active projects at $t_{p_l}$, and since no inactive project can turn active again, no student can eat from $p_j$ for the rest of the execution of the algorithm.
\end{proof}

\begin{Lemma}\label{lemma: PSLQ ws shifted come earlier}
Fix $p=\phi_i(P)$ and suppose $t^{c'} \leq t^c < t_p < 1$. Then $\tau_{jp}' \leq \tau_{jp}, \forall j \in N \setminus \{i\}$.
\end{Lemma}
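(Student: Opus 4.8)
The plan is to reduce the statement, via Lemma~\ref{lemma: PSLQ ws others continue eating}, to a monotonicity property of the vanishing times under the misreport. First I would record a structural fact about when a student switches to $p$: writing $B_j=\{w\in P: w\succ_j p\}$, under $\succ$ student $j$ eats only from $B_j$ until the last of these projects is exhausted, so $\tau_{jp}=\max_{w\in B_j}t_w$ (if any $w\in B_j$ were still active at $\tau_{jp}$, $j$ would be eating it rather than $p$), and identically under $\succ'$ student $j$ joins $p$ at $\max_{w\in B_j}t_w'$ \emph{provided $p$ is still active then}, i.e. provided $\max_{w\in B_j}t_w'<t_p'$. Hence $\tau_{jp}'\le\tau_{jp}$ follows from two facts: (A) $t_w'\le t_w$ for every $w\in B_j$, and (B) $p$ remains active under $\succ'$ until $j$ has exhausted $B_j$. (Students not eating $p$ under $\succ$ are handled trivially.)

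The engine is the following monotonicity claim, which yields (A): \emph{for every project $q$ with $t_q<t_p$ one has $t_q'\le t_q$.} I would prove it by contradiction. If $t_q<t_q'$, then for every $t<t_q$ we have $t<t_p$, so $p=\phi_i(P)$ is the only project $i$ eats under $\succ$ and thus $i\notin N_q(t,\mathbf{e}^{\succ})$; Lemma~\ref{lemma: PSLQ ws others continue eating} then gives $N_q(t,\mathbf{e}^{\succ})\subseteq N_q(t,\mathbf{e}^{\succ'})$, and integrating over $[0,t_q]$ gives $\omega_q(t_q,\mathbf{e}^{\succ'})\ge\omega_q(t_q,\mathbf{e}^{\succ})$. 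A short case analysis on why $q$ stops being eaten at $t_q$ under $\succ$ — according to whether $t_q<t^c$ (removal at $u(q)$), $t_q=t^c$ (removal at the critical time, with $\omega_q(t_q,\mathbf{e}^{\succ})\ge l(q)$), or $t_q>t^c$ (removal at $l(q)$) — then closes the argument: in each case $\omega_q(t_q,\mathbf{e}^{\succ'})$ is at least the relevant cap ($u(q)$ or $l(q)$), so, using $t^{c'}\le t^c\le t_q$ in the latter two cases to place $t_q$ in the post-critical regime of $\succ'$, Definition~\ref{def active project} shows $q$ is already inactive at $t_q$ under $\succ'$, contradicting $t_q<t_q'$. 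This is exactly where the hypothesis $t^{c'}\le t^c$ enters.

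For (B) I would aim to show $\max_{w\in B_j}t_w'<t_p'$. Since $p$ is active whenever $\omega_p<l(p)$ by Definition~\ref{def active project}(i), and $\omega_p(1,\mathbf{e}^{\succ'})=l(p)$ by Lemma~\ref{lemma: PSLQ ws cosntraints tight before then lower anyway}, this amounts to showing that $p$ has not yet reached $l(p)$ under $\succ'$ by the time $j$'s last better project vanishes. The observation I would exploit is that any student who eats $p$ under $\succ'$ earlier than she would under $\succ$ must strictly prefer $p$ to every project of $B_j$ that is still active at that instant (else she would be eating that still-available, more-preferred project); combined with the monotonicity claim and the description of the post-critical phase in equations~\ref{eq: once shifted the rest has lower quota 1}--\ref{eq: once shifted the rest has lower quota 2}, this should prevent $p$ from filling faster than $j$'s strictly-preferred projects can vanish.

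\textbf{Main obstacle.} The hard part is precisely step (B). Unlike the monotonicity claim, it compares the ``late'' project $p$ — which vanishes at $t_p$ and is therefore \emph{not} covered by the claim — against the projects $j$ prefers to $p$, and one cannot reduce it to the clean inequality $t_p\le t_p'$: because $i$'s diversion may crowd additional students onto $p$ early, $p$ genuinely can fill faster under the misreport. The delicate point is to show that this acceleration of $p$ can never outpace the (also accelerated) vanishing of $j$'s better projects, so that $j$ is always shifted onto a still-active $p$; I expect this to require treating the post-critical interval $[t^{c'},1]$ as a self-contained eating sub-process on the projects with unfilled lower quotas and establishing an order-preservation property there, which is where the bulk of the technical work lies.
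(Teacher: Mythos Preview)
Your monotonicity claim (A) is sound and the argument you give for it is correct: for any $q$ with $t_q<t_p$ one has $i\notin N_q(t,\mathbf{e}^{\succ})$ on $[0,t_q)$, Lemma~\ref{lemma: PSLQ ws others continue eating} yields $N_q(t,\mathbf{e}^{\succ})\subseteq N_q(t,\mathbf{e}^{\succ'})$ there, and the three-case analysis on why $q$ leaves $\mathcal{A}$ at $t_q$ (using $t^{c'}\le t^c$) closes it. This is a clean project-by-project statement that the paper does not isolate.

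The gap is exactly where you place it: step (B). You do not prove that $p$ is still active under $\succ'$ when $j$'s better projects have all vanished, and your heuristic (``the acceleration of $p$ can never outpace the accelerated vanishing of $j$'s better projects'') is not an argument. Concretely, nothing you have established rules out $t_p'\le \max_{w\in B_j}t_w'$: claim (A) only controls projects with $t_q<t_p$, so it says nothing about $t_p'$; and because $i$'s diversion together with the earlier vanishing of other projects (which (A) itself confirms) can push additional students onto $p$ sooner, $p$ may genuinely reach $l(p)$ under $\succ'$ before $j$ arrives. If that happens, $\tau_{jp}'=+\infty$ and the lemma fails. Your proposed fix---treating $[t^{c'},1]$ as a self-contained sub-process and proving an order-preservation property there---would have to compare eating speeds across two different active sets evolving under two different profiles, and it is not clear this yields the required inequality without essentially re-proving the lemma.

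The paper takes a different route that sidesteps (B) entirely. Rather than tracking individual vanishing times, it argues by contradiction at the level of the active sets at the two critical times. Assuming $\tau_{jp}'>\tau_{jp}$, it shows that $\mathcal{A}(t^{c'})$ (under $\succ'$) cannot be contained in $\mathcal{A}(t^c)$ (under $\succ$), so some project $q$ is active at $t^{c'}$ under $\succ'$ but not at $t^c$ under $\succ$; this forces $\omega_q(1,\mathbf{e}^{\succ'})=l(q)<\omega_q(1,\mathbf{e}^{\succ})$, and by total-mass conservation there must then be some $y\in\mathcal{A}(t^c)$ with $\omega_y(t^{c'},\mathbf{e}^{\succ'})>l(y)$. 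The contradiction is obtained by comparing $\sum_{x}(l(x)-\omega_x(t^{c'},\cdot))_+$ across the two profiles, splitting the sum over $\mathcal{A}(t^{c'})$ and its complement and using that the complement terms can only move one way (via Lemma~\ref{lemma: PSLQ ws others continue eating}). This aggregate mass-balance argument never needs to locate $t_p'$ relative to $\tau_{jp}$, which is precisely the obstacle your decomposition creates.
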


\begin{proof}
Suppose $\exists j \in N \setminus \{i\}$ such that  $\tau_{jp}' > \tau_{jp}$. Note that since $i$ misreports, $\tau_{ip}\geq \tau_{ip}'$. Since $t^{c'} \leq t^c$ and $p \in \mathcal{A}(t^c)$, it follows that $p \in \mathcal{A}(t^{c'})$. Hence, since $t_p < 1$, students must be eventually shifted to $p$. If $\mathcal{A}(t^{c'}) \subseteq \mathcal{A}(t^c)$, $\tau_{jp}' > \tau_{jp}$ would contradict $\succ_{j}=\succ_{j}'$. So there must exist $q \in \mathcal{A}(t^{c'}) \setminus \mathcal{A}(t^c)$. Since this implies $\omega_q(1,\mathbf{e}^{\succ'})=l(q) < \omega_q(1,\mathbf{e}^{\succ})$, it follows that $\exists y \in \mathcal{A}(t^c)$ such that $\omega_y(t^{c'},\mathbf{e}^{\succ'}) > l(y) = \omega_y(t^c,\mathbf{e}^{\succ})$ as we allocate the total mass of $n$.

We claim $\forall z \in P \setminus \mathcal{A}(t^c), \omega_z(t,\mathbf{e}^{\succ'}) \geq \omega_z(t,\mathbf{e}^{\succ}), \forall t \in [0,t^{c'})$. Indeed, fix arbitrary $z \in P \setminus \mathcal{A}(t^c)$. Since the preferences of all agents other than $i$ do not change between the profiles and $i$ did not eat from $z$ at all under $\succ'$, it follows that $N_z(t,\mathbf{e}^{\succ}) \subseteq N_z(t,\mathbf{e}^{\succ'}), \forall t \in [0,\min\{t_z,t_z',t^{c'}\})$ by Lemma~\ref{lemma: PSLQ ws others continue eating}. Hence, $\omega_z(t,\mathbf{e}^{\succ'}) \geq \omega_z(t,\mathbf{e}^{\succ}), \forall t \in [0,t^{c'})$, as required.

This implies

\begin{equation*}
    \sum_{x \in P \setminus \mathcal{A}(t^c)} \omega_x(t^{c'},\mathbf{e}^{\succ'}) \geq \sum_{x \in P \setminus \mathcal{A}(t^c)} \omega_x(t^{c'},\mathbf{e}^{\succ})
\end{equation*}
and hence
\begin{equation}\label{eq: PSLQ ws lemma total to prime at most original}
    \sum_{x \in \mathcal{A}(t^c)} \omega_x(t^{c'},\mathbf{e}^{\succ'}) \leq \sum_{x \in \mathcal{A}(t^c)} \omega_x(t^{c'},\mathbf{e}^{\succ})
\end{equation}

\begin{comment}
Putting this together, the preceding paragraphs imply that the increase in $\omega_y$ to $\omega_y(t^{c'},\mathbf{e}^{\succ'}) = u(q)$ must be achieved by reallocating students within the set $\mathcal{A}(t^c)$. Since $\omega_y$ is non-decreasing in $t$ and since $y \in \mathcal{A}(t^c)$, we must have

\begin{equation*}
    \omega_y(t^{c'},\mathbf{e}^{\succ}) \leq \omega_y(t^c,\mathbf{e}^{\succ}) < l(y) \text{ and } \omega_y(t^{c'},\mathbf{e}^{\succ'}) \leq \omega_y(t^{c'},\mathbf{e}^{\succ}) + t^{c'}
\end{equation*}

But then

\begin{equation*}
    \omega_y(t^{c'},\mathbf{e}^{\succ'}) \leq \omega_y(t^{c'},\mathbf{e}^{\succ}) + t^{c'} < l(y) + t^{c'} \leq l(y) + 1 \leq u(y)
\end{equation*}
which contradicts $\omega_y(t^{c'},\mathbf{e}^{\succ'}) = u(y)$.
\end{comment}

By definition of $t^{c'}$ and since $t^c \geq t^{c'}$,
\begin{equation*}
    \sum_{x \in P} \Big( l(x) - \omega_x(t^{c'},\mathbf{e}^{\succ'}) \Big)_{+} = n(1-t^{c'}) \geq \sum_{x \in P} \Big( l(x) - \omega_x(t^{c'},\mathbf{e}^{\succ}) \Big)_{+}
\end{equation*}
Since
\begin{equation*}
    \sum_{x \in P \setminus \mathcal{A}(t^{c'})} \Big( l(x) - \omega_x(t^{c'},\mathbf{e}^{\succ'}) \Big)_{+} \leq \sum_{x \in P \setminus \mathcal{A}(t^{c'})} \Big( l(x) - \omega_x(t^{c'},\mathbf{e}^{\succ}) \Big)_{+}
\end{equation*}
we must have
\begin{equation*}
    \sum_{x \mathcal{A}(t^{c'})} \Big( l(x) - \omega_x(t^{c'},\mathbf{e}^{\succ'}) \Big)_{+} \geq \sum_{x \in \mathcal{A}(t^{c'})} \Big( l(x) - \omega_x(t^{c'},\mathbf{e}^{\succ}) \Big)_{+}
\end{equation*}

On the other hand, since we have shown Equation~\ref{eq: PSLQ ws lemma total to prime at most original} holds and since $\omega_y(t^{c'},\mathbf{e}^{\succ'})>l(y)$, we must have
\begin{equation*}
    \sum_{x \mathcal{A}(t^{c'})} \Big( l(x) - \omega_x(t^{c'},\mathbf{e}^{\succ'}) \Big)_{+} < \sum_{x \in \mathcal{A}(t^{c'})} \Big( l(x) - \omega_x(t^{c'},\mathbf{e}^{\succ}) \Big)_{+}
\end{equation*}
 a contradiction.
\end{proof}

\begin{Lemma}\label{lemma: PSLQ ws shifted after tight when alone cannot benefit}
Fix $p=\phi_i(P)$ and suppose $t^c < t_p < 1$. Moreover, suppose $i$ is the only student whose top choice is $p$ under $\succ$. If the assignment to $i$ under $\succ$ stochastically dominates that under $\succ'$, there is no $q \in P \setminus \{p\}$ such that $q \succ_i' p$.
\end{Lemma}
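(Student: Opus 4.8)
The plan is to argue by contradiction: suppose some $q\neq p$ satisfies $q\succ_i' p$, and let $q^{*}=\phi_i'(P)$ be the misreported favourite, so that $q^{*}\neq p$ and $q^{*}\succ_i' p$. I would then try to contradict the hypothesis $R_i\ sd(\succ_i)\ R_i'$ by exhibiting a prefix of $i$'s \emph{true} ranking on which $R_i'$ is strictly larger than $R_i$.

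First I would record the structural facts forced by the hypotheses. Since $p=\phi_i(P)$ is active on $[0,t_p)$, student $i$ eats $p$ throughout, so $r_{ip}=t_p$; and because $t^c<t_p$, project $p$ is shifted from, so by Equations~\ref{eq: once shifted the rest has lower quota 1} and~\ref{eq: once shifted the rest has lower quota 2} its total consumption equals $l(p)$. Here integrality of the quotas does the essential work: $l(p)\in\mathbb{N}$ together with $\omega_p(1,\mathbf{e}^{\succ})=l(p)\ge r_{ip}=t_p>t^c\ge 0$ forces $l(p)\ge 1>t_p$, so $i$ cannot fill $p$ alone and some student $j\neq i$ must also eat $p$ under $\succ$, contributing $l(p)-t_p>0$. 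By Lemma~\ref{lemma: PSLQ ws cosntraints tight before then lower anyway} the total consumption of $p$ is again exactly $l(p)$ under $\succ'$, and the top coordinate of the dominance hypothesis gives $r_{ip}'\le r_{ip}=t_p$. Finally, since $i$ is the unique student whose top choice is $p$ and now starts on $q^{*}\neq p$, no student eats $p$ at time $0$ under $\succ'$, so $\tau_p'>0=\tau_p$.

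I would then split on the position of the critical time. If $t^c\le t^{c'}$, Lemma~\ref{lemma: PSLQ ws if i 0 then under misreport will not decrease} shows that every project $i$ never touches under $\succ$ receives weakly more total consumption under $\succ'$; combined with $\omega_p$ being pinned at $l(p)$ and conservation of the total mass $n$, this constrains how the $1-r_{ip}'$ units eaten by $i$ are redistributed. If instead $t^{c'}<t^c$, then $t^{c'}\le t^c<t_p<1$ and Lemma~\ref{lemma: PSLQ ws shifted come earlier} gives $\tau_{jp}'\le\tau_{jp}$ for all $j\neq i$, so the co-eaters of $p$ (which exist by the previous paragraph) arrive no later under $\succ'$ even though $i$ arrives strictly later; since $\omega_p(1,\mathbf{e}^{\succ'})=l(p)$ is unchanged, these co-eaters must be pulled off other projects, freeing capacity that $i$, eating $q^{*}$ from time $0$, can absorb. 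In both branches Lemma~\ref{lemma: PSLQ ws others continue eating} is used to control which co-eaters of $p$ keep eating it across the two profiles.

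The hard part will be converting this reshuffling into a strict first-order stochastic dominance violation for $i$ in the true order $\succ_i$. The delicate point is that demoting $p$ only lowers $r_{ip}'$, which is consistent with truth-dominance on the top prefix, so the strict gain must surface deeper in the ranking: I must pin down a true-initial segment $U\ni p$ on which $i$'s misreported consumption strictly overtakes, and verify the prefix inequality is strict rather than merely tight. This is exactly where the guaranteed existence of co-eaters of $p$ — the integrality consequence isolated above — is indispensable, since it is precisely what separates the present configuration from the genuinely manipulable ``sole eater'' configuration arising under non-integer quotas, and it is where the monotonicity bookkeeping of the preceding lemmas has to be assembled most carefully.
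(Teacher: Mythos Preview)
There is a genuine gap, and it originates in a typo in the lemma statement that you have taken at face value. The hypothesis is meant to be that the \emph{misreported} allocation dominates the truthful one, i.e.\ $R_i'\ sd(\succ_i)\ R_i$; this is explicit in the paper's own proof (``Since we assume $R_i'\ sd(\succ_i)\ R_i$\dots'') and is the standing assumption where the lemma is invoked in the proof of Theorem~\ref{PSLQ weakly strategy-proof}. Under your reading $R_i\ sd(\succ_i)\ R_i'$ the claim is simply false: with $N=\{1,2,3\}$, $P=\{a,b,c\}$, $l(a)=l(c)=1$, $l(b)=0$, truthful preferences $a\succ_1 b\succ_1 c$ and $b\succ_{2,3} a\succ_{2,3} c$, one computes $t^c=1/2<t_a=2/3<1$, $R_1=(2/3,0,1/3)$; if $1$ misreports to $b\succ_1' a\succ_1' c$ then $R_1'=(1/3,1/3,1/3)$, and $R_1\ sd(\succ_1)\ R_1'$ holds even though $b\succ_1' a$. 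So your ``hard part'' cannot be completed: the deeper prefix on which $R_i'$ strictly overtakes $R_i$ need not exist.

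Under the intended hypothesis the paper's argument is short and aims at the \emph{top} coordinate, not a deeper one. It first proves outright that $t^{c'}\le t^c$ (so your case $t^c\le t^{c'}$ is vacuous and the case split is unnecessary); then Lemma~\ref{lemma: PSLQ ws shifted come earlier} gives $\tau_{jp}'\le\tau_{jp}$ for all $j\neq i$, while $R_i'\ sd(\succ_i)\ R_i$ forces $r_{ip}'\ge r_{ip}=t_p$ and hence $t_p'>t_p$ strictly (since $i$ does not start on $p$ under $\succ_i'$). The co-eaters of $p$ therefore eat strictly more of $p$ under $\succ'$, but Lemma~\ref{lemma: PSLQ ws cosntraints tight before then lower anyway} pins $\omega_p(1,\cdot)=l(p)$ under both profiles, so $r_{ip}'<r_{ip}$, contradicting dominance at the very first prefix. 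Your structural observation that integrality forces $l(p)\ge 1$ and hence guarantees co-eaters of $p$ is correct and in fact is what makes the ``strictly more'' step go through; the paper uses this implicitly. The rest of your setup (Lemma~\ref{lemma: PSLQ ws cosntraints tight before then lower anyway}, the role of $\tau_p'>0$) is also on target---you just need to flip the dominance direction and aim the contradiction at $r_{ip}$ rather than hunting for a violation further down the ranking.
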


\begin{proof}
Suppose, for the sake of reaching a contradiction, that there is such a project $q$. 

We claim this implies $t^{c'} \leq t^c$. Indeed, suppose $t^{c'} > t^c$. Hence,

\begin{equation*}
    n(1-t^c) = \sum_{x \in \mathcal{A}(t^c)} \Big( l(x) -  \omega_x(t^c,\mathbf{e}^{\succ}) \Big)_{+} > \sum_{x \in P} \Big( l(x) -  \omega_x(t^c,\mathbf{e}^{\succ'}) \Big)_{+}
\end{equation*}

\begin{equation*}
    = \sum_{x \in \mathcal{A}(t^c)} \Big( l(x) -  \omega_x(t^c,\mathbf{e}^{\succ'}) \Big)_{+} + \sum_{x \in P \setminus \mathcal{A}(t^c)} \Big( l(x) -  \omega_x(t^c,\mathbf{e}^{\succ'}) \Big)_{+}
\end{equation*}

Since $\sum_{x \in P \setminus \mathcal{A}(t^c)} \Big( l(x) -  \omega_x(t^c,\mathbf{e}^{\succ'}) \Big)_{+} \geq 0$, it follows that

\begin{equation*}
    \sum_{x \in \mathcal{A}(t^c)} \Big( l(x) -  \omega_x(t^c,\mathbf{e}^{\succ}) \Big)_{+} > \sum_{x \in \mathcal{A}(t^c)} \Big( l(x) -  \omega_x(t^c,\mathbf{e}^{\succ'}) \Big)_{+}
\end{equation*}

By definition of $\mathcal{A}(t^c)$, $\omega_x(t^c,\mathbf{e}^{\succ}) < l(x), \forall x \in \mathcal{A}(t^c)$. Hence, 

\begin{equation*}
    \sum_{x \in \mathcal{A}(t^c)} \omega_x(t^c,\mathbf{e}^{\succ'}) > \sum_{x \in \mathcal{A}(t^c)} \omega_x(t^c,\mathbf{e}^{\succ}) 
\end{equation*}

By Lemma~\ref{lemma: PSLQ ws if i 0 then under misreport will not ever decrease}, $\omega_q(t^c,\mathbf{e}^{\succ'}) \geq \omega_q(t^c,\mathbf{e}^{\succ}), \forall q \notin \mathcal{A}(t^c)$. But then 

\begin{equation*}
    n \cdot t^c = \sum_{x \in P} \omega_x(t^c,\mathbf{e}^{\succ'}) = \sum_{x \in P \setminus \mathcal{A}(t^c)} \omega_x(t^c,\mathbf{e}^{\succ'}) + \sum_{x \in \mathcal{A}(t^c)} \omega_x(t^c,\mathbf{e}^{\succ'})
\end{equation*}

\begin{equation*}
     > \sum_{x \in P \setminus \mathcal{A}(t^c)} \omega_x(t^c,\mathbf{e}^{\succ}) + \sum_{x \in \mathcal{A}(t^c)} \omega_x(t^c,\mathbf{e}^{\succ}) = \sum_{x \in P} \omega_x(t^c,\mathbf{e}^{\succ}) = n \cdot t^c
\end{equation*}

a contradiction. To sum up, we have just shown that $t^{c'} \leq t^c$.

Hence, since the preferences of the rest of the students do not change between the profiles, students who shift to $p$ must shift there under $\succ'$ at least as soon as under $\succ$ as shown in Lemma~\ref{lemma: PSLQ ws shifted come earlier}. It follows that

\begin{equation*}
    \sum_{j \in N \setminus \{i\}} \omega_{jp}(t_p,\mathbf{e}^{\succ'}) \geq \sum_{j \in N \setminus \{i\}} \omega_{jp}(t_p,\mathbf{e}^{\succ})
\end{equation*}

Since we assume $R_i' \ sd(\succ_i) \ R_i$, it must follow that $t_p' > t_p$. Together with the fact that $\omega_{ip}(t,\mathbf{e})$ is non-decreasing in $t$, we get

\begin{equation*}
    \sum_{j \in N \setminus \{i\}} \omega_{jp}(1,\mathbf{e}^{\succ'}) = \sum_{j \in N \setminus \{i\}} \omega_{jp}(t_p',\mathbf{e}^{\succ'}) > \sum_{j \in N \setminus \{i\}} \omega_{jp}(t_p,\mathbf{e}^{\succ}) = \sum_{j \in N \setminus \{i\}} \omega_{jp}(1,\mathbf{e}^{\succ})
\end{equation*}

By Lemma~\ref{lemma: PSLQ ws cosntraints tight before then lower anyway}, $\omega_p(1,\mathbf{e}^{\succ}) = \omega_p(1,\mathbf{e}^{\succ'}) = l(p)$. This implies that the total share of $l(p)$ eaten by $i$ must decrease, i.e. $r_{ip}'<r_{ip}$, a contradiction to $R_i' \ sd(\succ_i) \ R_i$.

\end{proof}

\begin{Lemma}\label{lemma: PSLQ ws other projects cannot have less}
Fix $p=\phi_i(P)$ and suppose $t^c = t_p$. Let $q \in P$ such that $r_{iq} = 0$. Suppose the assignment to $i$ under $\succ'$ stochastically dominates the assignment under $\succ$. Then, $\omega_q(1,\mathbf{e}^{\succ}) \leq \omega_q(1,\mathbf{e}^{\succ'})$.
\end{Lemma}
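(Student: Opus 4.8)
The plan is to reduce the statement to the single inequality $t^c \le t^{c'}$ between the two critical times, and then invoke the monotonicity already established in Lemma~\ref{lemma: PSLQ ws if i 0 then under misreport will not decrease}. First I would record two observations. Since $p=\phi_i(P)$, under $\succ$ student $i$ eats $p$ on the whole interval $[0,t_p)$, so that $r_{ip}=t_p=t^c$ and $\omega_{iq}(t,\mathbf{e}^{\succ})=0$ for every $q\neq p$ and every $t\le t^c$; in particular $r_{iq}=0$ forces $q\neq p$. Because $R_i'\ sd(\succ_i)\ R_i$ and $p$ is the $\succ_i$-maximal project, the leading coordinate of the stochastic-dominance inequality gives $r_{ip}'\ge r_{ip}=t^c$, and since $i$ can consume $p$ only while $p$ is available we deduce $t_p'\ge r_{ip}'\ge t^c$.

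Given these, it suffices to prove $t^c\le t^{c'}$: once this holds, $q\neq p$ together with $r_{iq}=0$ puts us exactly in the hypotheses of Lemma~\ref{lemma: PSLQ ws if i 0 then under misreport will not decrease}, which yields $\omega_q(1,\mathbf{e}^{\succ'})\ge\omega_q(1,\mathbf{e}^{\succ})$, as required. (The role of $t^c\le t^{c'}$ is precisely to preclude the one dangerous configuration, namely a project $q$ exhausted under $\succ$ but merely shifted from, hence settling at $l(q)$, under $\succ'$.)

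To establish $t^c\le t^{c'}$ I would argue by contradiction, assuming $t^{c'}<t^c$. By the defining infima of the two critical times, at $t^{c'}$ the feasibility constraint is tight under $\succ'$ but still slack under $\succ$, so $\sum_{x\in P}(l(x)-\omega_x(t^{c'},\mathbf{e}^{\succ'}))_+ > \sum_{x\in P}(l(x)-\omega_x(t^{c'},\mathbf{e}^{\succ}))_+$. The plan is to contradict this by comparing the two shortfall sums term by term. For every $x\neq p$ I would use Lemma~\ref{lemma: PSLQ ws others continue eating} (applicable because $i$ eats only $p$ before $t^c$, so $\omega_{ix}(\cdot,\mathbf{e}^{\succ})\equiv 0$) to show that $x$ contributes non-positively to the difference: while $x$ is still being eaten under both profiles one has $\omega_x(t^{c'},\mathbf{e}^{\succ'})\ge\omega_x(t^{c'},\mathbf{e}^{\succ})$, and the cases in which $x$ has already closed under one profile are settled by feasibility together with the fact that a project exhausted before the critical time has total $u(x)\ge l(x)$. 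Consequently the whole excess must be carried by $p$ alone, and applying Lemma~\ref{lemma: PSLQ ws others continue eating} to $p$ (legitimate since $t^{c'}<\min\{t_p,t_p'\}$, both of which are at least $t^c$) bounds $p$'s contribution by $t^{c'}-\omega_{ip}(t^{c'},\mathbf{e}^{\succ'})$; the strict inequality therefore forces $\omega_{ip}(t^{c'},\mathbf{e}^{\succ'})<t^{c'}$, i.e. under the misreport $i$ diverts a positive amount of eating time away from $p$ before the critical time.

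The hard part is converting this diversion into an actual contradiction. The difficulty is precisely that $p$ is the one project $i$ consumes, so the term-by-term comparison cannot control $p$ without further input: I must rule out that $i$ profitably spends the diverted time on projects she ranks above $p$ under $\succ'$ in a way that genuinely pulls the critical time forward. I expect to close this by combining the full stochastic-dominance hypothesis (not merely its leading coordinate, giving a global accounting of $i$'s allocation against the conserved total mass $n$) with the structural fact recorded after Equations~\ref{eq: once shifted the rest has lower quota 1} and~\ref{eq: once shifted the rest has lower quota 2}, namely that every project shifted from after the critical time settles exactly at its lower quota. Feeding $t_p'\ge t^c$ and the resulting $\omega_p(1,\mathbf{e}^{\succ'})=l(p)$ back into the mass balance should contradict $\omega_{ip}(t^{c'},\mathbf{e}^{\succ'})<t^{c'}$, refute $t^{c'}<t^c$, and thereby complete the reduction.
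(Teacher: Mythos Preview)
Your reduction is correct and identical to the paper's: show $t^c\le t^{c'}$, then invoke Lemma~\ref{lemma: PSLQ ws if i 0 then under misreport will not decrease}. The observations $r_{ip}=t_p=t^c$, $q\ne p$, and $t_p'\ge t^c$ are also exactly what the paper uses.

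Where you diverge is in the contradiction argument for $t^{c'}<t^c$, and this is where the proposal has a genuine gap. Your shortfall comparison is sound up to the point where you extract $\omega_{ip}(t^{c'},\mathbf e^{\succ'})<t^{c'}$, but you then stop and admit that ``the hard part is converting this diversion into an actual contradiction.'' The sketch you offer---feed $\omega_p(1,\mathbf e^{\succ'})=l(p)$ into a mass-balance argument---does not close. Mass balance over all projects would require knowing $\omega_x(1,\mathbf e^{\succ'})\ge\omega_x(1,\mathbf e^{\succ})$ for every $x\ne p$, which is precisely the family of inequalities the lemma is meant to establish; invoking it here is circular. Nothing in your outline explains why $i$ cannot profitably divert eating time away from $p$ on $[0,t^{c'})$ and still recover $r_{ip}'\ge r_{ip}$ later, which is exactly the configuration you need to exclude.

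The paper's route is different and avoids this difficulty entirely. It does not analyse the shortfall sums at $t^{c'}$; instead it works directly with project $p$. From Lemma~\ref{lemma: PSLQ ws others continue eating} one gets $N_p(t,\mathbf e^{\succ})\setminus\{i\}\subseteq N_p(t,\mathbf e^{\succ'})\setminus\{i\}$ on $[0,t_p)$; combining this with $r_{ip}'\ge r_{ip}$ and $\omega_p(1,\mathbf e^{\succ'})=l(p)=\omega_p(1,\mathbf e^{\succ})$ forces $N_p(t,\mathbf e^{\succ})\subseteq N_p(t,\mathbf e^{\succ'})$ on the whole interval (now including $i$). Lemma~\ref{lemma: PSLQ ws helpful can't have more} then gives $t_p'=t_p$, so the two runs coincide on $[0,t_p)$, which immediately contradicts $t^{c'}<t^c=t_p$. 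The missing ingredient in your proposal is precisely this application of Lemma~\ref{lemma: PSLQ ws helpful can't have more} to $p$ itself; once you have it, the ``diversion'' inequality you derived becomes unnecessary.
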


\begin{proof}
By construction and since $R_i' \ sd(\succ_i) \ R_i$, we must have $t_{p}' \geq t_{p}$. Hence, $t^{c'} \geq t_{p}' \geq t_{p} = t^c$. Indeed, this is straightforward if $i$ is the only student who ever eats $p$ under $\succ$. If student $i$ is not the only one ever eating $p$, that is $\exists j \in N \setminus \{i\}, t \in [0,t_{p})$ such that $j \in N_{p}(t,\mathbf{e}^{\succ})$, it follows by Lemma~\ref{lemma: PSLQ ws others continue eating} that $N_p(t,\mathbf{e}^{\succ}) \setminus \{i\} \subseteq N_p(t,\mathbf{e}^{\succ}) \setminus \{i\}, \forall t \in [0,t_p)$. Since $R_i' \ sd(\succ_i) \ R_i$, and $\omega_p$ is non-decreasing in $t$, it follows that $\omega_p(t_p',\mathbf{e}^{\succ'}) \geq \omega_p(t_p,\mathbf{e}^{\succ}) \geq l(p)$. Suppose $t^{c'} < t^c$; this implies $\omega_p(t_p',\mathbf{e}^{\succ'}) = \omega_p(t_p,\mathbf{e}^{\succ}) = l(p)$. Since the preferences of the rest of the agents do not change between the profiles, for $R_i' \ sd(\succ_i) \ R_i$, we require $N_p(t,\mathbf{e}^{\succ}) \subseteq N_p(t,\mathbf{e}^{\succ'}), \forall t \in [0,t_p)$. By Lemma~\ref{lemma: PSLQ ws helpful can't have more}, it follows that $t_p = t_p'$ and hence the eating algorithms must coincide on $[0,t_p)$, which contradicts $t^{c'} < t^c$.

The result now follows by Lemma~\ref{lemma: PSLQ ws if i 0 then under misreport will not ever decrease}.

\end{proof}

We turn to the proof of the theorem.

\subsubsection*{PSLQ is weakly strategy-proof}

\begin{proof}
Fix student $i \in N$ and suppose $i$ misreports to $\succ_i'$. For notational ease, label $P$ such that $p_1 \succ_i p_2 \succ_i \dots \succ_i p_k$. We write $R=PSLQ(\succ)$ and $R'=PSLQ(\succ')$. Suppose $R_i' \ sd(\succ_i) \ R_i$. Our goal is to show $R_i=R_i'$. 

\begin{comment}
We use a proof by contradiction. In particular, we assume $R_i \neq R_i'$. Then, by Lemma~\ref{lemma: proof PSLQ ws 1}, there must exist some $1 \leq s \leq k$ such that $r_{ip_s}'>r_{ip_s}$ and $r_{ip_l} = r_{ip_l}', \forall l < s$. We show that no such $s$ can exist.
\end{comment}

If $r_{ip_1} = 1$, we are done. Suppose $r_{ip_1}<1$. Note that under $\succ$, student $i$ is eating $p_1$ throughout $[0,t_{p_1})$. Thus, $r_{ip_1} = t_{p_1}$. On the other hand, student $i$ is eating $p_1$ on a subset of $[0,t_{p_1}')$. Hence, $r_{ip_1}' \geq r_{ip_1}$ leads to $t_{p_1} \leq t_{p_1}'$.

Suppose $t_{p_1}<t^c$, so we must have $\omega_{p_1}(1,\mathbf{e}^{\succ}) = u(p_1)$, i.e. when students cease to eat from $p_1$, it is because they hit the upper quota. Note that this implies that student $i$ is not the only one who ever eats from $p_1$, as $u(p_1) \geq 1$ and $r_{ip_1}<1$ by assumption. By Lemma~\ref{lemma: PSLQ ws others continue eating}, it follows that $N_{p_1}(t,\mathbf{e}^{\succ}) \setminus \{i\} \subseteq N_{p_1}(t,\mathbf{e}^{\succ}) \setminus \{i\}, \forall t \in [0,t_{p_1})$. Since $R_i' \ sd(\succ_i) \ R_i$, and $\omega_{p_1}$ is non-decreasing in $t$, it follows that $\omega_{p_1}(t_{p_1}',\mathbf{e}^{\succ'}) \geq \omega_{p_1}(t_{p_1},\mathbf{e}^{\succ}) = u(p_1) \Rightarrow \omega_{p_1}(t_{p_1}',\mathbf{e}^{\succ'}) = u(p_1)$. For $R_i' \ sd(\succ_i) \ R_i$, we therefore require $N_{p_1}(t,\mathbf{e}^{\succ}) \subseteq N_{p_1}(t,\mathbf{e}^{\succ'}), \forall t \in [0,t_{p_1})$. By Lemma~\ref{lemma: PSLQ ws helpful can't have more}, it follows that $t_{p_1} = t_{p_1}'$ and hence $N_{p_1}(t,\mathbf{e}^{\succ'}) = N_{p_1}(t,\mathbf{e}^{\succ})$ on the interval $[0,t_{p_1})$ to ensure feasibility. Therefore, $r_{ip_1} = r_{ip_1}'$ and $PSLQ(\succ)$ and $PSLQ(\succ')$ must coincide on $[0,t_{p_1})$.

This argument can be repeated as we iterate through the sequence $\{p_1,\dots,p_k\}$, as long as students are not shifted from the particular project. We now turn our attention to this case. Let $1\leq l \leq k$ denote the index of the project such that $\omega_{p_j}(1,\mathbf{e}^{\succ}) = u(p_j), \forall j < l$ and $p_l$ is shifted from. By the argument from the preceding paragraph, we must have that the two eating algorithms, for $\succ$ and $\succ'$, coincide on $[0,t_{p_{l-1}})$. We can interpret this as the algorithms effectively restart at $t_{p_{l-1}}$ and proceed on the following sub-problem: the new epoch has length $1-t_{p_{l-1}}$, fully eaten projects are removed and the upper and lower quota of any other project $q$ is lowered by $\omega_q(t_{p_{l-1}},\mathbf{e}^{\succ})$. Since the strategic considerations are the same in the original problem and the sub-problem, we may without loss of generality assume that students are shifted from $p_1$ at $t_{p_1}$.

We distinguish two cases: $t^c < t_{p_1}$ and $t^c = t_{p_1}$. Suppose the former holds. Then by Lemma~\ref{lemma: PSLQ ws cosntraints tight before then lower anyway}, we have $\omega_{p_1}(1,\mathbf{e}^{\succ'})=\omega_{p_1}(1,\mathbf{e}^{\succ}) = l(p_1)$. If there is another student $j \in N \setminus \{i\}$ such that $p_1$ is $j$'s top choice, it follows by a similar argument used two paragraphs above, using Lemmas~\ref{lemma: PSLQ ws helpful can't have more} and ~\ref{lemma: PSLQ ws others continue eating}, that $t_{p_1} = t_{p_1}'$ and hence $N_{p_1}(t,\mathbf{e}^{\succ'}) = N_{p_1}(t,\mathbf{e}^{\succ})$ on the interval $[0,t_{p_1})$. Therefore, $r_{ip_1} = r_{ip_1}'$. Otherwise, suppose $i$ is the only student whose top choice is $p_1$ under $\succ$. By Lemma~\ref{lemma: PSLQ ws shifted after tight when alone cannot benefit}, it follows that $i$ eats $p_1$ throughout $[0,t_{p_1})$ under $\succ'$ and $r_{ip_1}'=r_{ip_1}$. In either case, the eating algorithms coincide on $[0,t_{p_1})$. 

Now suppose $t^c = t_{p_1}$. We claim that the total amount eaten from $p_1$ under $\succ'$ is at most the amount under $\succ$, i.e. $\omega_{p_1}(1,\mathbf{e}^{\succ'}) \leq \omega_{p_1}(1,\mathbf{e}^{\succ})$. Indeed, we have

\begin{equation}\label{eq: PSLQ ws other projects no less}
    \omega_q(1,\mathbf{e}^{\succ}) \leq \omega_q(1,\mathbf{e}^{\succ'}), q \in P \setminus \{p_1\}
\end{equation}

Let us write $\mathcal{A}$ shorthand for $\mathcal{A}(t_{p_1})$, the set of active projects at $t_{p_1}$.
It is clear that $l(q) = \omega_q(1,\mathbf{e}^{\succ}) \leq \omega_q(1,\mathbf{e}^{\succ'}), \forall q \in \mathcal{A}$ by feasibility. Consider any $q \notin \mathcal{A}$ such that $q \neq p_1$. Then by Lemma~\ref{lemma: PSLQ ws shifted structure vector}, we have $r_{iq}=0$. Applying Lemma~\ref{lemma: PSLQ ws other projects cannot have less}, we obtain the inequality.

Next, we distinguish two cases. First, suppose student $i$ is not the only one ever eating $p_1$, that is $\exists j \in N \setminus \{i\}, t \in [0,t_{p_1})$ such that $j \in N_{p_1}(t,\mathbf{e}^{\succ})$; let us denote the set of such students by $J$. By Lemma~\ref{lemma: PSLQ ws others continue eating} it follows that $i \neq j \in N_{p_1}(t,\mathbf{e}^{\succ}) \Rightarrow j \in N_{p_1}(t,\mathbf{e}^{\succ'}), \forall t \in [0,t_{p_1})$. Hence, since $t_{p_1} \leq t_{p_1}'$ by assumption, we also have $r_{jp_1}' \geq r_{jp_1}, \forall j \in J$, with the inequality being strict if $t_{p_1} < t_{p_1}'$. So suppose, for the sake of reaching a contradiction, that this is indeed the case. But since the total assignment of $p_1$ is under $\succ'$ at most the total assignment under $\succ$, i.e. $\omega_{p_1}(1,\mathbf{e}^{\succ'}) \leq \omega_{p_1}(1,\mathbf{e}^{\succ})$, this implies that $r_{ip_1}' < r_{ip_1}$, a contradiction. We therefore conclude that $t_{p_1} = t_{p_1}'$ and hence $N_{p_1}(t,\mathbf{e}^{\succ'}) = N_{p_1}(t,\mathbf{e}^{\succ})$ on the interval $[0,t_{p_1})$. Further, it follows that $r_{ip_1} = r_{ip_1}'$ and $PSLQ(\succ)$ and $PSLQ(\succ')$ must coincide on $[0,t_{p_1})$. 

Since the algorithms coincide on $[0,t_{p_1})$ and since $t^c \leq t_{p_1}$, we must have $t^{c'}=t^c$, and so throughout the rest of the execution of the algorithm

\begin{equation*}
    n(1-t) = \sum_{p \in P} \Big( l(p) - \omega_p(t,\mathbf{e}^{\succ}) \Big)_+, \forall t \in [t_{p_1},1]
\end{equation*}

In other words, the algorithm on this sub-problem behaves exactly like the PS mechanism. As \cite{bogomolnaia2001new} proved, this mechanism is weakly strategy-proof, so we may conclude that $R_i = R_i'$.

What remains is the case when $i$ is the only student who ever eats from $p_1$ under $\succ$. Note that since $r_{ip_1} = \omega_{p_1}(1,\mathbf{e}^{\succ}) < 1$, we must have $l(p_1) = 0$. Moreover, $r_{ip_1}' \geq r_{ip_1}$ implies $\omega_{p_1}(1,\mathbf{e}^{\succ'}) \geq \omega_{p_1}(1,\mathbf{e}^{\succ})$. However, recall we already showed $\omega_{p_1}(1,\mathbf{e}^{\succ'}) \leq \omega_{p_1}(1,\mathbf{e}^{\succ})$, so we conclude $\omega_{p_1}(1,\mathbf{e}^{\succ'}) = \omega_{p_1}(1,\mathbf{e}^{\succ})$ and $r_{ip_1}' = r_{ip_1}$. Moreover, note that if the top choice of $i$ under $\succ_i$ and $\succ_i'$ is $p_1$, the eating algorithms again coincide on $[0,t_{p_1})$ and the result follows by the same argument as in the paragraph above. In what follows, suppose this is not the case.

The preceding paragraph also implies that there cannot exist another project $q$ such that $l(q) \leq \omega_q(t_{p_1},\mathbf{e}^{\succ}) < u(q)$ (this then implies $q \notin \mathcal{A}$). This follows from Equation~\ref{eq: PSLQ ws other projects no less} and the fact that no student would be required to shift under $\succ'$ before $t_{p_1}' > t_{p_1}$, where the inequality is strict because the top choice under $\succ_i'$ is not $p_1$ by assumption. If such a $q$ existed, since $\omega_q$ is non-decreasing in $t$ and $t^{c'} \geq t_{p_1}' > t_{p_1} = t^c \Rightarrow t_q' > t_q$, we would have
\begin{equation*}
\omega_q(1,\mathbf{e}^{\succ'}) = \omega_q(t_q',\mathbf{e}^{\succ'}) > \omega_q(t_q,\mathbf{e}^{\succ'})  \geq \omega_q(t_q,\mathbf{e}^{\succ}) = \omega_q(1,\mathbf{e}^{\succ})    
\end{equation*}
But then, since we assume both assignments are feasible,
\begin{equation*}
    n = \sum_{p \in P} \omega_p(1,\mathbf{e}^{\succ}) < \sum_{p \in P} \omega_p(1,\mathbf{e}^{\succ'}) = n
\end{equation*}
a contradiction.

To sum up, we have shown that if $i$ is the only student who ever eats from $p_1$ under $\succ$, the total allocation $\omega_q(1,\mathbf{e}^{\succ})$ of any project $q \in P\setminus \{p_1\}$ is either $l(q)$, if students are shifted to $q$ at some point during the execution of the algorithm, or $u(q)$, if the project is fully eaten.

This leads us to the following conclusion. We have shown that under no other configuration of parameters than in the preceding paragraph, we can have $R_i \neq R_i'$. However, since $n \in \mathbb{N}$, $0 < \omega_{p_1}(1,\mathbf{e}^{\succ}) < 1$, and the lower and upper quotas are integers, this configuration is impossible.

Note that our assumption above where we assumed that, without loss of generality, $p_1$ is the first project $p$  $i$ is eating such that students shift from $p$ is indeed without loss of generality and does not invalidate our finding here. If we took another project $p_l$ with $l>1$ instead, our analysis implies $\omega_{p_s}(1,\mathbf{e}^{\succ})$ is either $l(p_s)$ or $u(p_s)$, for all $s=l+1,\dots,k$ in the sub-problem. These are not necessarily integers anymore, but recall we subtracted the allocations of such projects obtained until $t_{p_{l-1}}$ from their quotas. Therefore, their total allocation in the original problem is either $l(p_s)$ or $u(p_s)$; in particular an integer.

\end{proof}

\subsection{Strategic manipulation under non-integer quotas}\label{sec: Strategic manipulation under non-integer quotas}

\begin{Ex}\label{ex: PSLQ not ws}
Let $N=[2]$, $P=\{a,b\}$ and suppose the projects and students have the following quotas and preferences, respectively.
\[
\begin{tabular}{ c|ccc }
& a & b \\
  \hline
$u(\cdot)$ & $\infty$ & 2/3 & 2/3\\
$l(\cdot)$ & 0 & 2/3 & 2/3 \\
\end{tabular}
\qquad
\succ \ = 
    \begin{tabular}{cc}
    \hline
    1 & 2 \\
    \hline
    a & b \\
    b & c \\
    c & a \\
    \hline
    \end{tabular}
\]

The resulting random assignment matrix is $R$. Now suppose 1 misreports to $\succ_1': b \succ_1' a \succ_1' c$. The resulting random assignment matrix is $R'$ and we obtain $R_1' \ sd(\succ_1) \ R_1$ while $R_1 \neq R_1'$.

\[
R=
  \begin{blockarray}{*{3}{c} l}
    \begin{block}{*{3}{>{\footnotesize}c<{}} l}
      a & b & c \\
    \end{block}
    \begin{block}{[*{3}{c}]>{\footnotesize}l<{}}
      2/3 & 0 & 1/3 & 1\\
      0 & 2/3 & 1/3 & 2\\
    \end{block}
  \end{blockarray}
\qquad
R'=
  \begin{blockarray}{*{3}{c} l}
    \begin{block}{*{3}{>{\footnotesize}c<{}} l}
      a & b & c \\
    \end{block}
    \begin{block}{[*{3}{c}]>{\footnotesize}l<{}}
      2/3 &  1/3 & 0 & 1\\
      0 & 1/3 & 2/3 & 2\\
    \end{block}
  \end{blockarray}
\]

\end{Ex}

\section{Proposition~\ref{impossibility result}}\label{proof impossibility result}

\begin{proof}
Suppose there is a random assignment mechanism which is ordinally efficient, envy-free and weakly-strategyproof. Let us denote this mechanism by $R=R(\succ)$. We aim to obtain a contradiction by considering the setup in Example~\ref{ex: PSLQ not ws} and the restriction the desired properties impose on the set of random assignment matrices at different preference profiles. 

Under $\succ$, ordinal efficiency implies that $r_{2a}=0$ and $r_{1a}=2/3$. Indeed, suppose $r_{1b}+r_{1c} > 1/3$ and that $r_{1b}>0$ (the case $r_{1c}>0$ is similar). Then students $1$ and $2$ and projects $a$ and $b$ form a wasteful chain, which implies $R$ is not ordinally efficient by Lemma~\ref{lemma: characterisation ordinal efficiency}, a contradiction. Therefore, $r_{1a}=2/3$, $r_{2a}=0$. Let us parameterise $r_{1b}=t$ and $r_{1c}=1/3-t$ for $t \in [0,1/3]$. Hence, quotas on $b$ and $c$ and feasibility of $R$ imply $r_{2b}=2/3-t$ and $r_{2c}=1/3+t$. Let us denote the random assignment matrix for a particular $t \in [0,1/3]$ by $R^t$, that is

\[
R^t=
  \begin{blockarray}{*{3}{c} l}
    \begin{block}{*{3}{>{\footnotesize}c<{}} l}
      a & b & c \\
    \end{block}
    \begin{block}{[*{3}{c}]>{\footnotesize}l<{}}
      2/3 & t & 1-t & 1\\
      0 & 2/3-t & 1/3+t & 2\\
    \end{block}
  \end{blockarray}
\]

It is straightforward to check that for any $t \in [0,1/3]$, $R_t$ is ordinally efficient and envy-free.

Suppose student 1 misreports to $\succ_1': b \succ_1' a \succ_1' c$, while $\succ_2$ remains unchanged; let us denote this profile by $\succ'=(\succ_1',\succ_2)$ and $R'=R(\succ')$. By envy-freeness, $r_{1b}'=r_{2b}'=1/3$ and by ordinal efficiency, $r_{1a}=2/3$ (using a similar argument to the paragraph above). It follows the following is the unique ordinally efficient and envy-free random assignment for the preference profile $\succ'$

\[
R'=
  \begin{blockarray}{*{3}{c} l}
    \begin{block}{*{3}{>{\footnotesize}c<{}} l}
      a & b & c \\
    \end{block}
    \begin{block}{[*{3}{c}]>{\footnotesize}l<{}}
      2/3 & 1/3 & 0 & 1\\
      0 & 1/3 & 2/3 & 2\\
    \end{block}
  \end{blockarray}
\]

Now suppose student 2 misreports to $\succ_2': b \succ_2' a \succ_2' c$, while $\succ_1$ remains unchanged; let us denote this profile by $\succ''=(\succ_1,\succ_2')$ and $R''=R(\succ'')$. Envy-freeness implies $r_{1a}''+r_{1b}''=r_{2a}''+r_{2b}''$, which in turn implies $r_{1c}''=r_{2c}''=1/3$. Ordinally efficiency implies $r_{1a}''=r_{2b}''=2/3$. Therefore, the following is the unique ordinally efficient and envy-free random assignment for $\succ''$

\[
R''=
  \begin{blockarray}{*{3}{c} l}
    \begin{block}{*{3}{>{\footnotesize}c<{}} l}
      a & b & c \\
    \end{block}
    \begin{block}{[*{3}{c}]>{\footnotesize}l<{}}
      2/3 & 0 & 1/3 & 1\\
      0 & 2/3 & 1/3 & 2\\
    \end{block}
  \end{blockarray}
\]

Observe that for any $t \in [0,1/3]$, $R_1' \ sd(\succ_1) \ R_1^t$ and $R_2'' \ sd(\succ_2) \ R_2^t$. Since we require $R$ to be weakly strategy-proof, we need to choose a suitable $t^* \in [0,1/3]$ so that $R_1'=R_1^{t^*}=R_1$ and $R_2''=R_2^{t^*}=R_2$. But

\begin{equation*}
    R_1' \ sd(\succ_1) \ R_1^{t^*} \text{ and } R_1'=R_1^{t^*} \Rightarrow t^* = 1/3
\end{equation*}

and 

\begin{equation*}
    R_2'' \ sd(\succ_2) \ R_2^{t^*} \text{ and } R_2''=R_2^{t^*} \Rightarrow t^* = 0
\end{equation*}

a contradiction. Therefore, we have found a market for which no random assignment mechanism is ordinally efficient, envy-free and weakly strategy-proof. The result follows.

\end{proof}

\end{document}